\documentclass[acmsmall]{acmart}

\usepackage{booktabs} 
\usepackage[utf8]{inputenc}
\usepackage{url}
\usepackage{amsthm}
\usepackage{amssymb}
\usepackage{algorithmicx, algorithm}
\usepackage[noend]{algpseudocode}
\usepackage{verbatim}
\usepackage{graphicx}
\usepackage{amssymb}
\usepackage{tikz}
\usepackage{listings}
\usepackage{booktabs}
\usepackage{array}
\usepackage{chngcntr}
\usepackage{url}
\usepackage{mathtools}
\usepackage{xcolor}
\usepackage{nomencl}
\usepackage{siunitx,xinttools,xintfrac}
\usepackage{lipsum}
\usepackage{subfig}
\setcopyright{rightsretained}









\begin{document}

\setcopyright{acmcopyright}
\acmJournal{POMACS}
\acmYear{2019} \acmVolume{3} \acmNumber{1} \acmArticle{1} \acmMonth{1} \acmPrice{15.00}\acmDOI{10.1145/3311072}

\title[QoS Degradation Assessment]{Network Resilience Assessment via QoS Degradation Metrics: An Algorithmic Approach}

\author{Lan N. Nguyen}
\affiliation{%
  \institution{University of Florida}
}
\email{lan.nguyen@ufl.edu}

\author{My T. Thai}
\authornote{My T. Thai is the corresponding author.}
\affiliation{%
  \institution{University of Florida}
}
\email{mythai@cise.ufl.edu}







\begin{abstract}

This paper focuses on network resilience to perturbation of edge weight. Other than connectivity, many network applications nowadays rely upon some measure of network distance between a pair of connected nodes. In these systems, a metric related to network functionality is associated to each edge. A pair of nodes only being functional if the weighted, shortest-path distance between the pair is below a given threshold \texttt{T}. Consequently, a natural question is on which degree the change of edge weights can damage the network functionality? With this motivation, we study a new problem, \textit{Quality of Service Degradation}: given a set of pairs, find a minimum budget to increase the edge weights which ensures the distance between each pair exceeds $\mathtt{T}$. We introduce four algorithms with theoretical performance guarantees for this problem. Each of them has its own strength in trade-off between effectiveness and running time, which are illustrated both in theory and comprehensive experimental evaluation. 
\end{abstract}

%
%



\maketitle

\newtheorem{mydef}{Definition}
\newtheorem{Corollary}{Corollary}
\newtheorem{Observation}{Observation}
\algdef{SE}[DOWHILE]{Do}{doWhile}{\algorithmicdo}[1]{\algorithmicwhile\ #1}%
\DeclarePairedDelimiter\ceil{\lceil}{\rceil}
\DeclarePairedDelimiter\floor{\lfloor}{\rfloor}
\newenvironment{skproof}{%
  \renewcommand{\proofname}{Proof overview}\proof}{\endproof}
\section{Introduction}

Graph connectivity is considered as an important metric on measuring the functionality of a network. Typically, the connectivity-related problems usually ask for the minimum-size set of components (nodes or edges) whose removal disconnects the target set of nodes. This consideration has led to the investigation of many forms of cutting problems in a network: \textit{e.g} the minimum cut problem, the minimum multicut problem, the sparest cut problem \cite{vazirani2013approximation} and the most recent work, the Length-Bounded Multicut (\texttt{LB-MULTICUT}) problem \cite{kuhnle2018network}. In addition, various measures based on connectivity have formed the framework for assessment of network resilience to external attacks \cite{grubesic2008comparative,sen2009region,shen2013discovery,shen2012adaptive,nguyen2013detecting,dinh2014bound,dinh2015network,dinh2015assessing, pan2018vulnerability, dinh2010approximation, mishra2014cascading}. 

However, many network applications now consider other factors when determining a network functionality in addition to connectivity. For example, in Bitcoin network, to guarantee synchronization, not only the network connectivity is required but a network is also configured in order to ensure the broadcasting time of transaction messages under several seconds \cite{apostolaki2017hijacking}. As another example, consider a time-sensitive delivery on a road network, where edge weights represents the travel time between destinations. Connectivity between a source and a destination is insufficient when a guarantee on the delivery time is required.

Therefore, a natural question is whether a tech-savvy attacker can damage the network functionality without impacting the connectivity? Under various forms, this kind of attacks actually is common, yet stealthy. For example, in the I-SIG system, real-time vehicle trajectory data transmitted using the CV technology are used to intelligently control the duration and sequence of traffic signals \cite{CvAttack,CVpilot,CVpilot2,checkoway2011comprehensive,koscher2010experimental,mazloom2016security,chen2018exposing}. An adversary, therefore, can compromise multiple vehicles and send malicious messages with false data (e.g., speed and location) to the I-SIG system to impact the traffic control decisions. As reported by previous works, it has been shown that even one single attack vehicle can manipulate the intelligent traffic control algorithm in the I-SIG system and cause severe traffic jams \cite{CvAttack,chen2018exposing}. To understand the severity of such attack, it is necessary to study on which roads the attackers can target to and what is the minimum number of vehicles the attackers have to compromise to cause large-scale congestions, e.g. traveling from two certain locations takes several hours longer than usual. Such attack can be for political or financial purposes, e.g. blocking traffics of business competitors \cite{CvAttack}.  



As another example, in Bitcoin network or any Blockchain-based applications, an attacker can target to damage the consensus between copies of public ledger of major miners by delaying block propagation between them. Recent works \cite{apostolaki2017hijacking} have shown that after receiving request for a block information from another node, a Bitcoin node can have up to 20 minutes to respond. An attacker, therefore, can flood the Bitcoin nodes with too many requests or ``dust'' messages to handle, thus delay their block delivery. By flooding multiple nodes, the attacker can disrupt miners to reach consensus on a certain state of Blockchain. The impact of this attack varies relying upon the victims. If the victim is a merchant, it is vulnerable to double spending attacks \cite{DoubleSpend}. If the victim is a miner, the attack wastes its computational power \cite{pinzon2016double}. If the victim is a regular node, it will have an outdated view of the Blockchain, and thus more vulnerable to the temporal attacks which exploit the lagging in Blockchain synchronization \cite{pinzon2016double,dennis2016temporal}. Therefore, it is necessary to study which nodes are critical and how the attacker should attack such nodes (e.g. how much bandwidth consumption) to impact the Bitcoin network functionality, e.g. causing major miners several hours to reach consensus.

With this motivation, we consider the \textit{Quality of Service Degradation} (\texttt{QoSD}) problem. Given a directed graph $G$ representing a network, threshold $\mathtt{T}$ and set of pairs $S$ in $G$, the objective is to identify a minimum budget to increase the edge (or node) weights to ensure the weighted, shortest-path distance between each pair in $S$ is no smaller than $\mathtt{T}$. Intuitively, the goal of this problem is to assess how robust the network is; the greater budget to increase edge weights found, the more resilient the network is to the perturbation in terms of edge weights. In addition, the budget to increase weight of a edge in the solution provides an indication of the importance of this edge to the desired functionality. 

In the context of network reliability, Kuhnle et al. \cite{kuhnle2018network} have recently studied a special case of our problem  under the name \texttt{LB-MULTICUT}. Different to our problem, the objective of this problem is to identify a minimum set of edges whose {\bf removal} ensures the distance between each pair of nodes is no smaller than $\mathtt{T}$. Directly adopting the \texttt{LB-MULTICUT} solutions to our \texttt{QoSD} problem is not feasible since most of those solutions exploited a trait that their problems can be formulated by Integer Programming and exhibit submodular behaviors. \texttt{QoSD} problem, on the other hand, is shown to be neither submodular nor supermodular, making \texttt{QoSD} more challenging to devise an efficient algorithm. Also, modern networked systems are increasingly massive in scale, often with size of millions of vertices and edges. The need for a scalable algorithm on large-scale networks poses another challenge for our problem. Motivated by these observations, the main contribution of this work are as follows.

\begin{itemize}
\item We provide three highly scalable algorithms for our problem: Two iterative algorithms, $\mathtt{IG}$ and $\mathtt{AT}$, with approximation ratio $O(\gamma^{-1}(\ln \mathtt{T} + \mathtt{h} \ln n))$ and $O(\ln \mathtt{T} + \mathtt{h} \ln n)$ respectively, where $\gamma$ is a metric measuring the concave property of edge weight functions w.r.t a budget to increase edge weights, \texttt{h} is the maximum number of edges of a path connecting between a pair in $S$, and $n$ is the number of nodes in $G$; and $\mathtt{SA}$, a probabilistic approximation algorithm returning $O(\frac{\ln \mathtt{T} + \mathtt{h} \ln \mathtt{d}}{\gamma (1-e^{-\gamma}) (1-\epsilon)})$ approximation result with high probability, where $\mathtt{d}$ is the maximum degree of $G$. 
\item When the edge weight functions are linear w.r.t the cost to increase edge weight, we propose $\mathtt{LR}$, a randomized rounding algorithm based on LP relaxation of the problem. $\mathtt{LR}$ provides $O(\mathtt{h} \ln n)$ approximation guarantee.
\item We extensively evaluate our algorithms on both synthetic networks and large-scale, real-world networks. All of our four algorithms are demonstrated to scale to networks with millions of nodes and edges in under a few hours and return nearly optimal solutions. Also, the experiments show the trade-off between our proposed algorithms in terms of runtime and quality of solution.
\end{itemize}



\textit{Organization.} The rest of this paper is organized as follows. Section \ref{sec:related} reviews literatures related to our problem. In Section \ref{sec:problem}, we formally define the problem and discuss its challenges. The four solutions, \texttt{IG}, \texttt{AT}, \texttt{SA} and \texttt{LR}, are presented in Section \ref{sec:igta}, \ref{sec:sa}, \ref{sec:lr} and \ref{sec:discussion}, respectively. In Section \ref{sec:experiment}, we evaluate our algorithms, comparing to heuristic methods for the general case and to algorithms in \cite{kuhnle2018network} for the special case. Finally, Section \ref{sec:conclusion} concludes the paper.

\section{Related works} \label{sec:related}

\textbf{Relationship with Kuhnle et al.} \cite{kuhnle2018network} Kuhnle et al. has studied the Length-Bounded Multicut Problem (\texttt{LB-MULTICUT}). The objective of this problem is to identify a minimum set of edges whose {\bf removal} ensures the distance between each pair of nodes of a given set $S$ is no smaller than $\mathtt{T}$. \texttt{LB-MULTICUT} is a special case of \texttt{QoSD} where we restrict to two conditions: 1) the only way to increase an edge weight is making the weight greater than \texttt{T} and 2) the cost of doing so is uniform among edges.


Our \texttt{QoSD} problem is more general and realistic than \texttt{LB-MULTICUT}, as briefly discussed earlier. In the adversarial perspective, it is impractical to remove edges out of a network structure. Taking the I-SIG system as an example, the attacker can only damage the network functionality by compromising multiple vehicles, causing severe traffic jams on road network rather than physically damaging road lines. Furthermore, on the Bitcoin-based applications, the Bitcoin protocol only allows a maximum delay of 20 minutes for any packet delivery. For any damage of a P2P connection, the protocol creates another connection to guarantee the connectivity of Bitcoin network. Thus, the \texttt{LB-MULTICUT} cannot be applied on those two applications.


Other than the special case, \texttt{LB-MULTICUT} and \texttt{QoSD} are fundamentally different, thus solutions to \texttt{LB-MULTICUT} are not readily applied to \texttt{QoSD}. More specifically, Kuhnle {\em et al.} proposed three approximation algorithms for \texttt{LB-MULTICUT}, which are $\mathtt{MIA}$, $\mathtt{TAG}$, $\mathtt{SAP}$ \cite{kuhnle2018network}. We are going to discuss the limits of these algorithms w.r.t solving \texttt{QoSD}.

The general idea of \texttt{MIA} is to find the multicut of sub-graphs of the input network such that each optimal multicut is a lower bound of the optimal solution of \texttt{LB-MULTICUT} instance. In this solution, the authors exploit the similarity between \texttt{LB-MULTICUT} and the multicut problem where cutting an edge in a single path is sufficient to disconnect this path. With the multicut solution, \texttt{MIA} utilizes the $O(n^{11/23})$ approximation algorithm proposed by Agarwal et al. \cite{agarwal2007improved}. Thus \texttt{MIA}'s performance guarantee is bounded by $O(Mn^{11/23})$ where $M$ is the number of considered subgraphs. Our problem does not require edge removals, so there is not clear connection with multicut. Therefore, we find it infeasible to apply \texttt{MIA}, even with modification, to solve our problem.
 
The next algorithm of \texttt{LB-MULTICUT} is \texttt{TAG}. In general, \texttt{TAG} is a dynamic algorithm, which uses a primal-dual solution to bound the worst-case performance under incremental graph changes and improves the solution in practice by periodic pruning. \texttt{TAG} utilizes the trait that cutting all edges, which are in the maximal set of disjoint paths connecting target pairs of nodes, is sufficient to disconnect those pairs. However, this solution may not be practical in our problem. Increasing weights of those edges to maximum does not guarantee the shortest paths, which connect target pairs of nodes, no smaller than the threshold $\mathtt{T}$.

The $\mathtt{SAP}$ algorithm is a greedy, sampling-based solution with an $O(\mathtt{h} \log n)$ approximation guarantee ($\mathtt{h}$ is the maximum number of edges of a single path connecting a pair in $S$), which holds with the probability of at least $1-1/m$. Our algorithm $\mathtt{SA}$ is inspired by $\mathtt{SAP}$ in that we also use a greedy approach based on path samples, generated by using probabilistic hints based upon shortest path computations to guide the sampling. However, since our objective function is non-submodular, we prove that an approximation guarantee of $\mathtt{SA}$ depends on $\gamma$, where $\gamma$ measures the concave property of edge weight functions. Moreover, we boost the process of obtaining a feasible solution by allowing a finite budget of at most $q$ to be added on each step of sampling, where $q$ can be any number. We prove that $q$ does not impact the performance guarantee of $\mathtt{SA}$.

\textbf{Optimization on Integer Lattice.} As there is a finite budget to increase the edge weight, we model our problem in a form of minimization problem on Integer Lattice: given a set of functions $\{f_i | f_i : (\mathbb{Z}^+ \cup \{0\})^n \rightarrow \mathbb{R}^+\}$ on the Integer Lattice, the objective is to minimize the cardinality of $\mathbf{x}$ that $f_i(\mathbf{x}) \geq \theta_i$ for all $i$. The optimization on the Integer Lattice has received much attention recently. However, most of those works focus on the maximization version, which asks for maximizing $f(\mathbf{x})$ under a cardinality constraint $||\mathbf{x}|| \leq k$. When $f$ is non-submodular, those works exploits either the submodularity ratio $\gamma_s$ \cite{das2011submodular}, generalized curvature $\alpha$ \cite{bian2017guarantees} or the diminishing-return ratio $\gamma_d$ \cite{kuhnle2018fast,lehmann2006combinatorial} to devise approximation solutions with performance guarantee in terms of those parameters. However, the fact that those parameters can be small and computationally hard to obtain on several real-world objectives raises a concern on those theoretical approximation ratios. For example, Kuhnle et al. \cite{kuhnle2018fast} proposed a fast maximization of Non-Submodular, Monotonic Functions on the Integer Lattice with approximation ratio $(1-e^{-\gamma_d \gamma_s} -\eta)$ for any $\eta > 0$. If $\gamma_d$ or $\gamma_s$ is $0$, this ratio will be smaller than $0$. In our work, we utilize the concave property of edge weight functions to introduce the \textit{concave ratio} $\gamma$, which we use to prove the theoretical guarantee of $\mathtt{IG}$ and $\mathtt{SA}$, and bound the sampling size of $\mathtt{SA}$. $\gamma$ can be found easily from the derivative of edge weight functions or scanning through all edge weight functions with $O(m)$ time complexity. $\gamma$ can be small in some cases, so we devise the $\mathtt{AT}$ solution from an improved $\mathtt{IG}$ algorithm, which discards the dependence on $\gamma$ value to obtain better theoretical performance guarantee but a worse runtime in trade-off. 

\textbf{Classical Multicut Problem.} The Multicut problem asks for the minimum number of edges (or nodes) whose removal ensures each pair in $S$ is topologically disconnected. For the edge version in an undirected graph, an $O(\log k)$ approximation was developed by Garg et al. \cite{garg1996approximate} by considering multicommodity flow. In directed graphs, Gupta \cite{gupta2003improved} developed an $O(\sqrt[]{n})$ approximation algorithm, which was later improved to $O(n^{11/23})$ by Agarwal et al. \cite{agarwal2007improved}. These solutions were based on the optimal solution of the linear relaxation modeling the problem instance. Our \texttt{LR} algorithm was inspired by this approach but we have to deal with the challenge that a LP-optimal value of each edge could be larger than 1. Therefore, any discretization technique of the Multicut problem cannot be directly applied to our problem. We have devised a randomized rounding technique on which we can obtain a feasible solution with high probability while ensuring an $O(\mathtt{h} \log n)$ performance ratio.


\section{Problem Formulation} \label{sec:problem}


In this section, we formally define the \textit{Quality of Service Degradation} ($\mathtt{QoSD}$) problem in the format of cardinality minimization on the Integer Lattice and present challenges on solving $\mathtt{QoSD}$.  

We abstract the network using a weighted directed graph $G=(V,E)$ with $|V|=n$ nodes and $|E| = m$ directed edges. Each edge $e$ is associated with a function $f_e: \mathbb{Z}^\geq \rightarrow \mathbb{Z}^+$ which indicates the weight of $e$ w.r.t a budget to increase weight of $e$. In another word, if we spend $x$ on edge $e$, the weight of edge $e$ will be $f_e(x)$. $f_e$ is monotonically increasing.


Let $b_e$ be the maximum possible budget to increase the weight of edge $e$. Denote $\mathbf{x} = \{x_1,...x_m\}$ is a vector where $x_i$ is the budget to increase weight of the $i^\textnormal{th}$ edge and similarly $\mathtt{b} = \{b_1,...b_m\}$, we have $x_i \leq b_i$ $\forall i\in [1,m]$. $\mathtt{b}$ is called \textit{the box}. The overall budget to increase weight of all edges is denoted by $||\mathbf{x}|| = \sum_e x_e$. Let $f=\{f_1,f_2,...f_m\}$ be a set of edge weight functions. Note that, for simplicity, the notation $e$ is used to present an edge in $E$ and also the index of this edge, i.e. if we write $x_e$, we mean the budget to increase the weight of edge $e$ (to $f_e(x)$) and also the element in $\mathbf{x}$ that is corresponding to $e$. The same rule is applied with $b_e, f_e$. Also, if we write $e-1$ (or $e+1$), we indicate the edge right next to $e$ on the left (right) in $\mathbf{x}$.

A path $p = p_0,p_1,...p_l \in G$ is a sequence of vertices such that $(p_{i-1},p_i) \in E$ for $i=1,..,l$. A path can also be understood as the sequence of edges $\{(p_0,p_1), (p_1,p_2),... (p_{k-1}, p_k)\}$. In this work, a path is used interchangeably as a sequence of edges or a sequence of nodes. 
A \textit{single path} is a path containing no cycles (i.e repeated vertices). Under a budget vector $\mathbf{x}$, the length of a path $p$ is defined as $\sum_{e \in p} f_e(x_e)$. We now formally define $\mathtt{QoSD}$ as follows:

\begin{mydef} \textnormal{Quality of Service Degradation ($\mathtt{QoSD}$).} Given a directed graph $G=(V,E)$, a set $f = \{f_e: \mathbb{Z}^\geq \rightarrow \mathbb{Z}^+ \}$ of edge weight functions, a box $\mathtt{b}$ and a target set $S = \{(s_1,t_1),...(s_k,t_k)\}$, determine a minimum budget $||\mathbf{x}||$ such that under $\mathbf{x}$, the weighted, shortest-path between each pair in $S$ exceeds a threshold $\mathtt{T}$. A problem instance may be represented by the tuple $(G,f,\mathtt{b},S,\mathtt{T})$
\end{mydef}

For each edge $e \in E$, let $w_e = f_e(0)$ denote the initial weight of $e$. In this work, we assume $w_e > 0$ for all $e \in E$, which can be justified by the fact that most networks have positive costs associated with their edges, even when there is no interference from external sources (i.e., propagation delay in communication networks,  processing delay in Blockchains). 

Let $\mathcal{P}_i$ denote a set of simple paths connecting the pair $(s_i,t_i) \in S$ and $\sum_{e \in p} w_e < \mathtt{T}$ for all $p \in \mathcal{P}_i$. Let $\mathcal{F} = \cup^k_{i=1} \mathcal{P}_i$, we call a path $p \in \mathcal{F}$ a \textit{feasible path} and $\mathcal{F}$ is a set of all feasible paths in $G$. Let $\mathtt{w} = \min_e w_e$, it is trivial that the number of edges of a feasible path is upper-bounded by $\ceil{\frac{\mathtt{T}}{\mathtt{w}}}$. Denote $\mathtt{h} = \ceil{\frac{\mathtt{T}}{\mathtt{w}}}$.

Under $\mathbf{x}$, given a pair of nodes $(s,t)$, if there exists no single path $p$ from $s$ to $t$ which satisfies $\sum_{e \in p} f_e(x_e) < \mathtt{T}$, we call $s$ is separated from $t$ or the pair $(s,t)$ is \textit{separated} by $\mathbf{x}$. Also, given a feasible path $p \in \mathcal{F}$, if $\sum_{e \in p} f_e(x_e) \geq \mathtt{T}$, we call $p$ is \textit{blocked} by $\mathbf{x}$ or $\mathbf{x}$ blocks $p$.

The $\mathtt{QoSD}$ problem can be formulated as the follows:

\begin{align}
\min & \quad ||\mathbf{x}|| \\
\text{s.t. } & \sum_{e \in p} f_e(x_e) \geq \mathtt{T} && \forall  p \in \mathcal{F} \label{con:path} \\
& x_e \leq b_e && \forall e \in E \\
& x_e \in \mathbb{Z}^+ \cup \{0\} && \forall e \in E
\end{align}
Note that even $x_e \in \mathbb{Z}^+ \cup \{0\}$, this is not an Integer Program because $f_e(x)$ may not be a linear function. 

We can see this formulation as the cardinality minimization on the Integer lattice to satisfy multiple constraints. Before going further, we will look at several notations, mathematical operators on Integer lattice, which will be used along the theoretical proofs of our algorithms. Given $\mathbf{x} = \{x_1,...x_m\}, \mathbf{y} = \{y_1,...y_m\} \in \mathbb{Z}^m$, we have:
\begin{align*}
\mathbf{x} + \mathbf{y} & = \{x_1+y_1,...x_m+y_m\} \\
\mathbf{x} - \mathbf{y} & = \{x_1 - y_1,... x_m - y_m\} \\
\mathbf{x} \wedge \mathbf{y} & = \{ \min(x_1,y_1),...\min(x_m,y_m)\} \\
\mathbf{x} \vee \mathbf{y} &= \{\max(x_1,y_1),... \max(x_m,y_m)\} \\
\mathbf{x} / \mathbf{y} &= \{\max(x_1 - y_1,0),... \max(x_n - y_n, 0)\}\\
c \mathbf{x} & = \{cx_1,...cx_m\} \quad \forall c \in \mathbb{Z}
\end{align*}

Moreover, we say $\mathbf{x} \leq \mathbf{y}$ if $x_i \leq y_i$ for all $i \in [1,m]$, the similar rule is applied to $<,\geq,>$. 

Let $\mathbf{s}_i$ be a unit vector with the same dimension with $\mathbf{x}$, $\mathbf{s}_i$ has value $1$ in the $i^{th}$ element and $0$ elsewhere. Therefore, we could also write $\mathbf{x} = \sum_{i=1}^m x_i \mathbf{s}_i$. Table \ref{table:notation} summarizes all the notations we have so far.


\textit{Discussion.} Given an instance of $\mathtt{QoSD}$ $(G,f,\mathtt{b},S,\mathtt{T})$, the optimal solution can be obtained by formulating the problem as the following Integer Programming (IP):
\vspace*{-5px}
\begin{align}
\min & \quad \sum_e \sum^{b_e}_{i=0} i \cdot y_{e,i} \\
\text{s.t. } & \sum_{i=0}^{b_e} y_{e,i} = 1 && \forall e \in E  \label{con:cost}\\
&\sum_{e \in p} \sum_{i=0}^{b_e} f_e(i) \cdot y_{e,i} \geq \mathtt{T} && \forall  p \in \mathcal{F} \label{con:costpath} \\
& y_{e,i} \in \{0,1\} && \forall e \in E, i \in [0,b_e]
\end{align}
where $y_{e,i}$ is an indicator variable which is $1$ if $x_e = i$ and $0$ otherwise. The first constraint (Eq. \ref{con:cost}) is to guarantee the budget to increase weight of edge $e$ is a value in range $[0,b_e]$ and the second constraint (Eq. \ref{con:costpath}) is to ensure the length of each feasible path is at least $\mathtt{T}$. However, solving this IP is extremely expensive. Not only because solving IP is NP-hard (the performance is strongly dependent on which solver is used) but also listing all the paths for the second constraint is very expensive in practice since it requires $O(m^\mathtt{h})$ in the worst case. Our algorithms are designed to be efficient even when $G$ is large and hence do not require a listing of $\mathcal{F}$ or an optimal solution of the linear relaxation of this IP formulation.

\textit{Hardness and Inapproximability.} Since \texttt{LB-MULTICUT} is a special case of $\mathtt{QoSD}$, $\mathtt{QoSD}$ is NP-hard. Furthermore, any inapproximability result of \texttt{LB-MULTICUT} or the Multicut problem is also the inapproximability of \texttt{QoSD}. We summarize those results as follows:
\begin{itemize}
\item Kuhnle et al. \cite{kuhnle2018network} Let $\mathtt{T} \geq 16$. Unless $NP \subseteq BPP$, there is no polynomial-time algorithm to approximate $\mathtt{QoSD}$ within a factor of $\floor{\frac{\mathtt{T}}{6}} - 1 - \epsilon$ for any $\epsilon > 0$.
\item Lee et al. \cite{lee2016improved}: When $\mathtt{T}$ is fixed and initial edge weights are uniform, $\mathtt{QoSD}$ is inapproximable within a factor of $\Omega(\sqrt[]{T})$ assuming the Unique Games Conjecture.
\item Chawla et al. \cite{chawla2006hardness}: There exist no $O(\log \log n)$-approximation algorithm for $\mathtt{QoSD}$ unless $P=NP$. 
\end{itemize}

\textit{Node version of the problem.} The node version of the $\mathtt{QoSD}$ problem asks for the minimum budget to increase node weights rather than edge weights in the problem definition above. All our four algorithms can be easily adapted for the node version and keep the same theoretical performance guarantees.

\begin{table}[t]
\centering
\caption{Notation} \label{table:notation}
\begin{tabular}{c|l}
\toprule
     Notation & Definition   \\ 
 \midrule
 $G=(V,E)$ & Input directed graph \\
 $V,E$	& Vertex and edge sets of $G$, respectively \\
 $n,m$	& Number of vertices, edges in $G$, respectively \\
 \texttt{d}	& The maximum degree of $G$ \\
 $S$	& The set of target pairs of nodes \\
 $k$	& The number of pairs in target set $S$ \\
 $\mathtt{T}$	& The threshold on the path length \\
 $f_e(x)$		& The weight function of edge $e$ w.r.t a budget $x$ \\
 $f$			& The set of all weight functions of edges in $G$ \\
 $\mathcal{F}$	& The set of all feasible paths \\
 $\mathtt{h}$	& The maximum number of edges of a path in $\mathcal{F}$ \\
 $q$	& The maximum added cost in each iteration of \texttt{SA} \\
 $\mathbf{x} = \{x_1,..x_m\}$ & The budget vector, $x_i$ is the budget on edge $i$\\
 $\mathbf{s}_i$ & Unit vector, 1 in the $i^{\textnormal{th}}$ element and $0$ elsewhere \\
 $\gamma$		& The concave ratio of the function set $f$ \\
 $\alpha$		& Bias parameter in the sampling of $\mathtt{SA}$\\
 $\mathtt{x}^*$	& Optimal solution to the problem instance\\
 $\mathtt{OPT} = ||\mathbf{x}^*||$	& Size of optimal solution\\
\bottomrule
\end{tabular}
\end{table}







\section{Iterative solution} \label{sec:igta}


There are two challenging tasks to solve the \texttt{QoSD} problem. The first one is the number of feasible paths could be extremely large, thus we need to avoid listing all the feasible paths as discussed earlier. The second challenge is that the objective function of \texttt{QoSD} can be non-submodular, depending on the edge weight functions. We handle the challenges via two different algorithms: \textit{Iterative Greedy} (\texttt{IG}) and \textit{Adaptive Trading} (\texttt{AT}). After the discussion of \texttt{IG} and \texttt{AT}, we provide the theoretical analysis and approximation guarantee of both algorithms.

To tackle the first challenge, instead of listing all feasible paths of the network, we build a set $\mathcal{P}$ of candidate paths which is a subset of $\mathcal{F}$ but blocking all paths in $\mathcal{P}$ is sufficient to separate all pairs in $S$. $\mathcal{P}$ is built incrementally and iteratively. For each iteration, we find a budget vector $\mathbf{x} = \{x_1,..x_m\}$ to block all paths in $\mathcal{P}$. Then, we set the length of an edge $e$ to be $f_e(x_e)$. Next, we check whether $\mathbf{x}$ is sufficient to separate all pairs in $S$ by checking whether there exists the shortest path of a certain pair in $S$ whose length is smaller than $\mathtt{T}$. If yes, then blocking all paths in $\mathcal{P}$ is not sufficient to separate all pairs in $S$; we add all the shortest paths of pairs whose length has not exceeded $\mathtt{T}$ into $\mathcal{P}$ and continue to the next iteration. If no, then $\mathbf{x}$ is sufficient to separate all pairs in $S$; we terminate the algorithm and return $\mathbf{x}$. The full algorithms is represented by Alg. \ref{alg:iterative}.  



\begin{algorithm}[t]
	\caption{Iterative Solution}
    \label{alg:iterative}
	\begin{flushleft}
		\textbf{Input} $G, f, \mathtt{b}, \mathtt{T},S$\\
		\textbf{Output} QoS adjustment vector $\mathbf{x}$
	\end{flushleft}
    \begin{algorithmic}[1]
    	\State $\mathcal{P} = \emptyset$
		\While{There exists path $p \in \mathcal{F}$ whose length $< \mathtt{T}$} \label{line:outer_it}
        	\State $\mathcal{P} \leftarrow \mathcal{P} \cup $ potentialPaths($G,S,\mathbf{x}$). \label{line:add_path}
            \State $\mathbf{x} = \{0\}^m$
            \State Find $\mathbf{x}$ to block all paths in $\mathcal{P}$ \label{line:block_path}
        \EndWhile
     \end{algorithmic}
     \begin{flushleft}
     	\textbf{Return $\mathbf{x}$}
     \end{flushleft}
\end{algorithm}



Since the maximum number of edges of a feasible path could reach up to $\mathtt{h} = \floor{\frac{\mathtt{T}}{\mathtt{w}}}$, the number of feasible paths of the network $G$ is upper bounded by $O(n^\mathtt{h})$. Because we guarantee there should be at least a feasible path is added into $\mathcal{P}$ in each iteration (line \ref{line:add_path} Alg. \ref{alg:iterative}), the number of iterations in Alg. \ref{alg:iterative} is at most $O(n^\mathtt{h})$. This is a large number and comparable to the case if we tried to enumerate all feasible paths. However in experiment, we found that the number of iterations is much smaller even on large and highly dense networks. 


\begin{algorithm}[H]
	\caption{\textsc{potentialPaths}$(G,S,\mathbf{x})$}
	\label{alg:shortest_path}
    \begin{flushleft}
    	\textbf{Input} $G,S, \mathbf{x}$\\
	\textbf{Output} Set $\mathcal{P}$ of paths whose lengths is smaller than $\mathtt{T}$
    \end{flushleft}
    \begin{algorithmic}[1]
		\State Assign edge $e$ length is $f_e(x_e)$ $\forall$  $e \in E$
        \For{each pair $(s,t) \in S$}
        	\State $p \leftarrow $ shortest path between $s$ and $t$
            \If{length of $p$ is smaller than $\mathtt{T}$}
            	\State $\mathcal{P} = \mathcal{P} \cup p$
            \EndIf
        \EndFor
    \end{algorithmic}
    \begin{flushleft}
    	\textbf{Return } $\mathcal{P}$
    \end{flushleft}
\end{algorithm}

\begin{lemma}
The approximation guarantee of Alg. \ref{alg:iterative} equals to the approximation guarantee of the algorithm that finds $\mathbf{x}$ to block all paths in $\mathcal{P}$
\end{lemma}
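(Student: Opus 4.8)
The plan is to reduce the analysis of the whole iterative procedure to a single invocation of the inner routine on line~\ref{line:block_path}, using the invariant $\mathcal{P}\subseteq\mathcal{F}$. Write $\rho$ for the approximation ratio of the routine that, given a candidate set $\mathcal{P}$, returns a vector blocking every path in $\mathcal{P}$, and let $\mathtt{OPT}_{\mathcal{P}}$ denote the minimum budget needed to block all paths in $\mathcal{P}$. The goal is to show that the vector $\mathbf{x}$ returned by Alg.~\ref{alg:iterative} is $\mathtt{QoSD}$-feasible and satisfies $||\mathbf{x}||\le\rho\,\mathtt{OPT}$, with $\rho$ being the best ratio attainable.

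First I would check feasibility of the returned $\mathbf{x}$. This is immediate from the exit condition of the while loop on line~\ref{line:outer_it}: the loop terminates only when no feasible path has length smaller than $\mathtt{T}$ under $\mathbf{x}$, which is exactly the statement that $\mathbf{x}$ separates every pair in $S$.

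The crux is the bound $\mathtt{OPT}_{\mathcal{P}}\le\mathtt{OPT}$. Here I would first establish the invariant $\mathcal{P}\subseteq\mathcal{F}$: each path added on line~\ref{line:add_path} is a shortest path of length below $\mathtt{T}$ under the current, increased weights $f_e(x_e)$; since each $f_e$ is monotonically increasing, the length of this path under the initial weights $w_e=f_e(0)$ is no larger, hence also below $\mathtt{T}$, so the path belongs to $\mathcal{F}$. Consequently the global optimum $\mathbf{x}^*$, which blocks all of $\mathcal{F}$, blocks all of $\mathcal{P}$ as well; thus $\mathbf{x}^*$ is feasible for the blocking subproblem on the final $\mathcal{P}$, giving $\mathtt{OPT}_{\mathcal{P}}\le||\mathbf{x}^*||=\mathtt{OPT}$. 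Applying the inner routine's guarantee on this last candidate set yields $||\mathbf{x}||\le\rho\,\mathtt{OPT}_{\mathcal{P}}\le\rho\,\mathtt{OPT}$, so Alg.~\ref{alg:iterative} inherits the ratio~$\rho$.

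For the reverse (equality) direction, I would argue that $\rho$ cannot be improved upon in the worst case: on instances where the procedure is forced to accumulate enough of $\mathcal{F}$ into $\mathcal{P}$ that $\mathtt{OPT}_{\mathcal{P}}=\mathtt{OPT}$ and the inner routine meets its worst-case ratio on $\mathcal{P}$, the returned $\mathbf{x}$ attains $||\mathbf{x}||=\rho\,\mathtt{OPT}$, matching the subroutine exactly. I expect the main obstacle to be the bookkeeping that keeps $\mathcal{P}\subseteq\mathcal{F}$ valid across all iterations, together with the monotonicity step that converts ``short under the current weights'' into ``short under the initial weights''; once these are secured, the displayed chain of inequalities is routine.
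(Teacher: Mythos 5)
Your proposal is correct and follows essentially the same argument as the paper: since $\mathcal{P}\subseteq\mathcal{F}$, the global optimum $\mathbf{x}^*$ blocks all of $\mathcal{P}$, so $\mathtt{OPT}_{\mathcal{P}}\le\mathtt{OPT}$ and the inner routine's ratio transfers, while feasibility of the output follows from the loop's exit condition. You add two things the paper leaves implicit --- the monotonicity argument showing every added shortest path really lies in $\mathcal{F}$, and a (sketched) tightness direction for the stated equality --- but these are refinements, not a different route.
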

\begin{proof}
Since $\mathcal{P}$ is a subset of all feasible paths in $G$, the optimal solution to block all feasible paths is also a feasible solution to block all paths in $\mathcal{P}$. Therefore, the optimal solution to block all paths in $\mathcal{P}$ is at most the size of the optimal solution of \texttt{QoSD}. Denote $\mathbf{x}^o$ and $\mathbf{x}^*$ as the optimal solutions to block paths in $\mathcal{P}$ and $\mathcal{F}$ respectively. Assume the algorithm in line \ref{line:block_path} of Alg. \ref{alg:iterative} returns $\alpha$-approximation result. We have $||\mathbf{x}|| \leq \alpha \cdot ||\mathbf{x}^o|| \leq \alpha \cdot ||\mathbf{x}^*||$. And since finally $\mathbf{x}$ is a feasible solution to our problem, then the output $\mathbf{x}$ of Alg. \ref{alg:iterative} is within $\alpha$  factor to optimal solution $\mathbf{x}^*$.   
\end{proof}

Now let us discuss the the second challenge: how to block all paths in $\mathcal{P}$, line \ref{line:block_path} of Alg. \ref{alg:iterative}. To address this, we propose two algorithms, \textit{Greedy} and \textit{Adaptive Trading}. Before delving into the details of each algorithm, we introduce the parameter $\gamma$, which is used to measure the concave property of weight functions. $\gamma$ would be utilized on performance analysis for our algorithms. 


\subsection{Concave property of weight functions}

The concave ratio of a set of functions is defined as follows:

\begin{mydef}
(\textnormal{Concave ratio}) The concave ratio of a set $F$ of non-negative functions is the largest scalar $\gamma \in [0,1]$ such that:
\begin{align}
f(x + 1) - f(x) \geq \gamma \cdot \big(f(y + 1) - f(y) \big)  
\end{align}
For all $f \in F$ and $0 \leq x \leq y$
\end{mydef}

In our problem, the set of non-negative functions contains all weight functions of edges in $G$. Therefore, for simplicity, we denote $\gamma$ as the \textit{concave ratio} of these set of weight functions. Now, we will utilize $\gamma$ to get several useful exploration for our solutions. First, given a path $p$ and a vector $\mathbf{x}$, define: 
\begin{align}
\mathtt{r}(p, \mathbf{x}) = \min(\mathtt{T}, \sum_{e \in p} f_e(x_e)) \label{equ:define_r}
\end{align}

Let $g(\mathcal{P},\mathbf{x})$ be an arbitrary linear combination of $\mathtt{r}(p,\mathbf{x})$ for all $p \in \mathcal{P}$. $g(\mathcal{P},\mathbf{x})$ could be presented as follows:
\begin{align}
g(\mathcal{P}, \mathbf{x}) = \sum_{p \in \mathcal{P}} \beta_p \mathtt{r}(p, \mathbf{x}) \quad \quad \beta_p \in \mathbb{R}^+ \textnormal{ } \forall \textnormal{ } p \in \mathcal{P}
\end{align}
Given a vector $\mathbf{z}$, define:
\begin{align}
\Delta_{\mathbf{z}} g(\mathcal{P},\mathbf{x}) = g(\mathcal{P}, \mathbf{x} + \mathbf{z}) - g(\mathcal{P}, \mathbf{x}) \label{equ:delta_g}
\end{align}
We have the following lemma.

\begin{lemma} \label{lemma:concave}
Given two budget vectors $\mathbf{x}, \mathbf{y}$ where $\mathbf{x} \leq \mathbf{y}$ and a unit vector $\mathbf{s}$, we have:
\begin{align*}
\Delta_\mathbf{s} g(\mathcal{P}, \mathbf{x}) \geq \gamma \Delta_\mathbf{s} g(\mathcal{P}, \mathbf{y})
\end{align*}
\end{lemma}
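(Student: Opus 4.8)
The plan is to exploit the fact that $g$ is a nonnegatively-weighted sum of the per-path quantities $\mathtt{r}(p,\cdot)$ and thereby reduce the whole statement to a single path. Since $\Delta_\mathbf{s}$ is just a difference of two evaluations, linearity gives $\Delta_\mathbf{s} g(\mathcal{P},\mathbf{x}) = \sum_{p \in \mathcal{P}} \beta_p\, \Delta_\mathbf{s}\mathtt{r}(p,\mathbf{x})$, and likewise $\gamma\,\Delta_\mathbf{s} g(\mathcal{P},\mathbf{y}) = \sum_{p \in \mathcal{P}} \beta_p\,\big(\gamma\,\Delta_\mathbf{s}\mathtt{r}(p,\mathbf{y})\big)$. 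Because every $\beta_p \ge 0$, it therefore suffices to prove the per-path inequality $\Delta_\mathbf{s}\mathtt{r}(p,\mathbf{x}) \ge \gamma\,\Delta_\mathbf{s}\mathtt{r}(p,\mathbf{y})$ for each $p$ and then multiply by $\beta_p$ and sum. So my first step is to state this reduction and spend the rest of the argument on one fixed path.

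For a fixed $p$, write $\mathbf{s} = \mathbf{s}_j$ for the edge $j$ supporting the unit vector. If $j \notin p$, then stepping in direction $\mathbf{s}$ changes no edge length along $p$, so $\Delta_\mathbf{s}\mathtt{r}(p,\mathbf{x}) = \Delta_\mathbf{s}\mathtt{r}(p,\mathbf{y}) = 0$ and the inequality is trivial. The real case is $j \in p$. Here I would set $L(p,\mathbf{x}) = \sum_{e \in p} f_e(x_e)$ for the unclamped length, observe that monotonicity of each $f_e$ together with $\mathbf{x} \le \mathbf{y}$ yields $L(p,\mathbf{x}) \le L(p,\mathbf{y})$, and record the two single-edge increments $a = f_j(x_j+1) - f_j(x_j)$ and $b = f_j(y_j+1) - f_j(y_j)$. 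Since the budgets satisfy $0 \le x_j \le y_j$, the definition of the concave ratio applied to $f_j$ gives exactly $a \ge \gamma b$, and monotonicity gives $a, b \ge 0$.

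The heart of the proof is pushing these facts through the truncation $\min(\mathtt{T},\cdot)$. I would introduce the clamped-increment function $\phi(u,\delta) = \min(\mathtt{T}, u+\delta) - \min(\mathtt{T}, u) = \max\!\big(0, \min(\delta, \mathtt{T}-u)\big)$ for $\delta \ge 0$, so that the per-path claim is exactly $\phi(L(p,\mathbf{x}), a) \ge \gamma\,\phi(L(p,\mathbf{y}), b)$. If the right-hand side is zero, i.e. $L(p,\mathbf{y}) \ge \mathtt{T}$ or $b = 0$, there is nothing to prove since $\phi \ge 0$. Otherwise $L(p,\mathbf{y}) < \mathtt{T}$ and $b > 0$, hence $L(p,\mathbf{x}) \le L(p,\mathbf{y}) < \mathtt{T}$, both clamps are inactive at the base points, and the claim reduces to $\min(a, \mathtt{T}-L(p,\mathbf{x})) \ge \gamma\min(b, \mathtt{T}-L(p,\mathbf{y}))$. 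I would close this by combining $a \ge \gamma b$ with $\mathtt{T}-L(p,\mathbf{x}) \ge \mathtt{T}-L(p,\mathbf{y}) \ge \gamma(\mathtt{T}-L(p,\mathbf{y}))$ — the last step using $\gamma \le 1$ and $\mathtt{T}-L(p,\mathbf{y}) \ge 0$ — and then invoking monotonicity of $\min$ in both coordinates to get $\min(a, \mathtt{T}-L(p,\mathbf{x})) \ge \min(\gamma b, \gamma(\mathtt{T}-L(p,\mathbf{y}))) = \gamma\min(b, \mathtt{T}-L(p,\mathbf{y}))$.

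The main obstacle is exactly this interaction between the concave-ratio inequality and the cap at $\mathtt{T}$: the raw bound $a \ge \gamma b$ alone is not enough once $\min(\mathtt{T},\cdot)$ may be active, so the case split on whether the clamp binds at $\mathbf{x}$ and $\mathbf{y}$ is unavoidable, and the step of rewriting the headroom as $\mathtt{T}-L(p,\mathbf{x}) \ge \gamma(\mathtt{T}-L(p,\mathbf{y}))$ is the trick that makes both coordinates of the two minima comparable at once. Everything else — the reduction via nonnegative $\beta_p$, the off-path case, and the monotonicity of $L(p,\cdot)$ — is routine.
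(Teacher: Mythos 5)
Your proposal is correct and follows essentially the same route as the paper's proof: reduce to a single path via the nonnegative coefficients $\beta_p$, then case-analyze how the truncation at $\mathtt{T}$ interacts with the single-edge increment. Your packaging of the nontrivial case as $\min(a,\mathtt{T}-L(p,\mathbf{x})) \geq \gamma\min(b,\mathtt{T}-L(p,\mathbf{y}))$ is a slightly more compact way of organizing the paper's three-case split, but the underlying argument is the same.
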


\begin{skproof}
Without lost of generality, we assume $\mathbf{s} = \mathbf{s}_i$, a unit vector which has value $1$ at the $i^\textnormal{th}$ element and $0$ elsewhere. We prove that: given a feasible path $p$, the marginal gain of $\mathtt{r}(p,\mathbf{x})$ by $\mathbf{s}_i$ is at least $\gamma$ times the marginal gain of $\mathtt{r}(p, \mathbf{y})$ by $\mathbf{s}_i$. 

By definition, the $\mathtt{r}(p,\cdot)$ value of any budget vector cannot exceed $\mathtt{T}$. Also, $\mathtt{r}(p,\mathbf{u}) \leq \mathtt{r}(p, \mathbf{v})$ if $\mathbf{u} \leq \mathbf{v}$. Therefore, we consider three different cases: (1) $\mathtt{r}(p,\mathbf{x}) < \mathtt{r}(p, \mathbf{x} + \mathbf{s}_i) < \mathtt{T}$; (2) $\mathtt{r}(p, \mathbf{x}) < \mathtt{r}(p, \mathbf{x} + \mathbf{s}_i) = \mathtt{T}$; and (3) $\mathtt{r}(p, \mathbf{x}) = \mathtt{r}(p, \mathbf{x} + \mathbf{s}_i) = \mathtt{T}$. All three cases guarantee $\mathtt{r}(p,\mathbf{x} + \mathbf{s}_i) - \mathtt{r}(p, \mathbf{x}) \geq \gamma \big( \mathtt{r}(p,\mathbf{y} + \mathbf{s}_i) - \mathtt{r}(p, \mathbf{y}) \big)$. Since $g(\mathcal{P},\mathbf{x})$ is a linear combination of $\mathtt{r}(p, \mathbf{x})$, the lemma follows.
\end{skproof}

\begin{lemma} \label{lemma:concave_3}
Given three budget vectors $\mathbf{x},\mathbf{y}, \mathbf{z}$ where $\mathbf{x} \leq \mathbf{y}$ we have:
\begin{align*}
\Delta_{\mathbf{z}} g(\mathcal{P}, \mathbf{x}) \geq \gamma \Delta_{\mathbf{z}} g(\mathcal{P}, \mathbf{y})
\end{align*}
\end{lemma}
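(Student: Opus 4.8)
The plan is to reduce this multi-coordinate statement to the single-unit-vector case already established in Lemma \ref{lemma:concave} via a telescoping decomposition of $\mathbf{z}$ into unit steps. Write the total budget as $||\mathbf{z}|| = \sum_e z_e$ and express $\mathbf{z}$ as an ordered sum of $||\mathbf{z}||$ unit vectors, $\mathbf{z} = \sum_{t=1}^{||\mathbf{z}||} \mathbf{s}_{i_t}$, where repeated indices are allowed (coordinate $e$ appears $z_e$ times). I would then introduce the partial sums $\mathbf{z}_0 = \{0\}^m$ and $\mathbf{z}_t = \sum_{j=1}^{t} \mathbf{s}_{i_j}$, so that $\mathbf{z}_{||\mathbf{z}||} = \mathbf{z}$.

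First I would telescope $\Delta_{\mathbf{z}} g$ along these partial sums. For any base vector $\mathbf{u}$, the definition in Eq. \eqref{equ:delta_g} gives
\[
\Delta_{\mathbf{z}} g(\mathcal{P}, \mathbf{u}) = g(\mathcal{P}, \mathbf{u} + \mathbf{z}) - g(\mathcal{P}, \mathbf{u}) = \sum_{t=1}^{||\mathbf{z}||} \Delta_{\mathbf{s}_{i_t}} g(\mathcal{P}, \mathbf{u} + \mathbf{z}_{t-1}),
\]
since consecutive terms cancel. Applying this identity once with $\mathbf{u} = \mathbf{x}$ and once with $\mathbf{u} = \mathbf{y}$ recasts the claim as a term-by-term comparison of the two resulting sums.

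Next I would invoke Lemma \ref{lemma:concave} on each summand. The key observation is that adding the same partial sum to both sides of $\mathbf{x} \leq \mathbf{y}$ preserves the coordinatewise order, i.e. $\mathbf{x} + \mathbf{z}_{t-1} \leq \mathbf{y} + \mathbf{z}_{t-1}$ for every $t$. Hence Lemma \ref{lemma:concave}, applied with the unit vector $\mathbf{s}_{i_t}$ and the pair $\mathbf{x} + \mathbf{z}_{t-1} \leq \mathbf{y} + \mathbf{z}_{t-1}$, yields
\[
\Delta_{\mathbf{s}_{i_t}} g(\mathcal{P}, \mathbf{x} + \mathbf{z}_{t-1}) \geq \gamma \, \Delta_{\mathbf{s}_{i_t}} g(\mathcal{P}, \mathbf{y} + \mathbf{z}_{t-1}).
\]
Summing over $t = 1, \dots, ||\mathbf{z}||$ and using the telescoping identity for $\mathbf{x}$ on the left and for $\mathbf{y}$ on the right produces $\Delta_{\mathbf{z}} g(\mathcal{P}, \mathbf{x}) \geq \gamma \, \Delta_{\mathbf{z}} g(\mathcal{P}, \mathbf{y})$, as required.

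The routine parts are the telescoping algebra and the monotonicity of vector addition; neither should present difficulty. The only point that needs care — and what I regard as the crux — is verifying that the hypothesis of Lemma \ref{lemma:concave} genuinely holds at \emph{every} step, namely that the order $\mathbf{x} + \mathbf{z}_{t-1} \leq \mathbf{y} + \mathbf{z}_{t-1}$ is maintained no matter which coordinates the intermediate unit steps touch. Since $\mathbf{z}_{t-1} \geq \{0\}^m$ is added identically to both vectors, the ordering is preserved in each coordinate, so the reduction goes through for an arbitrary $\mathbf{z}$ with no assumption on its sign, magnitude, or relation to $\mathbf{x}$ and $\mathbf{y}$ beyond the given $\mathbf{x} \leq \mathbf{y}$.
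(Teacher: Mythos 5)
Your proposal is correct and follows essentially the same route as the paper: decompose $\mathbf{z}$ into a sum of unit vectors, telescope $\Delta_{\mathbf{z}} g$ along the partial sums for both base points, and apply Lemma \ref{lemma:concave} term by term using $\mathbf{x} + \mathbf{z}_{t-1} \leq \mathbf{y} + \mathbf{z}_{t-1}$. Your write-up is in fact slightly more careful than the paper's (which states the final telescoping step as an inequality where an equality holds), but the argument is the same.
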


\begin{proof}
Let $\mathbf{z} = \sum_{i=1}^{||\mathbf{z}||} \mathbf{s}_i$ where $\mathbf{s}_i$ is a unit vector, we have:
\begin{align*}
&\Delta_\mathbf{z} g(\mathcal{P}, \mathbf{y}) = \sum_{j=1}^{||\mathbf{z}||} \Delta_{\mathbf{s}_j} g(\mathcal{P},\mathbf{y} + \mathbf{s}_1 + ... + \mathbf{s}_{j-1}) \leq \frac{1}{\gamma} \Big( \sum_{j=1}^{||\mathbf{z}||} \Delta_{\mathbf{s}_j} g(\mathcal{P},\mathbf{x} + \mathbf{s}_1 + ... + \mathbf{s}_{j-1}) \Big) \leq \frac{1}{\gamma} \Delta_{\mathbf{z}} g(\mathcal{P}, \mathbf{x})
\end{align*}
which completes the proof.
\end{proof}

Note that $\mathtt{r}(p, \mathbf{x}) \leq \mathtt{T}$. A budget vector $\mathbf{x}$ is sufficient to block all paths in $\mathcal{P}$ iff $\mathtt{r}(p, \mathbf{x}) = \mathtt{T}$ for all $p \in \mathcal{P}$. Therefore, to block all paths in $\mathcal{P}$, we find the minimum $||\mathbf{x}||$ such that:
\begin{align}
\mathtt{D}(\mathcal{P}, \mathbf{x}) = \sum_{p \in \mathcal{P}} \mathtt{r}(p, \mathbf{x}) = |\mathcal{P}| \cdot \mathtt{T}
\end{align}
In the next subsections, we devise two approximation algorithms to find such $\mathbf{x}$ and provide their performance guarantees.

\subsection{Iterative Greedy algorithm} 
The first algorithm to block all paths in $\mathcal{P}$ is the \textit{iterative greedy} algorithm (\texttt{IG}). The general idea is that: we iteratively add a unit vector $\mathbf{s}$ into $\mathbf{x}$, which maximizes the marginal gain $\Delta_\mathbf{s} \mathtt{D}(\mathcal{P}, \mathbf{x})$, until $\mathbf{x}$ is sufficient to block all paths in $\mathcal{P}$. Hence, the final overall budget ($||\mathbf{x}||$) is equal to the number of iterations of the algorithm. \texttt{IG} is fully presented by Alg. \ref{alg:greedy_blocking}. 

However, the objective function $\mathtt{D}(\mathcal{P}, \mathbf{x})$ is neither submodular nor supermodular w.r.t $\mathbf{x}$. If each edge weight function is concave, $\mathtt{D}(\mathcal{P},\cdot)$ exhibits a \textit{submodular behavior}. On the other hand, if each weight function is convex, then $\mathtt{D}(\mathcal{P}, \mathbf{x})$ can be much more than the sum of $\mathtt{D}(\mathcal{P}, \mathbf{s})$ values of unit vectors $\mathbf{s}$ constituting $\mathbf{x}$, which is a \textit{supermodular behavior}. The non-submodularity of $\mathtt{D}(\mathcal{P}, \cdot)$ means that the $\mathbf{x}$ returned by \texttt{IG} may not have an $O(\log n)$ approximation ratio. Actually the concave ratio $\gamma$ plays an important role on the performance guarantee of \texttt{IG}, which is proved theoretically by Theorem \ref{theorem:greedy_approx} and would be further illustrated in the experimental evaluation.







\begin{algorithm}[h]
	\caption{Greedy blocking paths (\texttt{IG})}
    \label{alg:greedy_blocking}
    \begin{flushleft}
    \textbf{Input} $G, f, \mathtt{b}, \mathtt{T}, \mathcal{P}$\\
	\textbf{Output} a cost vector $\mathbf{x}$
    \end{flushleft}
	\begin{algorithmic}[1]
    	\State $\mathbf{x} = \{0\}^m$
		\While{$\mathtt{D}(\mathcal{P},\mathbf{x}) \leq |\mathcal{P}| \mathtt{T}$} \label{line:inner_it_greedy}
        	\For{each unit vector $\mathbf{s}$}
            	\State $\Delta_\mathbf{s} \mathtt{D}(\mathcal{P}, \mathbf{x}) = \mathtt{D}(\mathcal{P}, \mathbf{x} + \mathbf{s}) - \mathtt{D}(\mathcal{P}, \mathbf{x})$
            \EndFor
            \State $\mathbf{x} = \mathbf{x} + \textnormal{argmax}_\mathbf{s} \Delta_\mathbf{s} \mathtt{D}(\mathcal{P}, \mathbf{x})$
        \EndWhile
     \end{algorithmic}
     \begin{flushleft}
     	\textbf{Return $\mathbf{x}$}
     \end{flushleft}
\end{algorithm}


\begin{theorem} \label{theorem:greedy_approx}
\texttt{IG} returns a solution within $O(\gamma^{-1}(\mathtt{h} \ln n + \ln \mathtt{T}))$ factor of the optimal solution for blocking all paths in $\mathcal{P}$. 
\end{theorem}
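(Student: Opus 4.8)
The plan is to run a submodular-cover-style greedy analysis on the potential $\mathtt{D}(\mathcal{P},\cdot)$, paying a factor of $\gamma$ at each step to compensate for the lack of submodularity via the concave-ratio property (Lemma \ref{lemma:concave}, whose vector form is Lemma \ref{lemma:concave_3}). Let $\mathbf{x}^o$ be an optimal vector blocking all paths in $\mathcal{P}$, let $\mathbf{x}^{(j)}$ denote the \texttt{IG} iterate after $j$ steps (so $||\mathbf{x}^{(j)}|| = j$, since each iteration adds exactly one unit vector), and define the residual $R_j = |\mathcal{P}|\mathtt{T} - \mathtt{D}(\mathcal{P}, \mathbf{x}^{(j)})$. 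Since $\mathbf{x}$ blocks $\mathcal{P}$ exactly when $\mathtt{D}(\mathcal{P},\mathbf{x}) = |\mathcal{P}|\mathtt{T}$, the algorithm stops when $R_j = 0$, and $||\mathbf{x}||$ equals the number of iterations $J$. The entire task reduces to bounding $J$ in terms of $||\mathbf{x}^o||$.

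The crux — and the step I expect to be the main obstacle — is a per-iteration progress guarantee: at every step the best unit vector increases $\mathtt{D}$ by at least $\gamma R_j / ||\mathbf{x}^o||$. To establish it, I would first note that $\mathtt{r}(p,\cdot)$ is monotone and capped at $\mathtt{T}$, so $\mathtt{D}(\mathcal{P}, \mathbf{x}^{(j)} \vee \mathbf{x}^o) = |\mathcal{P}|\mathtt{T}$. Writing $\mathbf{z} = \mathbf{x}^o / \mathbf{x}^{(j)}$, we have $\mathbf{x}^{(j)} + \mathbf{z} = \mathbf{x}^{(j)} \vee \mathbf{x}^o$ and $||\mathbf{z}|| \le ||\mathbf{x}^o||$, hence $\Delta_\mathbf{z}\mathtt{D}(\mathcal{P},\mathbf{x}^{(j)}) = R_j$. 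Decomposing $\mathbf{z} = \sum_i \mathbf{s}_i$ into unit vectors and telescoping gives $R_j = \sum_i \Delta_{\mathbf{s}_i}\mathtt{D}(\mathcal{P}, \mathbf{x}^{(j)} + \mathbf{s}_1 + \cdots + \mathbf{s}_{i-1})$. Because $\mathbf{x}^{(j)} \le \mathbf{x}^{(j)} + \mathbf{s}_1 + \cdots + \mathbf{s}_{i-1}$, Lemma \ref{lemma:concave} (applied to $\mathtt{D}$, the special case of $g$ with all $\beta_p = 1$) bounds each telescoped term by $\gamma^{-1}\Delta_{\mathbf{s}_i}\mathtt{D}(\mathcal{P},\mathbf{x}^{(j)}) \le \gamma^{-1}\max_\mathbf{s}\Delta_\mathbf{s}\mathtt{D}(\mathcal{P},\mathbf{x}^{(j)})$. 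Summing the at most $||\mathbf{x}^o||$ terms yields $R_j \le \gamma^{-1}||\mathbf{x}^o||\max_\mathbf{s}\Delta_\mathbf{s}\mathtt{D}(\mathcal{P},\mathbf{x}^{(j)})$, and since greedy selects the maximizing unit vector this is precisely the claimed progress. The subtlety here is that $\gamma$ must be spent because the greedy gain is measured at the low point $\mathbf{x}^{(j)}$ while the optimal decomposition compares gains along an ascending trajectory; Lemma \ref{lemma:concave} is exactly the tool that transfers between the two.

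The progress guarantee gives the recursion $R_{j+1} \le R_j(1 - \gamma/||\mathbf{x}^o||)$, so $R_J \le R_0\,(1-\gamma/||\mathbf{x}^o||)^J \le R_0\, e^{-J\gamma/||\mathbf{x}^o||}$. I would then invoke integrality: since each $f_e$ maps into $\mathbb{Z}^+$ and $\mathtt{T}$ is an integer, $\mathtt{D}$ and hence $R_j$ are integer-valued, so $R_J < 1$ already forces $R_J = 0$ and termination. Solving $R_0 e^{-J\gamma/||\mathbf{x}^o||} < 1$ gives $J \le \gamma^{-1}||\mathbf{x}^o||\ln R_0 + O(1)$. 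Finally I would bound the initial residual crudely by $R_0 \le |\mathcal{P}|\mathtt{T}$ and use $\mathcal{P}\subseteq\mathcal{F}$ with $|\mathcal{F}| = O(n^\mathtt{h})$ to obtain $\ln R_0 \le \mathtt{h}\ln n + \ln\mathtt{T} + O(1)$. Combining, $||\mathbf{x}|| = J = O\!\big(\gamma^{-1}(\mathtt{h}\ln n + \ln\mathtt{T})\big)\cdot||\mathbf{x}^o||$, which is exactly the claimed factor against the optimum for blocking $\mathcal{P}$.
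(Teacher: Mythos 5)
Your proposal is correct and follows essentially the same route as the paper's own proof: you decompose the residual optimal vector into unit increments, telescope, pull each marginal gain back to the current iterate via Lemma \ref{lemma:concave} at a cost of $\gamma^{-1}$, bound by the greedy maximizer to get the geometric decay $R_{j+1} \le (1-\gamma/\|\mathbf{x}^o\|)R_j$, and terminate via integrality of $\mathtt{D}$ with $R_0 \le |\mathcal{P}|\mathtt{T} \le n^{\mathtt{h}}\mathtt{T}$. The only differences are cosmetic (you phrase the residual optimum as $\mathbf{x}^o/\mathbf{x}^{(j)}$ where the paper introduces $\mathbf{x}^o_i$ directly, and you use $(1-t)^J \le e^{-Jt}$ where the paper uses $\ln(1+x)\ge x - x^2/2$).
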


\begin{skproof}
Denote $\mathbf{x}^*$ as an optimal solution to the \texttt{QoSD} instance ($||\mathbf{x}^*|| = \mathtt{OPT}$). Denote $\mathbf{x}_i$ as our obtained solution before the $i^\textnormal{th}$ iteration in Alg. \ref{alg:greedy_blocking}. The key of our proof is that: the gap between $|\mathcal{P}| \mathtt{T}$ and $\mathtt{D}(\mathcal{P},\mathbf{x})$ will be reduced after each iteration by a factor at least $1-\frac{\gamma}{\mathtt{OPT}}$. To be specific:
\begin{align*}
|\mathcal{P}|\mathtt{T} - \mathtt{D}(\mathcal{P}, \mathbf{x}_{i+1}) \leq (1-\frac{\gamma}{\mathtt{OPT}}) (|\mathcal{P}|\mathtt{T} - \mathtt{D}(\mathcal{P}, \mathbf{x}_i))
\end{align*}
This was proved by using the property of concave ratio from lemma \ref{lemma:concave} and the greedy selection. 

Furthermore, since there should exist at least a feasible path $p \in \mathcal{P}$ such that $\mathtt{r}(p, \mathbf{x}) \leq \mathtt{T} - 1$ before the final iteration of the algorithm, we prove that the number of iterations is upper bounded by $O(\frac{\ln |\mathcal{P}| \mathtt{T}}{\gamma/\mathtt{OPT}})$. The theorem follows as the number of iterations is equal to $||\mathbf{x}||$.
\end{skproof}

\subsection{Adaptive Trading algorithm} 
The concave ratio of the edge weight functions could be very small if the weight functions are convex, which makes the approximation guarantee of \texttt{IG} undesirable. Therefore, in this section, we propose a solution whose performance guarantee does not depend on the concave ratio $\gamma$. We name this algorithm \textit{Adaptive Trading} (\texttt{AT}). 

The algorithm still works in the iterative manner and terminates only when the desired $\mathbf{x}$ is found, but different from \texttt{IG} on how the solution $\mathbf{x}$ is improved in each iteration. To be specific, in each iteration, the algorithm finds an amount of additional budget to increase the weight of an edge such that maximize the ratio between the increasing amount of $\mathtt{D}(\mathcal{P}, \mathbf{x})$ and the additional budget. Therefore, in each iteration, the additional budget could be bigger than $1$. To find such amount, the simplest way is to scan through all possible amounts of additional budget of each edge. Note that the maximum budget which can be added to increase weight of edge $e$ is upper bounded by $b_e$. Therefore, the computation complexity in each iteration of $\mathtt{AT}$ is upper bounded by $O(||\mathtt{b}||)$. Denote $\mathbf{u}(e,i) \in \mathbb{R}^m$ as a vector where the element corresponding to edge $e$ has value $i$ and other elements are $0$. \texttt{AT} is fully presented in Alg. \ref{alg:trunk_adding} and its approximation guarantee is provided by Theorem \ref{theorem:trunk_approx}.

\begin{algorithm}[h]
	\caption{Adaptive Trading solution (\texttt{AT})}
    \label{alg:trunk_adding}
    \begin{flushleft}
    \textbf{Input}  $G, f, \mathtt{b}, \mathtt{T}, \mathcal{P}$\\
	\textbf{Output} QoS adjustment vector $\mathbf{x}$
    \end{flushleft}
	\begin{algorithmic}[1]
    	\State $\mathcal{P} = \emptyset$
		\While{$\mathtt{D}(\mathcal{P},\mathbf{x}) \leq |\mathcal{P}| \mathtt{T}$} \label{line:inner_it_trunk}
        	\For{each edge $e \in E$}
            	\State $z_e = \textnormal{argmax}_{z} \frac{\Delta_{\mathbf{u}(e,z)} \mathtt{D}(\mathcal{P}, \mathbf{x})}{z}$
            \EndFor
            \State $\mathbf{x} = \mathbf{x} + \textnormal{argmax}_{\mathbf{u}(e,z_e)} \frac{\Delta_{\mathbf{u}(e,z_e)} \mathtt{D}(\mathcal{P}, \mathbf{x})}{z_e} $
        \EndWhile
     \end{algorithmic}
     \begin{flushleft}
     	\textbf{Return $\mathbf{x}$}
     \end{flushleft}
\end{algorithm}


\begin{theorem} \label{theorem:trunk_approx}
\texttt{AT} returns a solution within $O(\mathtt{h} \ln n + \ln \mathtt{T})$ factor of the optimal solution for blocking all paths in $\mathcal{P}$. 
\end{theorem}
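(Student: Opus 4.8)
The plan is to run a greedy ``bang-per-buck'' set-cover style analysis on the remaining gap $D = |\mathcal{P}|\mathtt{T} - \mathtt{D}(\mathcal{P}, \mathbf{x})$, but \emph{without} invoking the concave ratio $\gamma$ (so Lemmas \ref{lemma:concave} and \ref{lemma:concave_3} are not used here). Let $\mathbf{x}^*$ be an optimal solution for blocking $\mathcal{P}$, with $\mathtt{OPT} = ||\mathbf{x}^*||$; since $\mathbf{x}^*$ blocks every path, $\mathtt{D}(\mathcal{P}, \mathbf{x}^*) = |\mathcal{P}|\mathtt{T}$. The algorithm terminates exactly when $D$ hits $0$, and since the number of iterations equals $||\mathbf{x}||$ the goal is to bound the total added budget by $\mathtt{OPT}\cdot O(\mathtt{h}\ln n + \ln\mathtt{T})$.

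First I would prove a coordinate-wise subadditivity of $\mathtt{D}$: for any base $\mathbf{x}$ and any $\mathbf{z} = \sum_e \mathbf{u}(e, z_e)$,
\[
\mathtt{D}(\mathcal{P}, \mathbf{x} + \mathbf{z}) - \mathtt{D}(\mathcal{P}, \mathbf{x}) \le \sum_e \Delta_{\mathbf{u}(e, z_e)}\mathtt{D}(\mathcal{P}, \mathbf{x}).
\]
It suffices to prove this for a single path, i.e.\ for $\mathtt{r}(p,\cdot)$. Writing $\phi(a) = \min(\mathtt{T}, a)$, $B_0 = \sum_{e\in p} f_e(x_e)$, and $\delta_e = f_e(x_e + z_e) - f_e(x_e) \ge 0$, the claim reduces to
\[
\phi\Big(B_0 + \sum_e \delta_e\Big) - \phi(B_0) \le \sum_e \big[\phi(B_0 + \delta_e) - \phi(B_0)\big],
\]
which is exactly the diminishing-returns property of the concave, nondecreasing cap $\phi$. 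Crucially, this depends only on concavity of the threshold cap and holds whether or not the individual $f_e$ are concave, and this is precisely why the $\gamma$ factor that burdens \texttt{IG} vanishes for \texttt{AT}.

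Next I would instantiate $\mathbf{z} = \mathbf{x}^* / \mathbf{x}$ (the monus), so $\mathbf{x} + \mathbf{z} = \mathbf{x}\vee\mathbf{x}^* \ge \mathbf{x}^*$ blocks every path and the left-hand side above equals the full gap $D$, while $\sum_e z_e = ||\mathbf{x}^*/\mathbf{x}|| \le \mathtt{OPT}$. Combining the subadditivity bound with the mediant inequality $\max_e (a_e/b_e) \ge (\sum_e a_e)/(\sum_e b_e)$ shows that at the current state some pair $\mathbf{u}(e, z_e)$ achieves ratio $\Delta_{\mathbf{u}(e,z_e)}\mathtt{D}/z_e \ge D/\mathtt{OPT}$. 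Since \texttt{AT} maximizes this ratio over all edges and all chunk sizes, the pair it actually selects has ratio at least $D/\mathtt{OPT}$.

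Finally I would convert this per-step guarantee into the cover bound. If an iteration adds cost $c$ with ratio $\rho \ge D/\mathtt{OPT}$, the gap drops to at most $D(1 - c/\mathtt{OPT})$; moreover the gain never exceeds $D$, which forces $c \le \mathtt{OPT}$ in \emph{every} iteration, so $1 - c/\mathtt{OPT} \ge 0$ throughout. Telescoping with $1 - a \le e^{-a}$ gives $D_j \le D_0\, e^{-||\mathbf{x}_j||/\mathtt{OPT}}$. Because $\mathtt{D}$ is integer-valued, the gap just before the last iteration is at least $1$, yielding $||\mathbf{x}|| \le \mathtt{OPT}(\ln D_0 + 1)$ after adding the final chunk of size $\le \mathtt{OPT}$. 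With $D_0 \le |\mathcal{P}|\mathtt{T}$ and $|\mathcal{P}| \le |\mathcal{F}| = O(n^{\mathtt{h}})$, we get $\ln D_0 = O(\mathtt{h}\ln n + \ln\mathtt{T})$, which is the claimed factor. I expect the main obstacle to be the subadditivity step and, relatedly, the clean argument that adaptively sized chunks eliminate the $\gamma$ dependence; the telescoping and the bound on the last chunk are routine.
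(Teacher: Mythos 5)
Your proposal is correct and follows essentially the same route as the paper's proof: both decompose the residual optimum $\mathbf{x}^*/\mathbf{x}$ coordinate-wise, use the diminishing-returns property of the cap $\min(\mathtt{T},\cdot)$ (your subadditivity claim is exactly the paper's hybrid-vector lemma bounding $\mathtt{D}(\mathcal{P},\mathbf{v}(e))-\mathtt{D}(\mathcal{P},\mathbf{v}(e+1))$ by $\Delta_{\mathbf{u}(e,x_e^o)}\mathtt{D}(\mathcal{P},\mathbf{x}_i)$), invoke the ratio-maximality of \texttt{AT}'s selection to get a per-step gap reduction of $(1-c/\mathtt{OPT})$, and telescope against a final gap of at least $1$. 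The only cosmetic differences are that you apply $1-a\le e^{-a}$ factor-by-factor where the paper routes through the Cauchy (AM--GM) inequality, and you are slightly more careful about accounting for the final chunk.
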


\begin{skproof}
Denote $\mathbf{x}_i = \{x_1,...x_m\}$ as our obtained solution before the $i^\textnormal{th}$ iteration in Alg. \ref{alg:trunk_adding}. Let $\mathbf{x}^o_i = \{x_1^o, ... x_m^o\}$ be an optimal solution which is in addition to $\mathbf{x}_i$ to block all paths in $\mathcal{P}$. Denote $\mathbf{v}(e) = \{x_1,..x_{e-1},x_e + x_e^o,...x_m + x_m^o\}$. Trivially, $\mathbf{v}(1) = \mathbf{x}_i + \mathbf{x}^o$ and $\mathbf{v}(m+1) = \mathbf{x}_i$. Let $\mathbf{u}(e_i,j_i)$ be a vector we add into solution $\mathbf{x}_i$ in the $i^\textnormal{th}$ iteration. The key of our proof is that the following inequality is always guaranteed after each iteration.
\begin{align}
\frac{\Delta_{\mathbf{u}_i(e_i,j_i)} \mathtt{D}(\mathcal{P},\mathbf{x}_i)}{j_i} \geq \frac{\mathtt{D}(\mathcal{P},\mathbf{v}(e)) - \mathtt{D}(\mathcal{P},\mathbf{v}(e+1))}{x_e^o} \label{equ:key_trunk}
\end{align}
for any $e \in E$. This is proved by utilizing the monotonicity of $\mathtt{r}(p,\mathbf{x})$ w.r.t $\mathbf{x}$ and the trait that the selection of our algorithm ensures $\Delta_{\mathbf{u}(e, q)} \mathtt{D}(\mathcal{P}, \mathbf{x}_i) \leq \frac{q}{j_i} \Delta_{\mathbf{u}(e_i,j_i)} \mathtt{D}(\mathcal{P},\mathbf{x}_i)$ for any $e \in E$ and $q \in \mathbb{Z}^+$.

Furthermore, the Eq. \ref{equ:key_trunk} helps us to prove that: the gap between $|\mathcal{P}|\mathtt{T}$ and $\mathtt{D}(\mathcal{P}, \mathbf{x})$ will be reduced after each iteration by a factor at least $1-\frac{j_i}{\mathtt{OPT}}$. To be specific:
\begin{align*}
|\mathcal{P}|\mathtt{T} - \mathtt{D}(\mathcal{P}, \mathbf{x}_{i+1}) \leq (1 - \frac{j_i}{\mathtt{OPT}}) \big(|\mathcal{P}| \mathtt{T} - \mathtt{D}(\mathcal{P}, \mathbf{x}_i) \big)
\end{align*}
since there should exist at least a feasible path $p \in \mathcal{P}$ such that $\mathtt{r}(p, \mathbf{x}) \leq \mathtt{T} - 1$ before the final iteration of the algorithm, utilizing Cauchy theorem \cite{cauchyInequality}, we bound the budget $||\mathbf{x}||$ by $\mathtt{OPT} \cdot O(\ln |\mathcal{P}| \mathtt{T})$. Since 
$|\mathcal{P}| \leq n^\mathtt{h}$, the theorem follows.
\end{skproof}

\section{Sampling Approach} \label{sec:sa}

In this section, we introduce a sampling solution \texttt{SA} to \texttt{QoSD} which has $O(\frac{\ln \mathtt{T} + \mathtt{h} \ln d}{\gamma (1-e^-{\gamma})(1-\epsilon)})$ approximation guarantee with probability at least $1-\delta$ where $\epsilon, \delta > 0$ are arbitrarily small numbers. \texttt{SA} runs in polynomial time when the parameter $\mathtt{T}$ is fixed.

We define a \textit{blocking metric} of a budget vector $\mathbf{x}$ as follows
\begin{align*}
B(\mathbf{x}) = \sum_{p \in \mathcal{F}} \mathtt{r}(p, \mathbf{x})
\end{align*}
It is trivial that $\mathbf{x}$ blocks all pairs in $\mathcal{F}$ iff $B(\mathbf{x}) = |\mathcal{F}| \cdot \mathtt{T}$. 

In essence, \texttt{SA} attempts to minimize $||\mathbf{x}||$ while ensuring $B(\mathbf{x}) = |\mathcal{F}| \cdot \mathtt{T}$. To do so, \texttt{SA} works in the greedy manner as follows: in each iteration, \texttt{SA} finds a budget vector $\mathbf{v} = \{v_1,...v_m\}$, $||\mathbf{v}|| \leq q$, to add into $\mathbf{x}$ which maximizes $B(\mathbf{x} + \mathbf{v})$. Rather than an expensive listing of $\mathcal{F}$, an estimator is employed by path sampling procedure to find the vector $\mathbf{v}$. This process is repeated until the budget vector $\mathbf{x}$ is sufficient to block all paths in $\mathcal{F}$. \texttt{SA} is fully presented in Alg. \ref{alg:sampling_solution}.

\begin{algorithm}[h]
	\caption{Sampling Algorithm (\texttt{SA})}
    \label{alg:sampling_solution}
	\begin{flushleft}
	\textbf{Input} $G, S, \mathtt{T}, f, \mathtt{b}$ and $q, \epsilon, \delta$\\
	\textbf{Output} cost vector $\mathbf{x}$
	\end{flushleft}
    \begin{algorithmic}[1]
        \State Initiate $\mathbf{x} = \{0\}^m$
        \While{There exists a path $p \in \mathcal{F}$ whose length $< \mathtt{T}$}
            	\State Generate $\mathcal{P} = $ $\mathcal{N}(q,\epsilon, \delta/||\mathtt{b}||)$ sample paths $p_1,p_2,...$
            	\State Greedily select $\mathbf{v}$ ($||\mathbf{v}|| \leq q$) that maximizes $\hat{B}(\mathcal{P}, \mathbf{x} + \mathbf{v})$
                \State $\mathbf{x} = \mathbf{x} + \mathbf{v}$
        \EndWhile
     \end{algorithmic}
     \begin{flushleft}
     	\textbf{Return } $\mathbf{x}$
     \end{flushleft}
\end{algorithm}

Since we will not list $\mathcal{F}$, the questions now are (1) how to estimate $B(\cdot)$; and (2) how many sample paths should be generated to bound the error between the estimator of $B(\cdot)$ and its actual value. In sub-section \ref{subsec:estimate}, we define the estimator $\hat{B}(\cdot)$ employed in each iteration of Alg. \ref{alg:sampling_solution}. We provide the approximation guarantee of greedily selection on sub-section \ref{subsec:greedy_sa}. Sub-section \ref{subsec:no_sampling} provides the lower bound on the number of sampling paths to bound the error. We then put all the results together to obtain the performance guarantee of \texttt{SA}.



\subsection{Estimator} \label{subsec:estimate}

Let an instance $(G,f,\mathtt{b},\mathtt{T})$ of \texttt{QoSD} be given. Denote $\mathcal{J}$ as a set of all single paths in $G$. For each $p \in \mathcal{J}$, define:



\begin{align*}
\mathtt{R}(p,\mathbf{x}) =
\left\{
		\begin{array}{ll}
			\mathtt{r}(p,\mathbf{x})  & \mbox{if } p \in \mathcal{F} \\
			0 & \mbox{otherwise} 
		\end{array}
	\right. \label{pf1}
\end{align*}

It is trivial that $\sum_{p \in \mathcal{J}} \mathtt{R}(p,\mathbf{x}) = \sum_{p \in \mathcal{F}} \mathtt{r}(p, \mathbf{x})$. Inspired by the estimation on the number of paths in a graph \cite{roberts2007estimating}, we define the estimator of $B(\mathbf{x})$ in the following way: Given a probability distribution $\rho$ on $\mathcal{J}$ such that $\rho(p) > 0$ for all $p \in \mathcal{F}$. Let $\mathcal{P} = \{p_1,p_2, ... p_l\}$ be a set of $l$ paths samples from $\rho$, $B(\mathbf{x})$ could be estimated by

\begin{align*}
\hat{B}(\mathcal{P},\mathbf{x}) = \frac{1}{l} \sum_{i=1}^l \frac{\mathtt{R}(p_i,\mathbf{x})}{\rho(p_i)}
\end{align*}



\begin{lemma}
$\hat{B}(\mathcal{P}, \mathbf{x})$ is an unbiased estimator of $B(\mathbf{x})$
\end{lemma}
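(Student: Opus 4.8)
The plan is to show that the expectation of $\hat{B}(\mathcal{P}, \mathbf{x})$ over the random draw of the sample paths equals $B(\mathbf{x})$, which by definition means the estimator is unbiased. The key observation is that each sampled path $p_i$ is drawn independently from the same distribution $\rho$ on $\mathcal{J}$, so by linearity of expectation it suffices to compute the expectation of a single term $\frac{\mathtt{R}(p_i, \mathbf{x})}{\rho(p_i)}$ and then average.

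First I would write out the expectation explicitly using linearity:
\begin{align*}
\mathbb{E}[\hat{B}(\mathcal{P}, \mathbf{x})] = \frac{1}{l} \sum_{i=1}^l \mathbb{E}\left[ \frac{\mathtt{R}(p_i, \mathbf{x})}{\rho(p_i)} \right].
\end{align*}
Next, for each fixed $i$, I would expand the single-term expectation as a sum over all possible paths $p \in \mathcal{J}$ that could be drawn, weighting each outcome by its probability $\rho(p)$:
\begin{align*}
\mathbb{E}\left[ \frac{\mathtt{R}(p_i, \mathbf{x})}{\rho(p_i)} \right] = \sum_{p \in \mathcal{J}} \rho(p) \cdot \frac{\mathtt{R}(p, \mathbf{x})}{\rho(p)} = \sum_{p \in \mathcal{J}} \mathtt{R}(p, \mathbf{x}).
\end{align*}
The cancellation of $\rho(p)$ is the crux of the argument and is exactly why the estimator is designed with the $1/\rho(p_i)$ reweighting factor. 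Here I would need to be careful about the support: the sum ranges over $p \in \mathcal{J}$, but $\mathtt{R}(p, \mathbf{x}) = 0$ for any $p \notin \mathcal{F}$, so only paths in $\mathcal{F}$ contribute, and the hypothesis $\rho(p) > 0$ for all $p \in \mathcal{F}$ guarantees the ratio $\frac{\mathtt{R}(p, \mathbf{x})}{\rho(p)}$ is well-defined precisely on the set where $\mathtt{R}$ is nonzero (when $\rho(p) = 0$ we also have $\mathtt{R}(p,\mathbf{x})=0$, so the $0/0$ terms can be excluded from the sum).

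Finally I would invoke the identity $\sum_{p \in \mathcal{J}} \mathtt{R}(p, \mathbf{x}) = \sum_{p \in \mathcal{F}} \mathtt{r}(p, \mathbf{x}) = B(\mathbf{x})$, which is stated just before the lemma, to conclude that each term equals $B(\mathbf{x})$ and hence
\begin{align*}
\mathbb{E}[\hat{B}(\mathcal{P}, \mathbf{x})] = \frac{1}{l} \sum_{i=1}^l B(\mathbf{x}) = B(\mathbf{x}).
\end{align*}
I expect no serious obstacle in this proof; it is a standard importance-sampling (Horvitz--Thompson type) unbiasedness argument. The only subtle point worth stating cleanly is the handling of the support condition so that the reweighting is well-defined exactly on the paths that matter, which is precisely what the assumption $\rho(p) > 0$ for all $p \in \mathcal{F}$ provides.
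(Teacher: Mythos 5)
Your proof is correct and follows essentially the same route as the paper's: expand the expectation of a single term over $\rho$, cancel the $\rho(p)$ factors, and apply the identity $\sum_{p \in \mathcal{J}} \mathtt{R}(p,\mathbf{x}) = \sum_{p \in \mathcal{F}} \mathtt{r}(p,\mathbf{x}) = B(\mathbf{x})$. Your version is slightly more careful than the paper's (explicit linearity over the $l$ samples and the remark on the support of $\rho$), but the argument is the same.
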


\begin{proof}
\begin{align*}
&\mathbb{E}[\hat{B}(\mathcal{P}, \mathbf{x})] = \mathbb{E}\Big[ \frac{\mathtt{R}(p,\mathbf{x})}{\rho(p)} \Big] = \sum_{p \in \mathcal{J}} \frac{\mathtt{R}(p,\mathbf{x})}{\rho(p)} \cdot \rho(p) = \sum_{p \in \mathcal{J}} \mathtt{R}(p,\mathbf{x}) = \sum_{p \in \mathcal{F}} \mathtt{r}(p, \mathbf{x})
\end{align*}
\end{proof}



To sampling paths, we utilize the following biased, self-avoiding random walk sampling technique, which was once proposed by Kuhnle \cite{kuhnle2018network}. First, we randomly select a pair $(s,t)$ from $S$ and put $s$ into the sample path $p$. Considering in a certain moment, $p=\{s,..u\}$ ($u$ is called a tail node of $p$ at this time). The NeighborSelection procedure would select a node among the out-going neighbors of $u$ to add into $p$. The selection is as follows:  Let $\mathcal{T}$ be the shortest-path tree directed towards $t$. Let $v$ be the parent of $u$ in $\mathcal{T}$. If $N(u)/p = \{v\}$, then the next node we add into $p$ is $v$. If $v \in N(u)/p$, we select $v$ with probability $\alpha$ and the other nodes in $N(u)/p$ with probability of $\frac{1-\alpha}{|N(u)/p|-1}$. If $v \not\in N(u)/p$, we select the next node uniform randomly among $N(u)/p$. The sampling procedure ends when we meet the node $t$ or the length of $p$ exceeds $\mathtt{T}$. With the path-sampling procedure defined, given a path $p$, we could easily find $\rho(p)$. Also, $\rho(p) > 0$ for all $p \in \mathcal{F}$. The sampling technique is fully presented in Alg. \ref{alg:sampling_path}.

\begin{algorithm}[t]
	\caption{Sampling path}
    \label{alg:sampling_path}
	\begin{flushleft}
		\textbf{Input} $G, S, \mathtt{T},\mathbf{x}$ \\
		\textbf{Output} Sample path $p$
	\end{flushleft}
   	\begin{algorithmic}[1]
		\State $(s,t) \leftarrow $ randomly select a transaction. \label{line:select_pair}
        \State $p \leftarrow \{s\}$
        \Do
        	\State $u = $ tail$(p)$
            \State Let $N(u)$ be the set of outcoming neighbors of $u$
            \State $p \leftarrow p \textnormal{ } \cup $ NeighborSelection$(N(u),p, (s,t))$ \label{line:neighbor_select}
        \doWhile{$u \neq t$ and $\sum_{e \in p} f_e(x_e) < \mathtt{T}$}
     \end{algorithmic}
     \begin{flushleft}
     	\textbf{Return $p$}
     \end{flushleft}
\end{algorithm}


\subsection{Greedy selection on the estimator} \label{subsec:greedy_sa}
Having defined the estimator $\hat{B}(\mathcal{P},\mathbf{x})$ and the path sampling procedure, we now find the budget vector $\mathbf{v}$, $||\mathbf{v}|| \leq q$, to maximize $\hat{B}(\mathcal{P}, \mathbf{x} + \mathbf{v})$. $\mathbf{v}$ is found in the greedy manner as follows: we run in $q$ iterations and in each iteration, selecting the unit vector that maximizes the marginal gain of $\hat{B}(\mathcal{P}, \mathbf{x} + \mathbf{v})$. Since it is trivial, we will not write down the pseudo-code on how we find $\mathbf{v}$. 

The question now is what approximation guarantee $\mathbf{v}$ can provide? Note that $\hat{B}(\mathcal{P}, \mathbf{x})$ is a finite combination of functions $\mathtt{r}(p,\mathbf{x})$  with $p \in \mathcal{P}$. Hence, $\hat{B}(\mathcal{P}, \mathbf{x})$ is submodular if all weight functions are concave and supermodular if they are convex. So maximizing $\hat{B}(\mathcal{P}, \mathbf{x + v})$ using greedy algorithm may not return $1-1/e$ approximation result. Therefore, similar to \texttt{IG}, we use the concave ratio $\gamma$ to obtain the performance guarantee of the greedy selection to maximize $\hat{B}(\mathcal{P},\mathbf{x})$. 

Denote $\mathbf{v}^o = \sum_{i=1}^q \mathbf{u}_i$ as an optimal solution that maximizes $\hat{B}(\mathcal{P}, \mathbf{x} + \mathbf{v})$, where $\mathbf{u}_i$ is a unit vector ($i \in [0,m]$). Lemma \ref{lem:greedy_sampling} provides approximation guarantee of the greedy selection. 

\begin{lemma} \label{lem:greedy_sampling}
\begin{align*}
\Delta_{\mathbf{v}}\hat{B}(\mathcal{P}, \mathbf{x}) \geq (1-e^{-\gamma}) \Delta_{\mathbf{v}^o} \hat{B}(\mathcal{P}, \mathbf{x}^o)
\end{align*}
\end{lemma}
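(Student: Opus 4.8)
The plan is to run the classical greedy analysis for monotone set functions, but to replace the exact diminishing-returns property of submodular functions with the weak version supplied by the concave ratio (Lemma \ref{lemma:concave}). The first observation is that $\hat{B}(\mathcal{P}, \cdot)$ is a nonnegative linear combination of the functions $\mathtt{r}(p, \cdot)$, with coefficients $\frac{1}{l\rho(p_i)}$, so it is exactly of the form $g(\mathcal{P}, \cdot)$ to which Lemmas \ref{lemma:concave} and \ref{lemma:concave_3} apply; it is also monotone, since each $\mathtt{r}(p, \cdot)$ is monotone in $\mathbf{x}$. Note that both increments in the statement are taken over the same base point $\mathbf{x}$, since $\mathbf{v}^o$ is by definition the maximizer of $\hat{B}(\mathcal{P}, \mathbf{x} + \mathbf{v})$.

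I would index the greedy iterations explicitly. Let $\mathbf{v}_j$ be the partial greedy vector after $j$ of the $q$ steps (so $\mathbf{v}_0 = \mathbf{0}$ and $\mathbf{v}_q = \mathbf{v}$), and set $g_j = \hat{B}(\mathcal{P}, \mathbf{x}+\mathbf{v}_j)$, $g^* = \hat{B}(\mathcal{P}, \mathbf{x}+\mathbf{v}^o)$, $g_0 = \hat{B}(\mathcal{P},\mathbf{x})$. The target is $g_q - g_0 \geq (1-e^{-\gamma})(g^* - g_0)$, and I would derive it from the per-step progress bound $g_j - g_{j-1} \geq \frac{\gamma}{q}(g^* - g_{j-1})$. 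This rearranges to $g^* - g_j \leq (1-\frac{\gamma}{q})(g^* - g_{j-1})$; iterating over the $q$ steps and using $(1-\frac{\gamma}{q})^q \leq e^{-\gamma}$ gives $g^* - g_q \leq e^{-\gamma}(g^*-g_0)$, which is the claim.

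To establish the per-step bound, first use monotonicity to write $g^* - g_{j-1} \leq \Delta_{\mathbf{v}^o}\hat{B}(\mathcal{P}, \mathbf{x}+\mathbf{v}_{j-1})$, which follows from $\hat{B}(\mathcal{P},\mathbf{x}+\mathbf{v}^o) \leq \hat{B}(\mathcal{P},\mathbf{x}+\mathbf{v}_{j-1}+\mathbf{v}^o)$. Then decompose $\mathbf{v}^o = \sum_{i=1}^q \mathbf{u}_i$ telescopically and apply Lemma \ref{lemma:concave} to each unit addition: because $\mathbf{x}+\mathbf{v}_{j-1} \leq \mathbf{x}+\mathbf{v}_{j-1}+\mathbf{u}_1+\dots+\mathbf{u}_{i-1}$, the marginal gain of $\mathbf{u}_i$ at the larger point is at most $\frac{1}{\gamma}$ times its marginal gain at $\mathbf{x}+\mathbf{v}_{j-1}$. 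Summing yields $\Delta_{\mathbf{v}^o}\hat{B}(\mathcal{P},\mathbf{x}+\mathbf{v}_{j-1}) \leq \frac{1}{\gamma}\sum_{i=1}^q \Delta_{\mathbf{u}_i}\hat{B}(\mathcal{P},\mathbf{x}+\mathbf{v}_{j-1})$. Finally, since the greedy rule selects the unit vector of maximum marginal gain at $\mathbf{x}+\mathbf{v}_{j-1}$, each term on the right is at most $g_j - g_{j-1}$, so the sum is at most $\frac{q}{\gamma}(g_j-g_{j-1})$; chaining the three inequalities gives the per-step bound.

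The main obstacle is precisely this decomposition step. The concave ratio only controls the marginal gain of a \emph{single} unit vector across two comparable points, so unlike the submodular case I cannot directly subadditively bound $\Delta_{\mathbf{v}^o}$ by a sum of single-coordinate gains at a common point; instead I must split $\mathbf{v}^o$ into its constituent units, apply Lemma \ref{lemma:concave} term by term, and absorb the resulting factor $\frac{1}{\gamma}$ (which is exactly what converts $1-e^{-1}$ into $1-e^{-\gamma}$). Care is needed to check that every intermediate point $\mathbf{x}+\mathbf{v}_{j-1}+\mathbf{u}_1+\dots+\mathbf{u}_{i-1}$ dominates $\mathbf{x}+\mathbf{v}_{j-1}$, so that the hypothesis $\mathbf{x}\leq\mathbf{y}$ of Lemma \ref{lemma:concave} holds at each term; this is immediate since the $\mathbf{u}_i$ are nonnegative unit vectors.
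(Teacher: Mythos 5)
Your proposal is correct and follows essentially the same route as the paper's own proof: bound $\hat{B}(\mathcal{P},\mathbf{x}+\mathbf{v}^o)$ by $\hat{B}(\mathcal{P},\mathbf{x}+\mathbf{v}_{j-1}+\mathbf{v}^o)$ via monotonicity, telescope $\mathbf{v}^o$ into unit vectors, pull each marginal gain back to the common base point with a $1/\gamma$ loss via Lemma \ref{lemma:concave}, invoke the greedy choice, and iterate the recurrence $g^*-g_j \leq (1-\gamma/q)(g^*-g_{j-1})$ to reach $(1-(1-\gamma/q)^q) \geq 1-e^{-\gamma}$. You also correctly read the right-hand side of the statement as an increment at base point $\mathbf{x}$, which is how the paper's proof treats it.
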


\begin{skproof}
Denote $\mathbf{v}_i$ as the budget vector $\mathbf{v}$ after greedily selecting first $i$ unit vectors. The key of the proof comes from the following inequality:
\begin{align*}
&\hat{B}(\mathcal{P},\mathbf{x} + \mathbf{v}^o) - \hat{B}(\mathcal{P},\mathbf{x} + \mathbf{v}_{i+1}) \leq (1-\frac{\gamma}{q}) \Big(\hat{B}(\mathcal{P},\mathbf{x} + \mathbf{v}^o) - \hat{B}(\mathcal{P},\mathbf{x} + \mathbf{v}_{i})\Big)
\end{align*}
This inequality is proved by using the property of $\gamma$ from lemma \ref{lemma:concave} and the trait that $\hat{B}(\mathcal{P}, \mathbf{x})$ is monotone w.r.t $\mathbf{x}$. Using this inequality, we prove that
\begin{align*}
&\Delta_{\mathbf{v}}\hat{B}(\mathcal{P},\mathbf{x}) \geq (1-(1-\frac{\gamma}{q})^q) \Delta_{\mathbf{v}^o}\hat{B}(\mathcal{P},\mathbf{x}) \geq (1-e^{-\gamma}) \Delta_{\mathbf{v}^o}\hat{B}(\mathcal{P},\mathbf{x})
\end{align*}
in which the lemma follows.
\end{skproof}

\subsection{Sample size and Performance guarantee} \label{subsec:no_sampling}
We have proved the performance guarantee of the additional budget vector $\mathbf{v}$ to maximize $\hat{B}(\mathcal{P}, \mathbf{x} + \mathbf{v})$. The question now is: what is the size of $\mathcal{P}$ to bound the error between $\Delta_\mathbf{v} \hat{B}(\mathcal{P}, \mathbf{x})$ and $\Delta_\mathbf{v} B(\mathbf{x})$? In this part, we will answer this question. Then, putting together with the performance guarantee of selecting $\mathbf{v}$ on $\mathcal{P}$, we provide the performance guarantee of \texttt{SA}.

To find the minimum number of samples, we utilize the following Chernoff Bound theory. 


\begin{theorem}
(\textnormal{Chernoff Bound theorem} \cite{hoeffding1963probability}) Let $X_1,X_2,...X_n$ be random variables such that $a \leq X_i \leq b$ for all $i$. Let $X = \sum_{i=1}^n X_i$ and set $\mu = \mathbb{E}(X)$. Then for all $\epsilon > 0$, we have:
\begin{align}
\textnormal{Pr}[X \geq (1+\epsilon) \mu] \leq \exp(-\frac{2\epsilon^2 \mu^2}{n(b-a)^2})\\
\textnormal{Pr}[X \leq (1-\epsilon) \mu] \leq \exp(-\frac{\epsilon^2 \mu^2}{n(b-a)^2}) \label{equ:chernoff_minus} 
\end{align}
\end{theorem}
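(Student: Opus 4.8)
The plan is to recognize this statement as Hoeffding's inequality and reproduce its standard proof via the Chernoff (exponential-moment) method, under the implicit assumption that the $X_i$ are independent --- a hypothesis that is required for the product factorization below and that holds in the intended application, where the sample paths are drawn i.i.d. I would prove the upper-tail bound $\Pr[X \geq (1+\epsilon)\mu] \leq \exp(-2\epsilon^2\mu^2/(n(b-a)^2))$ directly; the stated lower-tail bound, which carries the weaker constant $\epsilon^2$ in place of $2\epsilon^2$, then follows immediately from the symmetric argument applied to $-X_i$ together with the trivial inequality $\exp(-2z) \leq \exp(-z)$ for $z \geq 0$.

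First I would set $t = \epsilon\mu$ and write $\mu_i = \mathbb{E}(X_i)$, then apply Markov's inequality to the exponentiated deviation: for any $s > 0$,
\begin{align*}
\Pr[X - \mu \geq t] = \Pr\big[e^{s(X-\mu)} \geq e^{st}\big] \leq e^{-st}\,\mathbb{E}\big[e^{s(X-\mu)}\big].
\end{align*}
By independence the moment-generating factor splits as $\mathbb{E}[e^{s(X-\mu)}] = \prod_{i=1}^n \mathbb{E}[e^{s(X_i - \mu_i)}]$, which reduces the whole problem to bounding the log-moment-generating function of a single centered, bounded random variable.

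The main obstacle --- and essentially the only non-mechanical step --- is Hoeffding's lemma: for a centered random variable $Y$ with $a \leq Y \leq b$ one has $\mathbb{E}[e^{sY}] \leq \exp(s^2(b-a)^2/8)$. I would establish it by using convexity of $u \mapsto e^{su}$ to bound $e^{sY}$ on $[a,b]$ by the secant line through the endpoints, taking expectations (the linear term vanishes since $\mathbb{E}[Y]=0$), and rewriting the result as $e^{\phi(\theta)}$ with $\theta = s(b-a)$. A short computation then shows $\phi(0)=\phi'(0)=0$ and $\phi''(\theta) \leq 1/4$ everywhere, so Taylor's theorem with Lagrange remainder gives $\phi(\theta) \leq \theta^2/8$. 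Applying this to each $Y = X_i - \mu_i$, for which $b-a$ is the common length, yields $\prod_{i=1}^n \mathbb{E}[e^{s(X_i-\mu_i)}] \leq \exp(s^2 n(b-a)^2/8)$.

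Finally I would assemble the pieces. Combining the previous displays gives $\Pr[X-\mu \geq t] \leq \exp(-st + s^2 n(b-a)^2/8)$, and minimizing the exponent over $s > 0$ by the choice $s = 4t/(n(b-a)^2)$ produces the bound $\exp(-2t^2/(n(b-a)^2))$. Substituting $t = \epsilon\mu$ yields the first claimed inequality, and the second follows by the symmetric argument and the constant-weakening remark noted above.
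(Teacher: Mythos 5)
The paper does not prove this statement at all---it imports it as a known result, citing Hoeffding (1963)---and your argument is precisely the standard Chernoff/Hoeffding proof (Markov's inequality applied to the exponentiated sum, factorization of the moment-generating function by independence, Hoeffding's lemma via convexity and a Taylor bound on the log-MGF, and optimization over $s$), all steps of which are correct, including the weakening from $2\epsilon^2$ to $\epsilon^2$ in the lower-tail bound. Your remark that independence of the $X_i$ must be assumed even though the theorem statement omits it is accurate and essential for the product factorization, and it does hold in the paper's application since the sample paths are drawn i.i.d.
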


Considering a path $p \in \mathcal{F}$, we have:
\begin{align*}
\rho(p) \geq \frac{1}{|S|} (\frac{1-\alpha}{\mathtt{d} - 1})^\mathtt{h} = \Omega(\mathtt{d}^{-\mathtt{h}} |S|^{-1})
\end{align*}
where $\mathtt{d}$ is the maximum out-going degree of a node in $G$. Therefore, for any single path $p$, $0 \leq \frac{\mathtt{R}(p,\mathbf{x})}{\rho(p)} \leq O(\mathtt{T} |S| \mathtt{d}^{\mathtt{h}})$

Denote $\mathbf{v}^*$ as an optimal solution that maximizes $\Delta_\mathbf{v} B(\mathbf{x})$. 
\begin{lemma}
Given $0<\epsilon_1,\delta_1 < 1$, with the number of sampling paths satisfies
\begin{align}
|\mathcal{P}| \geq \frac{\ln(1/\delta_1) \mathtt{T}^2 |S|^2 \mathtt{d}^{2\mathtt{h}}}{\epsilon_1^2 \Delta_{\mathbf{v}^*}^2 B(\mathbf{x})} 
\end{align}
the following condition is guaranteed:
\begin{align}
\textnormal{Pr}[\Delta_{\mathbf{v}^*} \hat{B}(\mathcal{P},\mathbf{x}) \geq (1-\epsilon_1) \Delta_{\mathbf{v}^*} B(\mathbf{x})] \geq 1 - \delta_1
\end{align}
\end{lemma}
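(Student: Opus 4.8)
The plan is to recognize $\Delta_{\mathbf{v}^*}\hat{B}(\mathcal{P},\mathbf{x})$ as an empirical average of i.i.d.\ bounded random variables and then apply the lower-tail Chernoff bound (Eq.~\ref{equ:chernoff_minus}), picking the sample size so that the deviation probability drops below $\delta_1$. Writing $l=|\mathcal{P}|$ and recalling that $p_1,\dots,p_l$ are drawn i.i.d.\ from $\rho$, I would set
\[
\Delta_{\mathbf{v}^*}\hat{B}(\mathcal{P},\mathbf{x}) = \frac{1}{l}\sum_{i=1}^{l} Y_i, \qquad Y_i = \frac{\mathtt{R}(p_i,\mathbf{x}+\mathbf{v}^*)-\mathtt{R}(p_i,\mathbf{x})}{\rho(p_i)},
\]
so that each $Y_i$ is an independent copy of a single random variable $Y$. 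It is worth stressing that $\mathbf{v}^*$ is fixed \emph{before} sampling---it is the maximizer of the deterministic quantity $\Delta_{\mathbf{v}}B(\mathbf{x})$---so the $Y_i$ are genuinely i.i.d.\ and no union bound over candidate vectors is needed here.

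I would first compute $\mathbb{E}[Y]$ by the same linearity argument used to prove unbiasedness, obtaining $\mathbb{E}[Y]=\sum_{p\in\mathcal{J}}\big(\mathtt{R}(p,\mathbf{x}+\mathbf{v}^*)-\mathtt{R}(p,\mathbf{x})\big)=\Delta_{\mathbf{v}^*}B(\mathbf{x})$, and hence $\mu:=\mathbb{E}\big[\sum_{i=1}^{l}Y_i\big]=l\,\Delta_{\mathbf{v}^*}B(\mathbf{x})$. I would then bound the range: since $\mathtt{R}(p,\cdot)$ is monotone in the budget and capped at $\mathtt{T}$, the numerator lies in $[0,\mathtt{T}]$, and combined with the established lower bound $\rho(p)=\Omega(\mathtt{d}^{-\mathtt{h}}|S|^{-1})$ for feasible $p$ this yields $0\le Y_i\le O(\mathtt{T}|S|\mathtt{d}^{\mathtt{h}})$. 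Thus the Chernoff parameters are $a=0$, $b=O(\mathtt{T}|S|\mathtt{d}^{\mathtt{h}})$, and $n=l$.

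Substituting these into Eq.~\ref{equ:chernoff_minus} gives
\[
\Pr\Big[\sum_{i=1}^{l} Y_i \le (1-\epsilon_1)\mu\Big] \le \exp\Big(-\frac{\epsilon_1^2\mu^2}{l\,b^2}\Big) = \exp\Big(-\frac{\epsilon_1^2\, l\,\big(\Delta_{\mathbf{v}^*}B(\mathbf{x})\big)^2}{b^2}\Big).
\]
Forcing the exponent to be at least $\ln(1/\delta_1)$ and inserting $b^2=O(\mathtt{T}^2|S|^2\mathtt{d}^{2\mathtt{h}})$ reproduces exactly the threshold $l\ge \frac{\ln(1/\delta_1)\mathtt{T}^2|S|^2\mathtt{d}^{2\mathtt{h}}}{\epsilon_1^2(\Delta_{\mathbf{v}^*}B(\mathbf{x}))^2}$ stated in the lemma; dividing the event $\sum_{i}Y_i\le(1-\epsilon_1)\mu$ through by $l$ rewrites it as $\Delta_{\mathbf{v}^*}\hat{B}(\mathcal{P},\mathbf{x})\le(1-\epsilon_1)\Delta_{\mathbf{v}^*}B(\mathbf{x})$, and taking the complement delivers the stated probability bound.

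The argument is mostly bookkeeping, so the hard part is keeping the pieces aligned rather than any deep inequality: one must use the correct (lower) tail, treat $\mathbf{v}^*$ as deterministic so the summands stay i.i.d., and reconcile the constant suppressed in $b=O(\mathtt{T}|S|\mathtt{d}^{\mathtt{h}})$ with the $O$ hidden in the required sample size. The single load-bearing estimate is the range bound $Y_i\le O(\mathtt{T}|S|\mathtt{d}^{\mathtt{h}})$, which relies entirely on the worst-case guarantee $\rho(p)\ge\frac{1}{|S|}\big(\frac{1-\alpha}{\mathtt{d}-1}\big)^{\mathtt{h}}$ of the self-avoiding walk; if $\rho$ could be arbitrarily small on some feasible path, the range---and therefore the sample size---would blow up, so this lower bound is precisely what keeps the claim finite.
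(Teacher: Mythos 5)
Your proposal is correct and matches the paper's approach: the paper dismisses this lemma with the single sentence that it is ``trivially derived from Eq.~(\ref{equ:chernoff_minus})'', and your argument is precisely the spelled-out version of that derivation --- writing $\Delta_{\mathbf{v}^*}\hat{B}$ as an i.i.d.\ empirical mean, using the range bound $0 \le Y_i \le O(\mathtt{T}|S|\mathtt{d}^{\mathtt{h}})$ from the sampling probability lower bound, and applying the lower-tail Chernoff inequality. Your observations that $\mathbf{v}^*$ is deterministic (so no union bound is needed here, unlike in the following lemma) and that the $\rho(p)$ lower bound is the load-bearing estimate are both accurate.
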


This lemma is trivially derived from Eq. \ref{equ:chernoff_minus}.

\begin{lemma}
Given $0 < \epsilon_2, \delta_2 < 1$, with the number of sampling paths satisfies
\begin{align*}
|\mathcal{P}| \geq \frac{\ln(\binom{n+q}{q}/\delta_2) \mathtt{T}^2 |S|^2 \mathtt{d}^{2\mathtt{h}}}{2(1-e^{-\gamma})^2 \epsilon_2^2 \Delta_{\mathbf{v}^*}^2 B(\mathbf{x})}
\end{align*}
we have $\Delta_{\mathbf{v}_q} \hat{D}(x) \leq \Delta_{\mathbf{v}_q} D(\mathbf{x}) + (1-e^{-\gamma})\epsilon_2 \Delta_{\mathbf{v}^*} D(\mathbf{x})$ for all budget vectors $\mathbf{v}_k$, which satisfy $||\mathbf{v}_k|| = q$, with probability at least $1-\delta_2$
\end{lemma}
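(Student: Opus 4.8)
The plan is to reduce the claim to a one-vector concentration bound and then enforce it uniformly by a union bound. (I read the statement's $\hat{D},D$ as the estimator $\hat{B}$ and its mean $B$, and the denominator's $\Delta_{\mathbf{v}^*}^2 B(\mathbf{x})$ as $\big(\Delta_{\mathbf{v}^*}B(\mathbf{x})\big)^2$, matching the form of the preceding lemma.) First, fix one admissible increment $\mathbf{v}_k$ with $\|\mathbf{v}_k\|=q$. With the sampled paths $\mathcal{P}=\{p_1,\dots,p_l\}$ I would write the estimated marginal gain as an empirical mean,
\[
Y_i = \frac{\mathtt{R}(p_i,\mathbf{x}+\mathbf{v}_k)-\mathtt{R}(p_i,\mathbf{x})}{\rho(p_i)}, \qquad \Delta_{\mathbf{v}_k}\hat{B}(\mathcal{P},\mathbf{x}) = \frac{1}{l}\sum_{i=1}^{l} Y_i .
\]
Because $\mathtt{R}(p,\cdot)$ is monotone and capped at $\mathtt{T}$, each $Y_i$ is nonnegative, and the earlier bound $\rho(p)=\Omega(\mathtt{d}^{-\mathtt{h}}|S|^{-1})$ gives the range $0\le Y_i\le O(\mathtt{T}|S|\mathtt{d}^{\mathtt{h}})$. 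Since $\hat{B}$ is unbiased for $B$, so are its marginals: $\mathbb{E}[\Delta_{\mathbf{v}_k}\hat{B}(\mathcal{P},\mathbf{x})]=\Delta_{\mathbf{v}_k}B(\mathbf{x})$.

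Next I would apply the upper-tail Chernoff bound to $X=\sum_i Y_i$, with range $(b-a)=O(\mathtt{T}|S|\mathtt{d}^{\mathtt{h}})$ and mean $\mu=l\,\Delta_{\mathbf{v}_k}B(\mathbf{x})$. Targeting the additive slack $c=(1-e^{-\gamma})\epsilon_2\,\Delta_{\mathbf{v}^*}B(\mathbf{x})$ amounts to choosing the multiplicative parameter $\epsilon=c/\Delta_{\mathbf{v}_k}B(\mathbf{x})$ in the theorem; then $\epsilon\mu=lc$, and the exponent $2\epsilon^2\mu^2/(l(b-a)^2)$ collapses to $2lc^2/(b-a)^2$, linear in $l$. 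Plugging in $c^2$ and $(b-a)^2$, the per-vector failure probability $\Pr[\Delta_{\mathbf{v}_k}\hat{B}(\mathbf{x})\ge \Delta_{\mathbf{v}_k}B(\mathbf{x})+c]$ is at most $\exp\!\big(-2lc^2/(b-a)^2\big)$, and the stated lower bound on $|\mathcal{P}|$ is exactly what drives this below $\delta_2/\binom{n+q}{q}$. This is the same algebra as the preceding $(\epsilon_1,\delta_1)$ lemma, only with the factor-$2$ upper tail and the reduced failure budget.

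Finally I would union-bound over all candidate increments. The admissible $\mathbf{v}_k$ are nonnegative integer vectors of $\ell_1$-norm $q$; by stars-and-bars their number is at most $\binom{n+q}{q}$, which is precisely the term inside the logarithm of the sample-size bound. Summing the per-vector failure probability $\delta_2/\binom{n+q}{q}$ over all of them keeps the total failure at most $\delta_2$, so $\Delta_{\mathbf{v}_k}\hat{B}(\mathbf{x})\le\Delta_{\mathbf{v}_k}B(\mathbf{x})+(1-e^{-\gamma})\epsilon_2\Delta_{\mathbf{v}^*}B(\mathbf{x})$ holds simultaneously for every such $\mathbf{v}_k$ with probability at least $1-\delta_2$.

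The main obstacle is that the guarantee must be uniform, not pointwise: the greedy routine of Section~\ref{subsec:greedy_sa} selects $\mathbf{v}$ as a function of the very samples $\mathcal{P}$ used to form $\hat{B}$, so concentration cannot be invoked for the chosen $\mathbf{v}$ after the fact. This is what forces the enumeration of all $\binom{n+q}{q}$ candidates and the corresponding $\ln\binom{n+q}{q}$ inflation of the sample size; getting that cardinality count right and lining up the additive slack $(1-e^{-\gamma})\epsilon_2\Delta_{\mathbf{v}^*}B(\mathbf{x})$ with the multiplicative form of the quoted Chernoff theorem are the only genuinely delicate points, the rest being the routine bounding already used for the $(\epsilon_1,\delta_1)$ lemma.
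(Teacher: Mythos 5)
Your proposal is correct and follows essentially the same route as the paper's proof: a single-vector upper-tail Chernoff bound obtained by recasting the additive slack $(1-e^{-\gamma})\epsilon_2\Delta_{\mathbf{v}^*}B(\mathbf{x})$ as a multiplicative deviation, followed by a union bound over the $\binom{n+q}{q}$ candidate increments, with the stated sample size chosen to push the total failure probability below $\delta_2$. Your added details — the explicit empirical-mean decomposition of the marginal gain, the range bound via $\rho(p)=\Omega(\mathtt{d}^{-\mathtt{h}}|S|^{-1})$, and the remark that uniformity is forced because the greedy selection depends on the samples — are all consistent with (and slightly more careful than) the paper's argument.
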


\begin{proof}
Let us consider an arbitrary budget vector $\mathbf{v}_q$, $||\mathbf{v}_q|| = q$
\begin{align*}
&\textnormal{Pr}\Big[\Delta_{\mathbf{v}_q} \hat{B}(\mathcal{P}, \mathbf{x}) \geq \Delta_{\mathbf{v}_q} B(\mathbf{x}) + (1-e^{-\gamma})\epsilon_2 \Delta_{\mathbf{v}^*} B(\mathbf{x})\Big] \\
& \quad = \textnormal{Pr}\Big[\Delta_{\mathbf{v}_q} \hat{B}(\mathcal{P},x) \geq \Delta_{\mathbf{v}_q} B(\mathbf{x}) \Big(1 + (1-e^{-\gamma})\epsilon_2 \frac{\Delta_{\mathbf{v}^*} B(\mathbf{x})}{\Delta_{\mathbf{v}_q} B(\mathbf{x})}\Big)\Big] \\
& \quad \leq \exp \Big(\frac{2(1-e^{-\gamma})^2 \epsilon_2^2 |\mathcal{P}| \Delta_{\mathbf{v}^*}^2 B(\mathbf{x})}{\mathtt{T}^2 \mathtt{d}^{2\mathtt{h}}}\Big)
\end{align*}
Using the union bound theory, to let $\Delta_{\mathbf{v}_q} \hat{B}(\mathcal{P},x) \leq \Delta_{\mathbf{v}_q} B(\mathbf{x}) + (1-e^{-\gamma})\epsilon_2 \Delta_{\mathbf{x}^*} B(\mathbf{x})$  satisfy for any budget vector $\mathbf{v}_q$, $||\mathbf{v}_q|| = k$, we have
\begin{align*}
&\textnormal{Pr}\Big[\Delta_{\mathbf{v}_q} \hat{B}(\mathcal{P}, \mathbf{x}) \geq \Delta_{\mathbf{v}_q} B(\mathbf{x}) + (1-e^{-\gamma})\epsilon_2 \Delta_{\mathbf{v}^*} B(\mathbf{x}) \Big] \leq \binom{n+q}{q} \exp \Big(\frac{2(1-e^{-\gamma})^2 \epsilon_2^2 |\mathcal{P}| \Delta_{\mathbf{v}^*}^2 B(\mathbf{x})}{\mathtt{T}^2 |S|^2 \mathtt{d}^{2\mathtt{h}}} \Big)
\end{align*}
The lemma follows by letting $\binom{n+q}{q} \exp \Big(\frac{2(1-e^{-\gamma})^2 \epsilon_2^2 |\mathcal{P}| \Delta_{\mathbf{v}^*}^2 B(\mathbf{x})}{\mathtt{T}^2 |S|^2 \mathtt{d}^{2\mathtt{h}}} \Big) \leq \delta_2$
\end{proof}

\begin{lemma} \label{cor:greedy_sa}
Given $0 \leq \epsilon_1,\epsilon_2,\delta_1,\delta_2 \leq 1$, let $\epsilon \geq \epsilon_1 + \epsilon_2$ and $\delta \geq \delta_1 + \delta_2$. If the number of sampling paths is at least
\begin{align}
\frac{\mathtt{T}^2 |S|^2 \mathtt{d}^{\mathtt{2\mathtt{h}}}}{\Delta_{\mathbf{v}^*}^2 B(\mathbf{x})} \max\Big( \frac{\ln(1/\delta_1)}{\epsilon_1^2}, \frac{\ln(\binom{n+q}{q}/\delta_2)}{2(1-e^{-\gamma})^2\epsilon_2^2} \Big) \label{equ:first_thres}
\end{align}
the greedy algorithm on $\mathcal{P}$ returns a budget vector $\mathbf{v}$ that guarantees
\begin{align*}
\textnormal{Pr}[\Delta_{\mathbf{v}} B(\mathbf{x}) \geq (1-e^{-\gamma})(1-\epsilon) \Delta_{\mathbf{v}^*}B(\mathbf{x})] \geq 1 - \delta
\end{align*}
\end{lemma}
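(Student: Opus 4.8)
The plan is to prove the statement by a single union bound over the two concentration events from the two preceding lemmas, followed by a chain of inequalities that transfers the greedy guarantee on the estimator $\hat{B}$ back to the true objective $B$. Since the sample-size threshold in Eq. \ref{equ:first_thres} is the maximum of the two individual thresholds, I would first observe that both the lower-bound event (the estimate $\Delta_{\mathbf{v}^*}\hat{B}(\mathcal{P},\mathbf{x})$ does not drop below $(1-\epsilon_1)\Delta_{\mathbf{v}^*}B(\mathbf{x})$) and the uniform upper-bound event (for every $\mathbf{v}_q$ with $||\mathbf{v}_q||=q$ the estimate $\Delta_{\mathbf{v}_q}\hat{B}(\mathcal{P},\mathbf{x})$ does not exceed $\Delta_{\mathbf{v}_q}B(\mathbf{x})+(1-e^{-\gamma})\epsilon_2\Delta_{\mathbf{v}^*}B(\mathbf{x})$) hold simultaneously with probability at least $1-\delta_1-\delta_2\geq 1-\delta$. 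I would condition on both events for the rest of the argument.

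Next I would assemble the chain. Let $\mathbf{v}$ be the greedy output ($||\mathbf{v}||=q$) and let $\mathbf{v}^o$ be the maximizer of $\hat{B}(\mathcal{P},\mathbf{x}+\cdot)$ over $||\cdot||\leq q$. Lemma \ref{lem:greedy_sampling} gives $\Delta_{\mathbf{v}}\hat{B}(\mathcal{P},\mathbf{x})\geq(1-e^{-\gamma})\Delta_{\mathbf{v}^o}\hat{B}(\mathcal{P},\mathbf{x})$. Because $\mathbf{v}^*$ is itself a feasible candidate for the $\hat{B}$-maximization ($||\mathbf{v}^*||\leq q$, matching the per-iteration budget), optimality of $\mathbf{v}^o$ yields $\Delta_{\mathbf{v}^o}\hat{B}(\mathcal{P},\mathbf{x})\geq\Delta_{\mathbf{v}^*}\hat{B}(\mathcal{P},\mathbf{x})$, and the lower-bound event then gives $\Delta_{\mathbf{v}^*}\hat{B}(\mathcal{P},\mathbf{x})\geq(1-\epsilon_1)\Delta_{\mathbf{v}^*}B(\mathbf{x})$. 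Stringing these together produces $\Delta_{\mathbf{v}}\hat{B}(\mathcal{P},\mathbf{x})\geq(1-e^{-\gamma})(1-\epsilon_1)\Delta_{\mathbf{v}^*}B(\mathbf{x})$. Finally, applying the uniform upper-bound event to the particular vector $\mathbf{v}$ gives $\Delta_{\mathbf{v}}B(\mathbf{x})\geq\Delta_{\mathbf{v}}\hat{B}(\mathcal{P},\mathbf{x})-(1-e^{-\gamma})\epsilon_2\Delta_{\mathbf{v}^*}B(\mathbf{x})$, and substituting the previous bound yields $\Delta_{\mathbf{v}}B(\mathbf{x})\geq(1-e^{-\gamma})(1-\epsilon_1-\epsilon_2)\Delta_{\mathbf{v}^*}B(\mathbf{x})\geq(1-e^{-\gamma})(1-\epsilon)\Delta_{\mathbf{v}^*}B(\mathbf{x})$, using $\epsilon\geq\epsilon_1+\epsilon_2$.

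The step I expect to be most delicate is pointing the two concentration inequalities in the right directions and applying each to the right vector. The lower-bound event must be invoked at the \emph{fixed} optimum $\mathbf{v}^*$, so that the estimate of the benchmark is not too pessimistic, whereas the upper-bound event must be invoked at the \emph{data-dependent} greedy output $\mathbf{v}$, so that its true value is not much below its estimate. Since $\mathbf{v}$ is itself random and correlated with the samples $\mathcal{P}$, a pointwise Chernoff bound would not control it; this is precisely why the second lemma is stated uniformly over all weight-$q$ vectors with the $\binom{n+q}{q}$ union factor. Once this bookkeeping is respected, the remaining work is routine: the two error terms $\epsilon_1,\epsilon_2$ are absorbed additively into $\epsilon$ and the failure probabilities $\delta_1,\delta_2$ into $\delta$ via the opening union bound.
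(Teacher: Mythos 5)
Your proposal is correct and follows essentially the same route as the paper's proof: the same three-link chain (uniform upper-bound event at the greedy output $\mathbf{v}$, Lemma \ref{lem:greedy_sampling} plus optimality of $\mathbf{v}^o$ to pass to $\mathbf{v}^*$, and the lower-bound event at the fixed $\mathbf{v}^*$), with the two error terms absorbed into $\epsilon$ and the failure probabilities into $\delta$. If anything your write-up is slightly more careful than the paper's, which elides the intermediate step through $\mathbf{v}^o$ and combines the failure probabilities as $(1-\delta_1)(1-\delta_2)$ as if the two events were independent, whereas your union bound $1-\delta_1-\delta_2\geq 1-\delta$ needs no such assumption.
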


\begin{proof}
For the given number of sample paths, we have
\begin{align}
\Delta_{\mathbf{v}} B(\mathbf{x}) & \geq \Delta_{\mathbf{v}} \hat{B}(\mathcal{P},\mathbf{x}) - (1-e^{-\gamma}) \epsilon_2 \Delta_{\mathbf{v}^*} B(\mathbf{x}) \label{equ:first_equ} \\
& \geq (1-e^{-\gamma}) \Delta_{\mathbf{v}^*} \hat{B}(\mathcal{P},\mathbf{x}) - (1-e^{-\gamma}) \epsilon_2 \Delta_{\mathbf{x}^*} B(\mathbf{x}) \\
& \geq (1-e^{-\gamma})(1-\epsilon_1) \Delta_{\mathbf{v}^*} B(\mathbf{x}) - (1-e^{-\gamma}) \epsilon_2 \Delta_{\mathbf{v}^*} B(\mathbf{x}) \label{equ:second_equ} \\
& \geq (1-e^{-\gamma})(1-\epsilon) \Delta_{\mathbf{v}^*} B(\mathbf{x})
\end{align}
The inequality (\ref{equ:first_equ}) happens with probability $1-\delta_1$ while the inequality (\ref{equ:second_equ}) happens with probability $1-\delta_1$. Overall $\Delta_{\mathbf{v}} B(\mathbf{x}) \geq (1-e^{-\gamma})(1-\epsilon) \Delta_{\mathbf{v}^*}B(\mathbf{x})$ with probability at least $(1-\delta_1)(1-\delta_2) \geq 1-\delta$.
\end{proof}


There is a drawback of the threshold (\ref{equ:first_thres}): it depends on $\Delta_{\mathbf{v}^*} B(\mathbf{x})$, which is untraceable. However, we can use the simple lower bound of $\Delta_{\mathbf{v}^*} B(\mathbf{x})$ as follows: As long as the algorithm has not terminated, there should be at least a path $p \in \mathcal{F}$ such that the length of $p$ is at most $\mathtt{T} - 1$. So the marginal gain of the optimal solution should be at least $1$. Therefore, we have the following threshold, which is the sufficient number of sample paths to bound the error between approximation ratio of $\mathbf{v}$ on $\Delta_\mathbf{v} \hat{B}(\mathcal{P}, \mathbf{x})$ and $\Delta_\mathbf{v} B(\mathbf{x})$.
\begin{align*}
\mathcal{N}(q,\epsilon,\delta) = \min_{\epsilon_1; \delta_1 } \Bigg(\mathtt{T}^2 |S|^2 \mathtt{d}^{2\mathtt{h}} \max\Big( \frac{\ln(1/\delta_1)}{\epsilon_1^2}, \frac{\ln(\binom{n+q}{q}/(\delta - \delta_1))}{2(1-e^{-\gamma})^2(\epsilon - \epsilon_1)^2} \Big)\Bigg)
\end{align*}





\begin{theorem} \label{theorem:sampling_approx}
Given $0 < \epsilon, \delta < 1$, by generating $\mathcal{N}(k,\epsilon,\frac{\delta}{||\mathtt{b}||})$ of sample paths in each sampling iteration, \texttt{SA} returns a solution within $O(\frac{\mathtt{h} \ln \mathtt{d} + \ln \mathtt{T}}{\gamma (1-e^{-\gamma})(1-\epsilon)})$ factor of optimum to the \texttt{QoSD} instance with probability at least $1-\delta$. 
\end{theorem}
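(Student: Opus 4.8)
The plan is to track the gap $\Phi_i = |\mathcal{F}|\mathtt{T} - B(\mathbf{x}_i)$ between the fully-blocked value $|\mathcal{F}|\mathtt{T}$ and the blocking metric of the iterate $\mathbf{x}_i$ held before the $i^{\textnormal{th}}$ iteration of Alg.~\ref{alg:sampling_solution}, exactly as in the proofs of Theorems~\ref{theorem:greedy_approx} and~\ref{theorem:trunk_approx}. Since each $\mathtt{r}(p,\cdot)$ is integer-valued and capped at $\mathtt{T}$, the algorithm terminates precisely when $\Phi_i = 0$, and before the last iteration at least one feasible path still has length at most $\mathtt{T}-1$, so $\Phi_i \geq 1$. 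I would show that each iteration shrinks $\Phi_i$ by a constant multiplicative factor with high probability, bound the number of iterations by a logarithm of the initial gap, multiply by the per-iteration budget $q$ to bound $||\mathbf{x}||$, and finally collect the per-iteration failure probabilities by a union bound.

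The core step is a single-iteration inequality relating the best achievable size-$q$ marginal gain to the current gap. Let $\mathbf{x}^*$ be an optimal solution, so $||\mathbf{x}^*|| = \mathtt{OPT}$. Because $\mathbf{x}_i \vee \mathbf{x}^* \geq \mathbf{x}^*$ and $\mathbf{x}^*$ blocks every feasible path, $B(\mathbf{x}_i \vee \mathbf{x}^*) = |\mathcal{F}|\mathtt{T}$, and since $\mathbf{x}_i + (\mathbf{x}^*/\mathbf{x}_i) = \mathbf{x}_i \vee \mathbf{x}^*$ we get $\Delta_{\mathbf{x}^*/\mathbf{x}_i} B(\mathbf{x}_i) \geq \Phi_i$ with $||\mathbf{x}^*/\mathbf{x}_i|| \leq \mathtt{OPT}$. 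I would split $\mathbf{x}^*/\mathbf{x}_i$ into $r = \lceil \mathtt{OPT}/q \rceil$ pieces $\mathbf{z}_1,\dots,\mathbf{z}_r$, each with $||\mathbf{z}_j|| \leq q$, telescope $\Delta_{\mathbf{x}^*/\mathbf{x}_i}B(\mathbf{x}_i) = \sum_{j} \Delta_{\mathbf{z}_j} B(\mathbf{x}_i + \mathbf{z}_1 + \cdots + \mathbf{z}_{j-1})$, and apply Lemma~\ref{lemma:concave_3} (with $\mathbf{x}_i \leq \mathbf{x}_i + \mathbf{z}_1 + \cdots + \mathbf{z}_{j-1}$) to pull each term back to $\mathbf{x}_i$ at the cost of a factor $\gamma$, giving $\Delta_{\mathbf{z}_j}B(\mathbf{x}_i + \mathbf{z}_1 + \cdots + \mathbf{z}_{j-1}) \leq \gamma^{-1}\Delta_{\mathbf{z}_j}B(\mathbf{x}_i)$. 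Since $\mathbf{v}^*$ maximizes the marginal gain over all vectors of size at most $q$, each $\Delta_{\mathbf{z}_j}B(\mathbf{x}_i) \leq \Delta_{\mathbf{v}^*} B(\mathbf{x}_i)$, so $\Phi_i \leq (r/\gamma)\,\Delta_{\mathbf{v}^*}B(\mathbf{x}_i)$, i.e. $\Delta_{\mathbf{v}^*} B(\mathbf{x}_i) \geq (\gamma/r)\,\Phi_i$.

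Combining this with Lemma~\ref{cor:greedy_sa}, the sampled greedy step $\mathbf{v}$ satisfies, with probability at least $1-\delta/||\mathtt{b}||$, $\Delta_{\mathbf{v}}B(\mathbf{x}_i) \geq (1-e^{-\gamma})(1-\epsilon)\,\Delta_{\mathbf{v}^*}B(\mathbf{x}_i) \geq \frac{\gamma(1-e^{-\gamma})(1-\epsilon)}{r}\Phi_i$, hence $\Phi_{i+1} \leq \big(1 - \tfrac{\gamma(1-e^{-\gamma})(1-\epsilon)}{r}\big)\Phi_i$. Iterating and using $\Phi_0 \leq |\mathcal{F}|\mathtt{T}$ with $|\mathcal{F}|$ bounded by the number of bounded-length simple paths, $\ln(|\mathcal{F}|\mathtt{T}) = O(\mathtt{h}\ln\mathtt{d} + \ln\mathtt{T})$, the number of successful iterations needed to drive $\Phi_i$ below $1$ is $O\big(\tfrac{r(\mathtt{h}\ln\mathtt{d} + \ln\mathtt{T})}{\gamma(1-e^{-\gamma})(1-\epsilon)}\big)$; multiplying by the budget $q$ added per iteration and using $rq = \lceil \mathtt{OPT}/q\rceil q = O(\mathtt{OPT})$ makes $q$ cancel and yields $||\mathbf{x}|| = O\big(\tfrac{\mathtt{OPT}(\mathtt{h}\ln\mathtt{d}+\ln\mathtt{T})}{\gamma(1-e^{-\gamma})(1-\epsilon)}\big)$, which is the claimed ratio. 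For the success probability, the total budget is at most $||\mathtt{b}||$ and each iteration adds at least one unit, so there are at most $||\mathtt{b}||$ iterations; since each draws $\mathcal{N}(q,\epsilon,\delta/||\mathtt{b}||)$ samples and thus fails with probability at most $\delta/||\mathtt{b}||$, a union bound gives overall success probability at least $1-\delta$.

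The main obstacle is the core single-iteration inequality: because $B$ is neither submodular nor supermodular, I cannot bound the optimal size-$q$ marginal gain against the gap by the usual submodular averaging argument, and must instead route the telescoped decomposition of $\mathbf{x}^*/\mathbf{x}_i$ through the concave-ratio estimate of Lemma~\ref{lemma:concave_3}, which is the source of the $1/\gamma$ factor in the ratio. A secondary subtlety is verifying that $q$ genuinely cancels, which requires $rq = \lceil\mathtt{OPT}/q\rceil q = O(\mathtt{OPT})$ (valid up to a constant factor when $q = O(\mathtt{OPT})$), and then controlling the compounding sampling error across the up-to-$||\mathtt{b}||$ iterations via the calibrated per-iteration confidence $\delta/||\mathtt{b}||$ so that the union bound still closes at $1-\delta$.
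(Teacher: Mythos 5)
Your proposal is correct and follows essentially the same route as the paper's own proof: bound the gap $|\mathcal{F}|\mathtt{T}-B(\mathbf{x}_i)$ by the marginal gain of the residual optimum, split that residual into $\lceil\mathtt{OPT}/q\rceil$ pieces of size at most $q$, pull the telescoped gains back to $\mathbf{x}_i$ via the concave-ratio lemma to lower-bound $\Delta_{\mathbf{v}^*}B(\mathbf{x}_i)$, combine with Lemma~\ref{cor:greedy_sa} to get the per-iteration geometric contraction, and close with the gap-at-least-one termination argument and a union bound over at most $\|\mathtt{b}\|$ iterations. The only cosmetic differences are that you average over the pieces rather than selecting the maximizing one and use $\mathbf{x}^*/\mathbf{x}_i$ in place of the abstract residual optimum $\mathbf{x}^o$, both of which are equivalent to what the paper does.
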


\begin{skproof}
Denote $\mathbf{x}^o = \sum_i^s \mathbf{u}_i$ as an optimal solution, which is in addition to $\mathbf{x}$ to block all paths in $\mathcal{F}$ ($\mathbf{u}_i$ is a unit vector). Let $\mathbf{v}$ be a budget vector we get from greedy selection on the sample set $\mathcal{P}$. The key of our proof is that

\begin{align*}
\Delta_{\mathbf{v}} B(\mathbf{x}) \geq \frac{\gamma q}{||\mathbf{x}^o||} (1-e^{-\gamma}) (1-\epsilon) \Delta_{\mathbf{x}^o} B(\mathbf{x}) 
\end{align*}
This is proved by the finding that there exists a budget vector $\mathbf{w}$ such that $||\mathbf{w}|| \leq q$ and $\Delta_{\mathbf{w}} B(\mathbf{x}) \geq \frac{\gamma q}{||\mathbf{x}^o||} \Delta_{\mathbf{x}^o} B(\mathbf{x})$.

Therefore, we observe that: after each sampling iteration, the gap between $|\mathcal{F}|\mathtt{T}$ and $B(\mathbf{x})$ shrinks by a factor at least $(1 - \frac{k\gamma}{\mathtt{OPT}} (1-e^{-\gamma})(1-\epsilon))$ with probability at least $1 - \frac{\delta}{||\mathtt{b}||}$.

Furthermore, since there should exist at least a feasible path $p \in \mathcal{P}$ such that $\mathtt{r}(p, \mathbf{x}) \leq \mathtt{T} - 1$ before the final sampling iteration, we prove that the number of iterations is upper bounded by $O(\frac{\ln \mathtt{T} + \mathtt{h} \ln \mathtt{d}}{q \gamma (1-e^{-\gamma})(1-\epsilon)}) \mathtt{OPT}$. Since in each iteration, a budget vector $\mathbf{v}$, $||\mathbf{v}|| \leq q$, is added into solution, out final solution guarantees $O(\frac{\ln \mathtt{T} + \mathtt{h} \ln \mathtt{d}}{\gamma (1-e^{-\gamma})(1-\epsilon)})$ approximation ratio with probability at least $1-\delta$.
\end{skproof}

Interestingly, the approximation ratio of \texttt{SA} does not depend on $q$. So whatever the value of $q$ is, the result of \texttt{SA} always has the same upper bound, which means a large value of $q$ could reduce the number of sampling iterations but the number of sample paths in each iteration would increase as the trade-off.

\section{Linear Weight Functions} \label{sec:lr}

Having considered approximation algorithms to \texttt{QoSD}, we now propose a solution, called \textit{Linear Rounding} (\texttt{LR}), for the case where the edge weight functions are linear. \texttt{LR} obtains $O(\mathtt{h} \log n)$ approximation guarantee, which is the best ratio compared among all the proposed solutions.

For each $e \in E$, the weight function of $e$ is represented as $f_e(x) = \beta_e x + \alpha_e$, where $\beta_e, \alpha_e \in \mathbb{Z}^+$. Denote $\beta = \max_e \beta_e$. The \texttt{QoSD} instance can be solved by the following Integer Programming.

\begin{align}
\min & \quad \sum_{e \in E} x_e & \label{eqn:IP} \\
\text{s.t. } & \quad \sum_{e \in p} (\beta_e x_e + \alpha_e) \geq \mathtt{T} && \forall  p \in \mathcal{F} \label{con:path} \\
& \quad x_e \leq b_e && \forall e \in E \label{con:max_cost}\\
& \quad x_e \in \mathbb{Z}^+ \cup \{0\} && \forall e \in E \label{con:qos_integer}
\end{align}
This IP has a simple linear relaxation by replacing constraint (\ref{con:qos_integer}) with:
\begin{align}
x_e \in \mathbb{R}^+ \cup \{0\} && \forall e \in E
\end{align}




Although constructing this relaxation maybe intractable due to the extremely large size of $\mathcal{F}$, this LP still can be solved in polynomial time using \textit{ellipsoid method} with a simple separation oracle similar to Multicut problem \cite{vazirani2013approximation}. 

Denote the vector $\mathbf{x}^\prime = \{x_1^\prime, ...x_m^\prime\}$ as the optimal solution to the LP relaxation, $x_e^\prime$ can be a real number. The problem now is how to obtain a discrete solution $\mathbf{x}$ from $\mathbf{x}^\prime$ and what approximation guarantee $\mathbf{x}$ provides? To do so, we applied the randomized rounding technique as follows: Given an edge $e$, if $x_e^\prime$ is an integer, let $x_e = x_e^\prime$. Otherwise, denote $\rho_e = x_e^\prime - \floor{x_e^\prime}$ and given $\eta$, which would be defined later, then:
\begin{itemize}
\item If $\eta \rho_e \geq 1$, $x_e = \ceil{x_e^\prime}$. Let $y_e = x_e$
\item If $\eta \rho_e < 1$, $x_e = \ceil{x_e^\prime}$ with probability $\eta \rho_e$ and $\floor{x_e^\prime}$ otherwise. Let $y_e = \floor{x_e^\prime}$
\end{itemize}
\texttt{LR} is fully presented in Alg. \ref{alg:linear_rounding}.
\begin{algorithm}[t]
	\caption{Linear Rounding algorithm (\texttt{LR})}
    \label{alg:linear_rounding}
	\begin{flushleft}
	\textbf{Input} $G, S, \mathtt{T}, \mathtt{b}, f, \delta$\\
	\textbf{Output} cost vector $\mathbf{x} = \{x_1,..x_m\}$
	\end{flushleft}
    \begin{algorithmic}[1]
        \State $\mathbf{x}^\prime \leftarrow $ optimal solution of LP-relaxation.
        \State $\beta = \max_e \beta_e$; $\eta = \frac{\beta}{1-\exp(-\beta)} (\ln \frac{n^\mathtt{h}}{\delta} + 1)$
        \For{each $e \in E$}
        	\If{$x_e^\prime$ is a integer}
            	\State $x_e = x_e^\prime$
            \Else
            	\State $\rho_e = x_e^\prime - \floor{x_e^\prime}$
                \State $x_e = \ceil{x_e^\prime}$ with probability $\eta \rho_e$; $\floor{x_e^\prime}$ otherwise.
            \EndIf
        \EndFor
     \end{algorithmic}
     \begin{flushleft}
     	\textbf{Return } $\mathbf{x}$
     \end{flushleft}
\end{algorithm}

Consider a path $p \in \mathcal{F}$, it is trivial that $\mathbf{x}$ will block $p$ if $\sum_{e \in p} f_e(y_e) \geq \mathtt{T}$. The question is whether $\mathbf{x}$ can block $p$ if $\sum_{e \in p} f_e(y_e) < \mathtt{T}$? Denote:
\begin{align*}
T_p = \mathtt{T} - \sum_{e \in p} f_e(y_e) \\
\mathcal{E}_p = \{e \in p; y_e < x_e\}
\end{align*}
So:
\begin{align*}
T_p \leq \sum_{e \in \mathcal{E}_p} \beta_e (x^\prime_e - y_e) = \sum_{e \in \mathcal{E}_p} \beta_e \rho_e
\end{align*}
Then the probability that $\mathbf{x}$ does not block $p$ is given as follows:


\begin{align}
&\textnormal{Pr}[p \textnormal{ is not blocked by } \mathbf{x}] = \textnormal{Pr}\Big[\sum_{e \in p} (\beta_e x_e + \alpha_e) < \mathtt{T} \Big] = \textnormal{Pr}\Big[ \sum_{e \in \mathcal{E}_p} \beta_e (x_e - \floor{x_e^\prime}) < T_p \Big] \\
& \quad \quad = \textnormal{Pr}\Big[\exp{\Big(-\sum_{e \in \mathcal{E}_p} \beta_e (x_e - \floor{x_e^\prime}) \Big) } > \exp{(-T_p)}\Big] \\
& \quad \quad \leq \exp(T_p) \cdot \mathbb{E}\Big[\exp\Big(-\sum_{e \in \mathcal{E}_p} \beta_e (x_e - \floor{x_e^\prime}) \Big) \Big] \label{equ:markov} \\
& \quad \quad = \exp(T_p) \cdot \prod_{e \in \mathcal{E}_p} \Big( \exp(-\beta_e) \cdot \eta\rho_e + (1 - \eta\rho_e) \Big) \\
& \quad \quad \leq \exp(T_p) \cdot \Bigg( 1 - \frac{\sum_{e \in \mathcal{E}_p} \eta \rho_e (1 - \exp(-\beta_e))}{|\mathcal{E}_p|} \Bigg)^{|\mathcal{E}_p|} \label{equ:cauchy} \\
& \quad \quad \leq \exp(T_p) \cdot \Bigg( 1 - \frac{1-\exp(-\beta)}{\beta} \cdot \frac{\eta T_p}{|\mathcal{E}_p|} \Bigg)^{|\mathcal{E}_p|} \\
& \quad \quad \leq \exp(T_p) \cdot \exp\Bigg(- \eta T_p \frac{1 - \exp(-\beta)}{\beta} \Bigg) \leq \exp\Bigg( -\bigg( \eta \frac{1-\exp(-\beta)}{\beta} - 1 \bigg) \Bigg)
\end{align}
Eq. \ref{equ:markov} comes from Markov inequality \cite{markovInequality} while Eq. \ref{equ:cauchy} is from Cauchy Theorem \cite{cauchyInequality}. Since there are at most $n^\mathtt{h}$ feasible paths in $\mathcal{F}$, using Union Bound theory \cite{UnionBound}, the probability that $\mathbf{x}$ cannot block all paths in $\mathcal{F}$ is at most
\begin{align} \label{equ:prob_x}
n^\mathtt{h} \cdot \exp\Bigg( -\bigg( \eta \frac{1-\exp(-\beta)}{\beta} - 1 \bigg) \Bigg)
\end{align}
\begin{theorem}
Given fixed $0 < \delta < 1$ and $\eta = \frac{\beta}{1-\exp(-\beta)} (\ln \frac{n^\mathtt{h}}{\delta} + 1)$, \texttt{LR} returns a solution within $O(\mathtt{h} \ln n)$ factor of optimum to the \texttt{QoSD} instance with probability at least $1-\delta$.
\end{theorem}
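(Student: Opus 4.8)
The plan is to prove two things separately and then combine them: that the rounded vector $\mathbf{x}$ is \emph{feasible} (i.e.\ it blocks every path in $\mathcal{F}$) with probability at least $1-\delta$, and that its cost $||\mathbf{x}||$ lies within an $O(\mathtt{h}\ln n)$ factor of $\mathtt{OPT}$. The feasibility half is already essentially assembled in the chain of inequalities ending at Eq.~\ref{equ:prob_x}, so I would start there and simply substitute the prescribed $\eta$.

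\textbf{Feasibility.} Plugging $\eta = \frac{\beta}{1-\exp(-\beta)}\big(\ln\frac{n^\mathtt{h}}{\delta}+1\big)$ into the exponent of Eq.~\ref{equ:prob_x} gives $\eta\,\frac{1-\exp(-\beta)}{\beta}-1 = \ln\frac{n^\mathtt{h}}{\delta}$, so the probability that at least one feasible path survives is at most $n^\mathtt{h}\exp\!\big(-\ln\frac{n^\mathtt{h}}{\delta}\big)=\delta$. Hence $\mathbf{x}$ is a feasible solution to the \texttt{QoSD} instance with probability at least $1-\delta$.

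\textbf{Cost.} Next I would bound the cost in expectation by checking, case by case in the rounding rule, that $\mathbb{E}[x_e]\le \eta\,x_e^\prime$ for every edge. First note $\eta\ge 1$, since $1-\exp(-\beta)\le\beta$ and $n^\mathtt{h}/\delta>1$. If $x_e^\prime$ is an integer, then $x_e=x_e^\prime\le\eta x_e^\prime$. In both fractional cases one gets $\mathbb{E}[x_e]\le\floor{x_e^\prime}+\eta\rho_e$ (deterministically $\floor{x_e^\prime}+1\le\floor{x_e^\prime}+\eta\rho_e$ when $\eta\rho_e\ge1$, and exactly $\floor{x_e^\prime}+\eta\rho_e$ when $\eta\rho_e<1$); using $\floor{x_e^\prime}\le\eta\floor{x_e^\prime}$ then yields $\floor{x_e^\prime}+\eta\rho_e\le\eta(\floor{x_e^\prime}+\rho_e)=\eta x_e^\prime$. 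Summing over edges, $\mathbb{E}[||\mathbf{x}||]\le\eta\,||\mathbf{x}^\prime||$, and since the LP relaxation lower-bounds the integer optimum, $||\mathbf{x}^\prime||\le\mathtt{OPT}$. Treating $\beta$ as a constant so that $\frac{\beta}{1-\exp(-\beta)}=O(1)$, and $\delta$ as fixed, we have $\eta=O\!\big(\ln\frac{n^\mathtt{h}}{\delta}\big)=O(\mathtt{h}\ln n)$, whence $\mathbb{E}[||\mathbf{x}||]=O(\mathtt{h}\ln n)\,\mathtt{OPT}$.

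\textbf{Combining (the main obstacle).} The delicate point is upgrading these two facts to a single ``with probability at least $1-\delta$'' guarantee, since the cost is controlled only in expectation and Markov's inequality gives merely $\Pr[||\mathbf{x}||>c\,\eta\,\mathtt{OPT}]<1/c$. I would reconcile this by splitting the failure budget: rerun the argument with $\delta/2$ in place of $\delta$ inside $\eta$ (which leaves $\eta=O(\mathtt{h}\ln n)$ unchanged up to constants) so that infeasibility has probability at most $\delta/2$, and apply Markov with $c=2/\delta$ to obtain $||\mathbf{x}||\le\frac{2}{\delta}\eta\,\mathtt{OPT}=O(\mathtt{h}\ln n)\,\mathtt{OPT}$ (a constant blow-up for fixed $\delta$) except with probability at most $\delta/2$; a union bound then yields both feasibility and the $O(\mathtt{h}\ln n)$ cost bound simultaneously with probability at least $1-\delta$. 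I expect this probabilistic bookkeeping, rather than any individual inequality, to be where the real care is needed, whereas the per-edge expectation computation and the feasibility substitution are routine.
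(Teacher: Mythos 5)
Your proof is correct, and its first two parts (feasibility and expected cost) coincide with the paper's own argument: the paper likewise substitutes the prescribed $\eta$ into the bound of Eq.~\ref{equ:prob_x} to get failure probability $\delta$, and computes $\mathbb{E}[||\mathbf{x}||]=\sum_e\big(\ceil{x_e^\prime}\eta\rho_e+\floor{x_e^\prime}(1-\eta\rho_e)\big)\leq\eta\,||\mathbf{x}^\prime||\leq\eta\,||\mathbf{x}^*||$, exactly as you do (your case analysis, including the $\eta\rho_e\geq 1$ branch and the observation that $\eta\geq 1$, is slightly more careful than the paper's one-line computation, which tacitly treats all fractional edges as if $\eta\rho_e<1$). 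Where you go beyond the paper is the combining step: the paper stops at the expectation bound and declares the proof complete, leaving the cost guarantee only in expectation while the theorem is phrased as a probability-$1-\delta$ statement. Your split of the failure budget into $\delta/2$ for infeasibility and $\delta/2$ for a Markov bound on the cost, followed by a union bound, is a legitimate (and standard) way to close that gap at the price of a $2/\delta$ constant factor, which is harmless for fixed $\delta$. So your proposal is not merely consistent with the paper's proof --- it repairs an imprecision in it.
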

\begin{proof}
From Eq. \ref{equ:prob_x} and the given $\eta$, the probability that $\mathbf{x}$ blocks all paths in $\mathcal{F}$ is at least $1-\delta$. Also
\begin{align*}
&\mathbb{E}[||\mathbf{x}||] = \sum_e \Big( \ceil{x_e^\prime} \eta \rho_e + \floor{x_e^\prime} (1 - \eta \rho_e) \Big) \\
& \quad \leq \sum_e \Big( \floor{x_e^\prime} + \eta \rho_e \Big) \leq \sum_e \eta x_e^\prime \\
& \quad = \eta \cdot ||\mathbf{x}^\prime|| \leq \eta \cdot ||\mathbf{x}^*||
\end{align*}
which completes the proof.
\end{proof}

\section{Discussion} \label{sec:discussion}
In this section, we discuss the trade-off between the performance guarantee and the runtime complexity of the four proposed algorithms, summarized in Table. \ref{table:performance}.

First, we consider the performance guarantee of the \texttt{IG} and \texttt{AT} algorithm. The approximation ratio of \texttt{IG} and \texttt{AT} are $O(\frac{1}{\gamma} (\mathtt{h} \ln n + \ln \mathtt{T}) )$ and $O(\mathtt{h} \ln n + \ln \mathtt{T})$ respectively, where $\gamma$ is the \textit{concave ratio} of edge weight functions. $\gamma$ plays an important role in the differences between \texttt{IG} and \texttt{AT} solutions. The smaller $\gamma$ is - which signifies a more convex of edge weight functions - the worse \texttt{IG} performs. But if all edge weight functions are concave - or at least linear - $\gamma$ equals to $1$, then \texttt{IG} and \texttt{AT} obtain the same approximation guarantee. Not only achieve the same ratio, the two algorithms also return the same solution because in \texttt{AT}, $\frac{\Delta_{u(e,x)} \mathtt{D}(\mathcal{P}, \mathbf{x})}{x}$ reaches maximum at $x = 1$. So in each iteration, the budget increases at most by $1$, and it is also the selection of \texttt{IG}. Overall, \texttt{AT} theoretically returns better solutions than \texttt{IG}. 



However, in trade-off, \texttt{AT} has higher computational complexity than \texttt{IG}. Both algorithms use the same framework as in Alg. \ref{alg:iterative}. The maximum number of  iterations in this framework (line \ref{line:outer_it} of Alg. \ref{alg:iterative}) is upper bounded by $n^\mathtt{h}$, which is theoretically a large number. However, from our experiments on both random graphs and real-but-dense networks, the number of iterations never reach this amount. Considering the strategy of blocking paths, the number of computation in each inner iteration (line \ref{line:inner_it_trunk} of Alg. \ref{alg:trunk_adding}) of \texttt{AT} is $O(||\mathtt{b}||)$, while this number (line \ref{line:inner_it_greedy} of Alg. \ref{alg:greedy_blocking}) in \texttt{IG} is $O(m)$. In the worst-case scenario, the number of inner iterations of both \texttt{IG} and \texttt{AT} can reach up to $O(||\mathtt{b}||)$. Therefore, the worst-case runtime complexity of \texttt{IG} and \texttt{AT} is $O(n^\mathtt{h} m ||\mathtt{b}||)$ and $O(n^\mathtt{h} ||\mathtt{b}||^2)$ respectively.

With \texttt{SA}, to obtain the $O(\frac{\ln \mathtt{T} + \mathtt{h} \ln d}{\gamma (1-e^{-\gamma}) (1-\epsilon)})$ ratio, we have to generate $\mathcal{N}(k,\epsilon,\delta/||\mathtt{b}||)$ paths with $O(m)$ time complexity for each path in each sampling steps. Also, after sampling, a budget vector $\mathbf{v}$ ($||\mathbf{v}|| \leq q$) is added into $\mathbf{x}$, which makes the number of sampling steps at most $O(\frac{||\mathtt{b}||}{q})$. Moreover, the greedy selection on a sample set costs $O(qm)$ runtime complexity. Therefore, the worse-case runtime complexity of \texttt{SA} is bounded by $O(\mathcal{N}(k,\epsilon, \frac{\delta}{||\mathtt{b}||}) \cdot ||\mathtt{b}|| \cdot m^2)$. However, if $\mathtt{h}$ is large, the number of samples required by \texttt{SA} becomes large and its sampling procedure dominates its runtime; this is ameliorated by trivially parallelizing the sampling process, which is possible since each sample is independent. In practice, the parameter $\alpha$ greatly reduces the required number of samples; with $\alpha = 0.8$, we found that $O(|S|)$ samples were sufficient to provide feasible solutions within reasonable runtime. 


\begin{table}[t]
\centering
\caption{Algorithm performance ratio and time complexity} \label{table:ratio}
\label{table:performance}
\begin{tabular}{c | l | l }
\toprule
Algorithm     & Approximation Ratio    & Worst-case Runtime \\ 
 \midrule
\texttt{IG}	 			& $O(\gamma^{-1} (\ln \mathtt{T} + \mathtt{h} \ln n))$ 	& $O(n^\mathtt{h} m ||\mathtt{b}||)$ \\
\texttt{AT}  		& $O(\ln \mathtt{T} + \mathtt{h} \ln n)$ 		& $O(n^\mathtt{h} ||\mathtt{b}||^2)$\\
\texttt{SA}	  		& $O(\frac{\ln \mathtt{T} + \mathtt{h} \ln \mathtt{d}}{\gamma (1-e^{-\gamma}) (1-\epsilon)})$ 		& $\mathcal{N}(q,\epsilon,\frac{\delta}{||\mathtt{b}||}) \cdot O(m^2||\mathtt{b}||)$  \\
\texttt{LR}		& $O(\mathtt{h} \ln n)$ 	& LP-solver$() + m$ \\
\bottomrule
\end{tabular}
\end{table}

Next, consider the \texttt{LR} solution, which is only used if all the weight functions are linear. The runtime of \texttt{LR} strongly depends on the linear programming solver (LP-solver$()$). In the experimental evaluation, we observe that in most cases, the number of edges - whose $x_e^\prime$ is real - is inconsiderably small. Therefore, after randomized rounding, the size of the discrete solution $\mathbf{x}$ has a diminutive gap comparing with $\mathbf{x}^\prime$'s. Hence, although \texttt{IG} and \texttt{AT} perform fairly well in general cases, \texttt{LR} usually returns the best solution if the edge weight functions are linear.

\section{Experimental Evaluation} \label{sec:experiment}
In this section, we evaluate our proposed approximation algorithms by 1) comparing their performance to an intuitive heuristic as there is no other solution to \texttt{QoSD}, in a general case; and 2) comparing our algorithms to \cite{kuhnle2018network} as a special case of \texttt{QoSD}. The experiments were conducted on a Linux machine with 2.3Ghz Xeon 18 core processor and 256GB of RAM. The programming language we used is C++. Several steps in our algorithm are parallelized by using OpenMP with 64 threads. The reported running time is real-world time, not CPU time. The source code is available at \cite{code}.

\subsection{Experiment Settings}
We evaluated the following algorithms; the source code of all of our implementation is written in C++.
\begin{itemize}
\item \texttt{AT}: In this solution, to find the shortest paths between a pair of nodes, we utilized the Dijkstra algorithm and computed each path separately. The reason for this implementation is that by doing so, we can parallelize the process by dividing it into independent tasks. Therefore, even the theoretical time complexity of the Dijkstra algorithm for all-pair shortest paths is worse than Floyd-Warshall methods, the parallelization helps to boost the performance of the Dijkstra algorithm while it is impossible to do so with Floyd-Warshall. 
\item \texttt{IG}: this algorithm used the same settings as \texttt{AT}.
\item \texttt{SA}: We set the bias ratio $\alpha = 0.8$ and the number of sample paths is $O(|S|)$ for all experiments. We found this value of $\alpha$ and the number of samples are sufficient to obtain feasible solutions within reasonable runtime in most cases. 
\item \texttt{LR}: We used CPLEX \cite{cplex2009v12} to solve the linear programming. Implementing the \textit{ellipsoid method} could result in impractical performance. So we used the same concept of \texttt{IG} and \texttt{AT} to solve the LP relaxation as follows: rather than listing all feasible paths in $\mathcal{F}$, we iteratively listed the shortest LP-weighted paths as constraints until the length of the shortest paths between each pair exceeded $\mathtt{T}$.  
\item Centrality Cutting ($\mathtt{CC}$) heuristic: Centrality has been commonly used as a metric to identify critical components of a network in the literature. \texttt{CC} works in iterative manner as follows: First, we set $\mathbf{x} = \{0\}^m$. In each iteration, we found the shortest paths between a pair of nodes under the current budget vector $\mathbf{x}$ and computed the number of appearances of each edge in those paths. The algorithm then raised the weight of the edge that appears the most to maximum. All those steps are repeated until there were no shortest paths whose length was smaller than $\mathtt{T}$. When finding shortest paths of each pair, we also used parallelization to boost $\mathtt{CC}$ performance.
\item \texttt{SAP}, \texttt{MIA}, \texttt{TAG} \cite{kuhnle2018network}: These algorithms were only implemented in comparison on the special case of \texttt{QoSD} (the \texttt{LB-MULTICUT} problem). The source code of those algorithms was taken from \cite{lbcode} and it was only available for undirected networks.
\end{itemize}

To obtain $S$, we sampled uniformly random sets of pairs of nodes on each network. All results were averaged over 5 independent repetitions of each experiment. The weight function of each edge was selected from following functions

\begin{itemize}
\item A linear function $f_e(x) = \Theta(x)$. 
\item A convex function $f_e(x) = \Theta(x^2)$. This function was inspired by the average delay calculation on computer networks w.r.t packet arrival rate.
\item A concave function $f_e(x) = \Theta(\ln x)$. This function was inspired by the additive metric on IoT network w.r.t packet error rate. 
\item A cutting function: $f_e$ only received two values, $f_e(0) = 1$ and $f_e(1) = \mathtt{T}$. This function was used when we compared our solution with the algorithms of the \texttt{LB-MULTICUT} problem.
\end{itemize}
Each function was set such that the initial weight $f_e(0)=1$ and the maximum weight was $\mathtt{T}$. Since there exists heterogeneous coupling delays in modern networks, in our experiment, the weight function of each edge was randomly selected from the linear, convex or concave functions as mentioned above. In the experiments with the presence of \texttt{LR}, all weight functions were linear. On the other hand, all weight functions were cutting function if compared with the algorithms of \texttt{LB-MULTICUT}. 


The algorithms were implemented on both synthesized networks and real-world networks. The synthesized networks we used were the Erdos-Renyi (ER) \cite{erdos1960evolution} graphs with $240$ nodes and varied the edge density parameter $\rho$. For the real-world networks, we used the datasets from Stanford Network Analysis Project \cite{snapnets}, including Gnutella, Skitter and Roadnet. Skitter is highly dense IPv4 Internet topology graph, which were collected by traceroutes run daily in 2005; Gnutella is the snapshots of peer-to-peer file sharing; and RoadCA is a road network of California where intersections and endpoints are represented by nodes, and the roads connecting these intersection or endpoints are represented by undirected edges. Information of real-world datasets are summarized in Table \ref{table:dataset}. 

\begin{table}[t]
\centering
\caption{Statistics of datasets} \label{table:dataset}
\begin{tabular}{l | l | r | r | r}
\toprule
     Data & Type    & Nodes & Edges & Diameter \\ 
 \midrule
Gnutella & Directed 	& $10.9$ K & $40.0$ K  & 9 \\
RoadCA & Undirected 	& $2.0$ M 	& $2.8$ M & 786\\
Skitter  & Directed 	& $1.7$ M 	& $11.1$ M & 25 \\
\bottomrule
\end{tabular}
\end{table}

\subsection{Performance comparison}

\subsubsection{Small size random graph}
\begin{figure}[t]
\subfloat[Linear]{
	\includegraphics[width=0.35\textwidth]{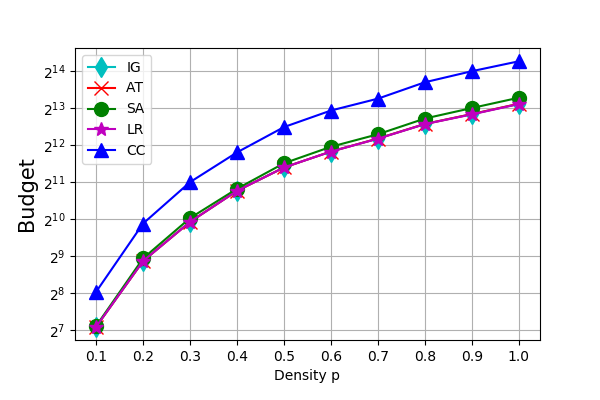}
	\label{fig:randDelay1_size}}
~
\hspace{-20px}
 \subfloat[Heterogeneous]{\includegraphics[width=0.35\textwidth]{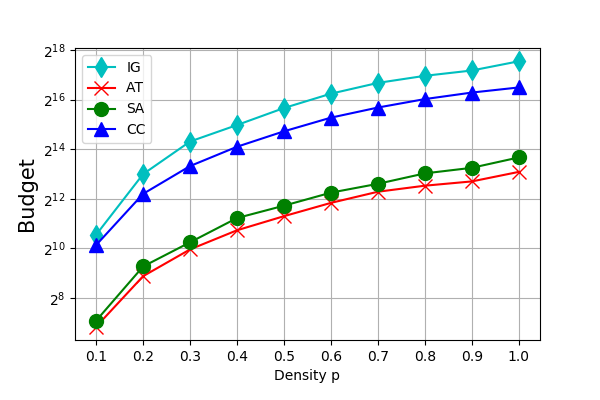}
\label{fig:randDelay2_size}}
~
\hspace{-20px}
\subfloat[Cutting]{\includegraphics[width=0.35\textwidth]{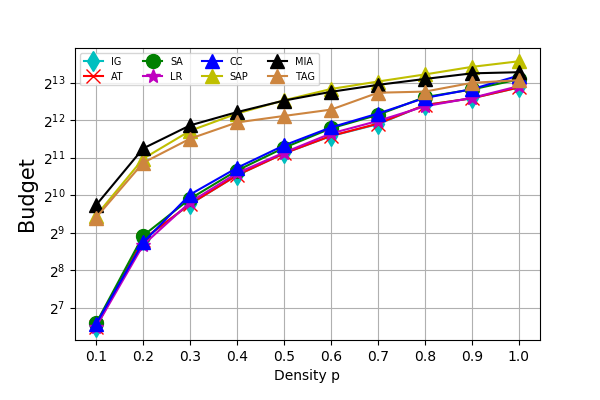}
\label{fig:randDelay5_size}}
    
    \caption{Solution quality of algorithms on Random graph}
 	\label{fig:random_size}
\end{figure}

\begin{figure}[t]
\subfloat[Linear]{
	\includegraphics[width=0.35\textwidth]{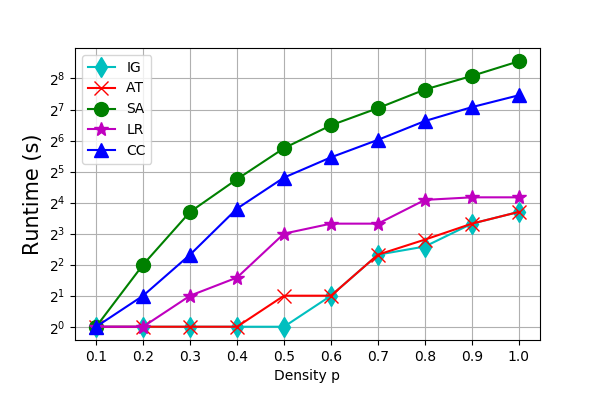}
	\label{fig:randDelay1_time}}
~
\hspace{-20px}
 \subfloat[Heterogeneous]{\includegraphics[width=0.35\textwidth]{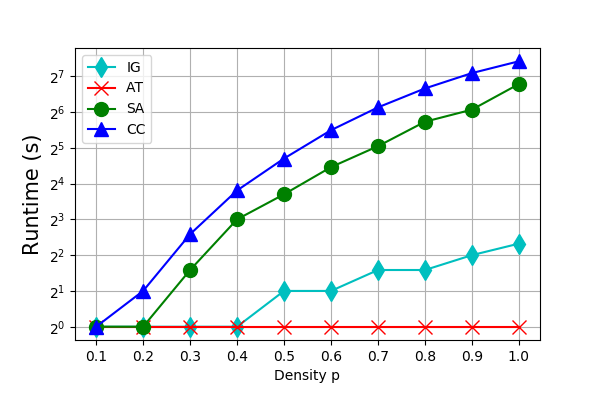}
\label{fig:randDelay2_time}}
~
\hspace{-20px}
\subfloat[Cutting]{\includegraphics[width=0.35\textwidth]{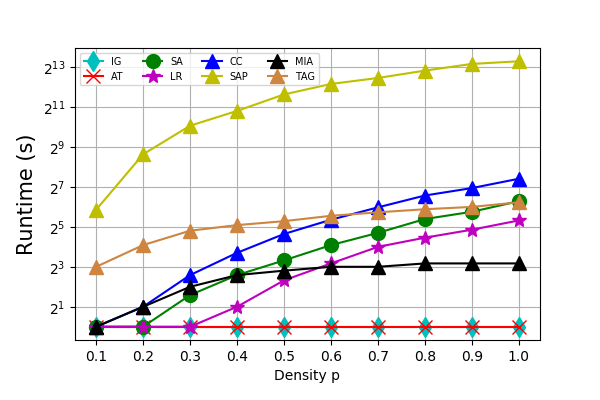}
\label{fig:randDelay5_time}}
    
    \caption{Runtime of algorithms on Random graph}
 	\label{fig:random_time}
\end{figure}

    

    

    

In these experiments, we compared our algorithms with the $\mathtt{CC}$ solution on directed ER networks with $n=240$ and we varied the edges density $\rho$. The threshold \texttt{T} was set to be 3 and the size of $S$ was 10.


Fig. (\ref{fig:randDelay1_size}) and Fig. (\ref{fig:randDelay1_time}) show the results and runtimes of the algorithms when edge weight functions are linear. We notice that our four algorithms performed almost similarly in terms of quality of solution and very close to the optimal solution of LP relaxation. Meanwhile, $\mathtt{CC}$ was far from being optimal when its solutions were always at least double to the solution of other algorithms. In terms of runtime, the ranking from best to worst was $\mathtt{IG}$, $\mathtt{AT}$, $\mathtt{LR}$, $\mathtt{CC}$ and $\mathtt{SA}$. $\mathtt{SA}$ performed worse especially when the edge density increased and approached to 1. This can be explained by the following: with high value of the bias parameter $\alpha$, most paths of the sample set $\mathcal{P}$ were the shortest paths of pairs in $S$. But because the edge density is high, there would be multiple paths between a pair of nodes whose length is smaller than $\mathtt{T}$, which makes the number of sampling iterations on $\mathtt{SA}$ increases. Therefore, $\mathtt{SA}$ had a high runtime on finding shortest paths and then sampling, which was the main factor degrading its runtime. We observe that at the smallest edge density ($0.1$), all algorithms performed the best on both quality of solution and runtime, which is promising since most of real-world networks are sparse \cite{kuhnle2018network}.

Next, we compared our algorithms in the scenario with heterogeneous weight functions. $\mathtt{LR}$ was no longer applicable, which explains why we did not plot $\mathtt{LR}$ in Fig. (\ref{fig:randDelay2_size}) and Fig. (\ref{fig:randDelay2_time}). Although having the same quality of solution in linear delay, $\mathtt{IG}$ performed much worse than $\mathtt{AT}$ when its sizes of solutions were always at least 20 times of $\mathtt{AT}$'s. The concave ratio $\gamma$ was 0 in this scenario because there existed a weight function whose value did not change by adding several cost units. This experiment clearly illustrated the impact of concave ratio $\gamma$ on the performance guarantee of $\mathtt{IG}$. Moreover, $\gamma$ also impacted on $\mathtt{IG}$'s runtime because the \texttt{IG} runtime is proportional to the solution size. From density $0.4$, the gap between $\mathtt{IG}$ and $\mathtt{AT}$'s runtime became distinguishable. 


Finally, we compared our algorithms with three methods proposed by Kuhnle et al. \cite{kuhnle2018network} for the \texttt{LB-MULTICUT} problem. The random graph was undirected and had 240 nodes. The size of solution and runtimes are reported in Fig. (\ref{fig:randDelay5_size}) and Fig. (\ref{fig:randDelay5_time}). All of our four algorithms returned the best results while \texttt{SAP}, \texttt{MIA} and \texttt{TAG} were even worse than \texttt{CC} in term of quality of solution. The gap between the final budgets of those three algorithms and our algorithms was significant with small $\rho$ and became smaller when $\rho$ increased. In terms of runtime, \texttt{SAP} and \texttt{TAG} performed the worst while \texttt{MIA} bypassed \texttt{SA} after $\rho = 0.3$ and \texttt{LR} after $\rho = 0.5$. \texttt{IG} and \texttt{AT} were the fastest by far. It took less than one second for these two algorithms to finish no matter the edge density.

\subsubsection{Results on real networks}

In this subsection, we evaluate our algorithms on the real-world networks. We mainly examined the effect of varying the threshold $\mathtt{T}$ on the algorithm performances. The number of pairs is set to be 100. We limited the runtime by a day (24 hours); any experiments, which ran longer than a day, were terminated.

\begin{figure}[t] 
\subfloat[Gnutella - Linear]{\includegraphics[width=0.35\textwidth]{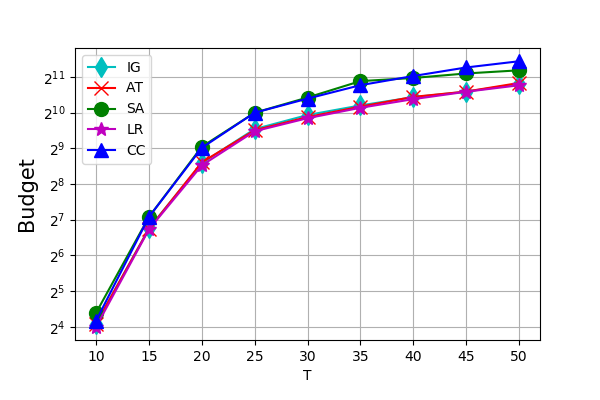}
\label{fig:gnutellaDelay1_size}}
~
\hspace{-20px}
\subfloat[Gnutella - Heterogeneous]{\includegraphics[width=0.35\textwidth]{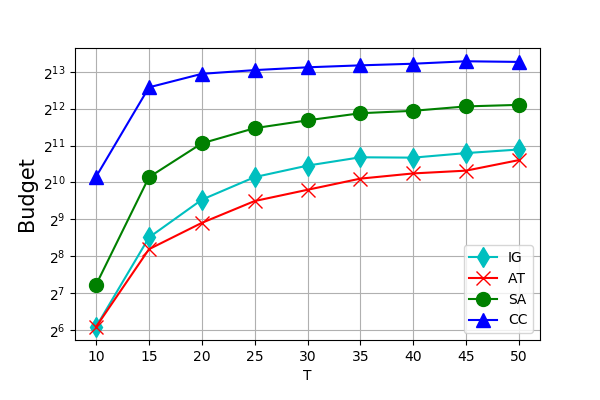}
\label{fig:gnutellaDelay4_size}}
~
\hspace{-20px}
\subfloat[Skitter - Heterogeneous]{\includegraphics[width=0.35\textwidth]{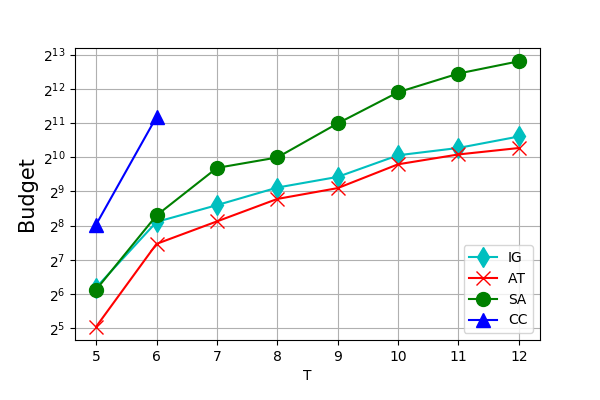}
\label{fig:skitterDelay4_size}}
    
    \caption{Solution quality of algorithms on Gnutella and Skitter}
 	\label{fig:gnuski_size}
\end{figure}

\begin{figure}[t] 
\subfloat[Gnutella - Linear]{\includegraphics[width=0.35\textwidth]{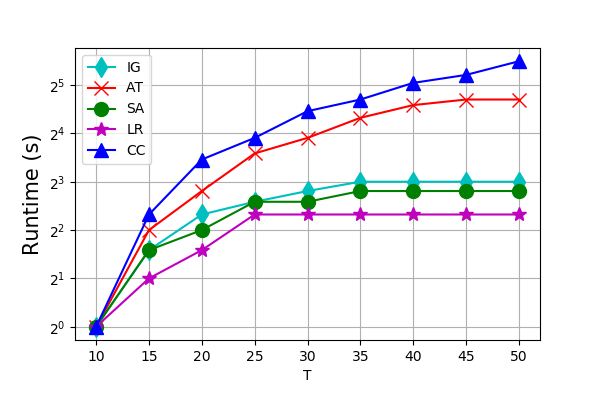}
\label{fig:gnutellaDelay1_time}}
~
\hspace{-20px}
\subfloat[Gnutella - Heterogeneous]{\includegraphics[width=0.35\textwidth]{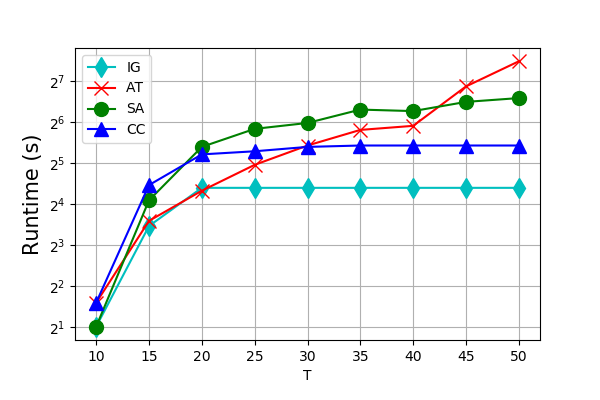}
\label{fig:gnutellaDelay4_time}}
~
\hspace{-20px}
\subfloat[Skitter - Heterogeneous]{\includegraphics[width=0.35\textwidth]{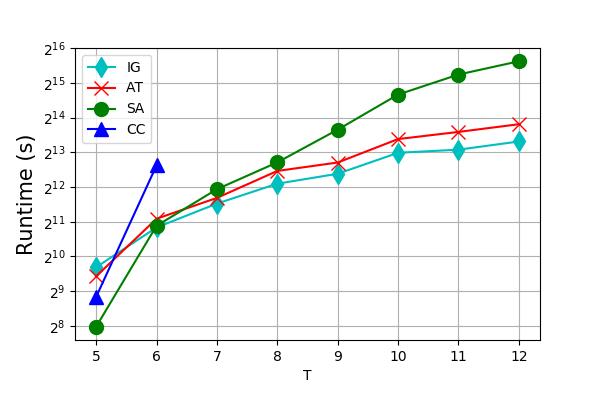}
\label{fig:skitterDelay4_time}}
    
    \caption{Runtime of algorithms on Gnutella and Skitter}
 	\label{fig:gnuski_time}
\end{figure}

    

    


First, we discuss the results on the smallest network, Gnutella, in which we let $\mathtt{T}$ vary from $10$ to $50$. The result and the runtime of each algorithm are shown in Fig. (\ref{fig:gnutellaDelay1_size}), (\ref{fig:gnutellaDelay4_size}), (\ref{fig:gnutellaDelay1_time}) and (\ref{fig:gnutellaDelay4_time}). When the weight functions were all linear, we observed the same pattern as in the random graph, where the quality of solution of $\mathtt{IG}$, $\mathtt{AT}$ and $\mathtt{LR}$ almost overlapped. Actually, $\mathtt{LR}$ always returned the best solution but the gap between \texttt{LR} and $\mathtt{AT}$ and $\mathtt{IG}$ was insignificant. Meanwhile, the sizes of the $\mathtt{SA}$'s solutions were always within 1.3 factor from $\mathtt{LR}$. However, in terms of runtime, $\mathtt{AT}$ performed the worst among our proposed algorithms while $\mathtt{LR}$ again was the best. The next best algorithm in terms of runtime was $\mathtt{SA}$, which stayed within 1.4 factor from $\mathtt{LR}$. Starting from $\mathtt{T} = 35$, the runtime of $\mathtt{LR}, \mathtt{IG}$ and $\mathtt{SA}$ almost stayed the same while the runtime of $\mathtt{AT}$ and $\mathtt{CC}$ kept increasing. 

However, it was a different matter in the experiments with heterogeneous weight functions, shown in Fig. (\ref{fig:gnutellaDelay4_size}) and Fig. (\ref{fig:gnutellaDelay4_time}). In terms of the quality of solution, the ranking from best to worst was $\mathtt{AT}$, $\mathtt{IG}$, $\mathtt{SA}$ and $\mathtt{CC}$. These algorithms were now more virtually distinguishable in solution quality. The sizes of solutions of $\mathtt{IG}$ could be up to 1.25 factor from $\mathtt{AT}$'s while this number of $\mathtt{SA}$ was 4. In terms of runtime, $\mathtt{CC}$ was no longer the worst algorithm. Starting from $\mathtt{T} = 45$, $\mathtt{AT}$ ran slower than $\mathtt{SA}$ and $\mathtt{CC}$. $\mathtt{IG}$ was by far the fastest among the algorithms.

    

    

    

Next, we experimented our algorithms on large scale networks. The Roadnet network contains 2 millions of nodes but only 2.8 millions of edges, which made it the sparest network among the datasets we used for experiment. First, we varied the value of $\mathtt{T}$ from 100 to 120 and plotted the results as in Fig. (\ref{fig:roadDelay1_size}) and Fig. (\ref{fig:roadDelay1_time}). $\mathtt{SA}$ performed much worse than $\mathtt{LR}$, $\mathtt{AT}$ and $\mathtt{IG}$. Its sizes of solutions were always at least 3 times greater than the others' and roughly near $\mathtt{CC}$ when $\mathtt{T}$ increased. Although $\mathtt{CC}$ always returned the worst solution, it was by far the fastest. The second best in terms of runtime was $\mathtt{IG}$ but it was always at least 15 times slower than $\mathtt{CC}$. This number in $\mathtt{LR}$ and $\mathtt{AT}$ were 30 and 60 respectively. With $\mathtt{T}=100$, $\mathtt{SA}$ was slightly faster than $\mathtt{LR}$ and $\mathtt{AT}$, therefore, we reduced the experimental range of $\mathtt{T}$ to $[70,100]$ on next experiment to observe the behaviors of $\mathtt{SA}$. Interestingly, $\mathtt{SA}$ performed much more faster than $\mathtt{AT}$ and $\mathtt{IG}$ in this range. 

In the experiment with heterogeneous weight functions, as can be seen in Fig. (\ref{fig:roadDelay4_size}) and Fig. (\ref{fig:roadDelay4_time}), $\mathtt{SA}$ ran up to 8 and 16 times faster than $\mathtt{IG}$ and $\mathtt{AT}$ respectively. However, $\mathtt{SA}$'s solutions were still worse than those two algorithms. From our observation, we found it hard to predict the behaviors of $\mathtt{SA}$ especially when the set $\mathcal{F}$ becomes larger. $\mathtt{SA}$ can perform well when this set is small and is very stable in a certain range of this set's size. But when it exceeds this range, $\mathtt{SA}$'s runtime increases at a higher rate than any other algorithms we have considered. 

Finally, we evaluated the algorithms on the cutting scenario and reported the results in Fig. (\ref{fig:roadDelay5_size}) and Fig. (\ref{fig:roadDelay5_time}). Up to $\mathtt{T} = 91$, \texttt{IG} and \texttt{AT} were the best in term of quality of solution but then were bypassed by \texttt{TAG}. In term of runtime, in most cases, our algorithms were 30 times slower than the fastest one, \texttt{SAP}.

\begin{figure}[t]
\subfloat[Linear]{\includegraphics[width=0.35\textwidth]{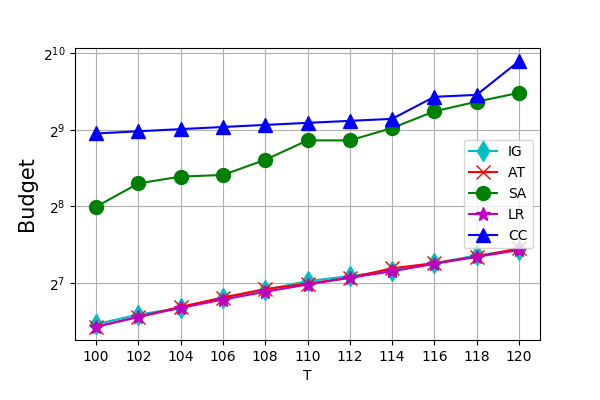}
\label{fig:roadDelay1_size}}
~
\hspace{-20px}
\subfloat[Heterogeneous]{\includegraphics[width=0.35\textwidth]{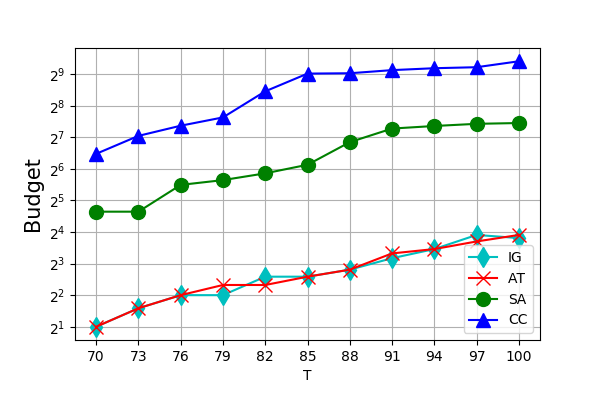}
\label{fig:roadDelay4_size}}
~
\hspace{-20px}
\subfloat[Cutting]{\includegraphics[width=0.35\textwidth]{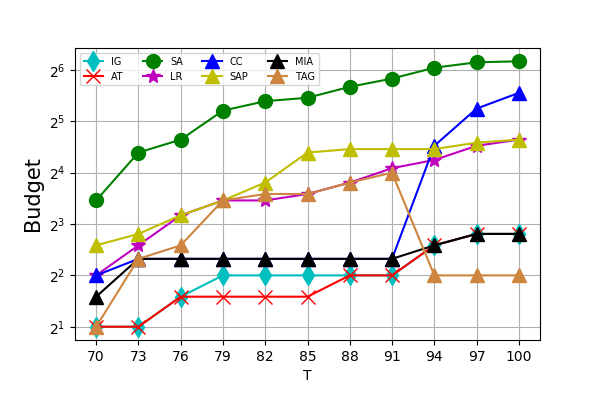}
\label{fig:roadDelay5_size}}
    
    \caption{Solution quality of algorithms on RoadnetCA}
 	\label{fig:road_size}
\end{figure}

\begin{figure}[t]
\subfloat[Linear]{\includegraphics[width=0.35\textwidth]{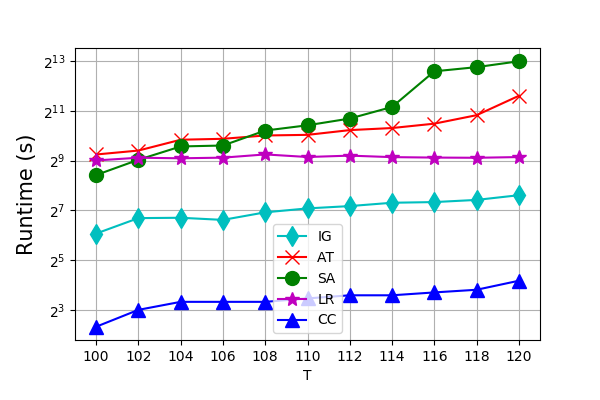}
\label{fig:roadDelay1_time}}
~
\hspace{-20px}
\subfloat[Heterogeneous]{\includegraphics[width=0.35\textwidth]{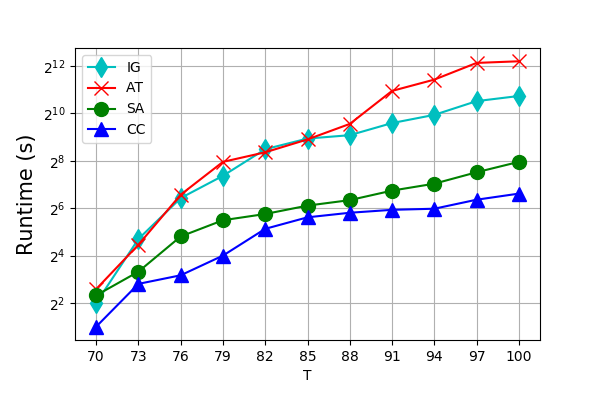}
\label{fig:roadDelay4_time}}
~
\hspace{-20px}
\subfloat[Cutting]{\includegraphics[width=0.35\textwidth]{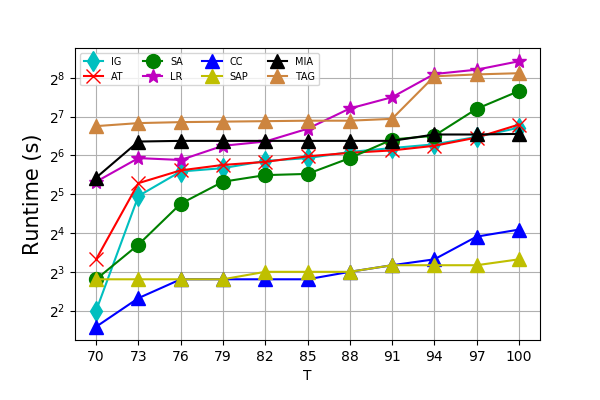}
\label{fig:roadDelay5_time}}
    
    \caption{Runtime of algorithms on RoadnetCA}
 	\label{fig:road_time}
\end{figure}


    

The last network we did experiments on was Skitter, which is a dense graph where the average degree of a node is 6.5. In this experiment, we varied $\mathtt{T}$ in the range from $5$ to $12$. Fig. (\ref{fig:skitterDelay4_size}) and Fig. (\ref{fig:skitterDelay4_time}) show the performance of our algorithms. $\mathtt{IG}$, $\mathtt{AT}$ and $\mathtt{SA}$ could finish within the limited runtime while $\mathtt{CC}$ was unable to run even at $\mathtt{T} = 7$, which is why we did not show $\mathtt{CC}$'s results from $\mathtt{T} = 7$ in those figures. Also, this experiment clearly shows the trade-off between $\mathtt{IG}$ and $\mathtt{AT}$. The solution of $\mathtt{IG}$ was up to 1.25 times of $\mathtt{AT}$ while running faster with almost the same factor.

\subsubsection{Summary of results} The experimental results can be summarized as follows.
\begin{itemize}
\item In case of linear weight functions, $\mathtt{LR}$ always returned the best solution. The solution quality of $\mathtt{IG}$ and $\mathtt{AT}$ were worse than $\mathtt{LR}$ but usually by only a small factor. In addition, $\mathtt{IG}$ usually ran the fastest while the runtime of $\mathtt{AT}$ was more impacted by the varying of $\mathtt{T}$ than the other two.
\item In general cases where $\mathtt{LR}$ is no more applicable, $\mathtt{AT}$ was always the algorithm that returned the best quality of solution. $\mathtt{IG}$ was competitive to $\mathtt{AT}$ only if the weight functions tent to be more concave. In trade-off, $\mathtt{IG}$ performed much more faster than $\mathtt{AT}$ in most experiments.
\item In most experiments, $\mathtt{SA}$ was the worst among our algorithms in term of both the quality of solution and runtime. However, in several cases when the set of feasible paths was small or the input network was sparse, $\mathtt{SA}$ outperformed our other algorithms in runtime and the intuitive heuristics in solution quality. 
\end{itemize}

\section{Conclusion} \label{sec:conclusion}
In this work, we have introduced a new \texttt{QoSD} problem together with four solutions \texttt{IG, AT, SA} and \texttt{LR}, each of which scales to networks with millions of edges and nodes in under several hours and has a proven performance guarantee. Future work would include lowering the number of samples required by \texttt{SA}, making it more scalable. In addition, bounding the size of a set of candidate paths on \texttt{IG} and \texttt{AT} is necessary to reduce the burden on memory and waste of works when the candidate set is undesirable, and considering the correlation in increasing the edges' weights. Following that, we will investigate more on QoS degradation assessment on interdependent networks where networks are intertwined and interdependent, making the task of devising efficient algorithms much more challenging.

\begin{acks}
The authors would like to thank the anonymous reviewers for their valuable comments and helpful suggestions. We would also like to show our gratitude to Dr. Figueiredo (UFRJ) for shepherding our paper. This work is supported in part by NSF EFRI-1441231, NSF CNS-1814614, and DTRA HDTRA1-14-1-0055.
\end{acks}

\bibliographystyle{ACM-Reference-Format}
\bibliography{main_pomacs}

\clearpage
\appendix
\onecolumn
\appendix
\section*{APPENDIX}
\section*{Proof of Lemma \ref{lemma:concave}}
Let $\mathbf{s} = \{s_1,..s_m\}$, since $\mathbf{s}$ is a unit vector, there is only one value among $s_1,..s_m$ is $1$ and the others are all $0$. By extending the Equ. \ref{equ:delta_g}, we have
\begin{align*}
\Delta_\mathbf{s} g(\mathcal{P}, \mathbf{x}) = \sum_{p \in \mathcal{P}} \beta_p \Big( \min(\mathtt{T}, \sum_{e \in p} f_e(x_e+s_e)) - \min(\mathtt{T}, \sum_{e \in p} f_e(x_e)) \Big)
\end{align*}
For each path $p$, we will prove that:
\begin{align*}
& \min(\mathtt{T}, \sum_{e \in p} f_e(x_e+s_e)) - \min(\mathtt{T}, \sum_{e \in p} f_e(x_e)) \\
& \quad \geq \gamma \cdot (\min(\mathtt{T}, \sum_{e \in p} f_e(y_e+s_e)) - \min(\mathtt{T}, \sum_{e \in p} f_e(y_e))) 
\end{align*}
We consider three cases:
\begin{itemize}
\item $\mathtt{T} \geq  \sum_{e \in p} f_e(x_e+s_e) \geq  \sum_{e \in p} f_e(x_e)$. Then we have
\begin{align*}
&\min(\mathtt{T}, \sum_{e \in p} f_e(x_e+s_e)) - \min(\mathtt{T}, \sum_{e \in p} f_e(x_e)) \\
& \quad \quad = \sum_{e \in p} f_e(x_e+s_e) - \sum_{e \in p} f_e(x_e) \\ 
&\quad \quad \geq \gamma \cdot \Big( \sum_{e \in p} f_e(y_e+s_e) - \sum_{e \in p} f_e(y_e) \Big) \\
& \quad \quad \geq \gamma \cdot \Big(\min(\mathtt{T}, \sum_{e \in p} f_e(y_e+s_e)) - \min(\mathtt{T}, \sum_{e \in p} f_e(y_e)) \Big) 
\end{align*}
\item $\sum_{e \in p} f_e(x_e+s_e) \geq \mathtt{T} \geq  \sum_{e \in p} f_e(x_e)$. In this case,
\begin{align*}
\sum_{e \in p} f_e(y_e + s_e) \geq \sum_{e \in p} f_e(x_e + s_e) \geq \mathtt{T}
\end{align*}
also
\begin{align*}
min(\mathtt{T}, \sum_{e\in p}f_e(y_e)) \in [\sum_{e \in p} f_e(x_e), \mathtt{T}]
\end{align*}
Therefore:
\begin{align*}
&\min(\mathtt{T}, \sum_{e \in p} f_e(y_e+s_e)) - \min(\mathtt{T}, \sum_{e \in p} f_e(y_e)) \\
& \quad \quad = \mathtt{T} - \min(\mathtt{T}, \sum_{e \in p} f_e(y_e)) \\
& \quad \quad \leq \mathtt{T} - \sum_{e \in p} f_e(x_e) \\
& \quad \quad = \min(\mathtt{T}, \sum_{e \in p} f_e(x_e+s_e)) - \min(\mathtt{T}, \sum_{e \in p} f_e(x_e))
\end{align*}
\item $\sum_{e \in p} f_e(x_e+s_e) \geq  \sum_{e \in p} f_e(x_e) \geq \mathtt{T}$. This case is trivial because both $\min(\mathtt{T}, \sum_{e \in p} f_e(y_e+s_e)) - \min(\mathtt{T}, \sum_{e \in p} f_e(y_e))$ and $\min(\mathtt{T}, \sum_{e \in p} f_e(x_e+s_e)) - \min(\mathtt{T}, \sum_{e \in p} f_e(x_e))$ are $0$.
\end{itemize}
Hence, $\Delta_\mathbf{s} g(\mathcal{P}, \mathbf{x}) \geq \gamma \Delta_\mathbf{s} g(\mathcal{P}, \mathbf{y})$, which completes the proof.

\section*{Proof of Theorem \ref{theorem:greedy_approx}}
Denote $\mathbf{x}^*$ is optimal solution to the \texttt{QoSD} instance. Define $\mathbf{x}_i$ as our obtained solution before the $i^\textnormal{th}$ iteration in Alg. \ref{alg:greedy_blocking}. Denote $\mathbf{x}^o_i$ as an optimal solution that is in additional to $\mathbf{x}_i$ to block all paths in $\mathcal{P}$. We have:

\begin{align} \label{equ:x0x}
||\mathbf{x}^*|| \geq ||\mathbf{x}^* / \mathbf{x}_i|| \geq ||\mathbf{x}^o_i||
\end{align}

Assume $\mathbf{x}^o_i = \sum_{i=1}^l \mathbf{u}_i$ where $\mathbf{u}_i$ is a unit vector. We have:
\begin{align}
&\mathtt{D}(\mathcal{P}, \mathbf{x}_i + \mathbf{x}^o_i) - \mathtt{D}(\mathcal{P}, \mathbf{x}_i) = \sum_{j=1}^l \Delta_{\mathbf{u}_j} \mathtt{D}(\mathcal{P}, \mathbf{x}_i + \sum_{z=1}^{j-1} \mathbf{u}_z) \\
& \quad \quad \leq \frac{1}{\gamma} \sum_{j=1}^l \Delta_{\mathbf{u}_j} \mathtt{D}(\mathcal{P}, \mathbf{x}_i) \quad \quad \quad (\textnormal{Lemma \ref{lemma:concave}})\\
& \quad \quad \leq \frac{||\mathbf{x}^o_i||}{\gamma} \max_\mathbf{s} \Delta_\mathbf{s} \mathtt{D}(\mathcal{P}, \mathbf{x}_i) \\
& \quad \quad \leq \frac{\mathtt{OPT}}{\gamma} (\mathtt{D}(\mathcal{P}, \mathbf{x}_{i+1}) - \mathtt{D}(\mathcal{P}, \mathbf{x}_i)) \label{equ:hmm}\\
& \quad \quad = \frac{\mathtt{OPT}}{\gamma} (|\mathcal{P}|\mathtt{T} - \mathtt{D}(\mathcal{P}, \mathbf{x}_i) - (|\mathcal{P}| \mathtt{T} - \mathtt{D}(\mathcal{P}, \mathbf{x}_{i+1})))
\end{align}

Equ. \ref{equ:hmm} follows by greedy selection. Since $\mathtt{D}(\mathcal{P}, \mathbf{x}_1 + \mathbf{x}^o_i) = |\mathcal{P}| \mathtt{T}$,
\begin{align*}
|\mathcal{P}|\mathtt{T} - \mathtt{D}(\mathcal{P}, \mathbf{x}_{i+1}) \leq (1-\frac{\gamma}{\mathtt{OPT}}) (|\mathcal{P}|\mathtt{T} - \mathtt{D}(\mathcal{P}, \mathbf{x}_i))
\end{align*}

Note that the Alg. \ref{alg:greedy_blocking} will terminate after $||\mathbf{x}||$ iterations. Therefore: 


\begin{align*}
&|\mathcal{P}| \mathtt{T} - \mathtt{D}(\mathcal{P}, \mathbf{x}_{||\mathbf{x}||}) \leq (1-\frac{\gamma}{\mathtt{OPT}}) (|\mathcal{P}| \mathtt{T} - \mathtt{D}(\mathcal{P}, \mathbf{x}_{||\mathbf{x}||-1})) \leq ... \\
& \quad \quad \leq (1-\frac{\gamma}{\mathtt{OPT}})^{||\mathbf{x}||} (|\mathcal{P}|\mathtt{T} - \mathtt{D}(\mathcal{P}, \{0\}^{|E|})) \\
& \quad \quad \leq (1-\frac{\gamma}{\mathtt{OPT}})^{||\mathbf{x}||} |\mathcal{P}| \mathtt{T}
\end{align*}

Since there should be at least a path $p \in \mathcal{P}$ whose overall delay is at most $\mathtt{T} - 1$ in final round, we have $|\mathcal{P}|\mathtt{T} - \mathtt{D}(\mathcal{P}, \mathbf{x}_l) \geq 1$. Therefore:

\begin{align*}
||\mathbf{x}|| \leq \frac{\ln |\mathcal{P}| \mathtt{T}}{\ln \frac{1}{1-\frac{\gamma}{\mathtt{OPT}}}} = \frac{\ln |\mathcal{P}| \mathtt{T}}{\ln (1 + \frac{\gamma/\mathtt{OPT}}{1 - \gamma/\mathtt{OPT}})}
\end{align*}

We have $\ln(1+x) \geq x - \frac{x^2}{2}$ for $x \in (0,1)$. So

\begin{align*}
||\mathbf{x}|| \leq \frac{\ln |\mathcal{P}| \mathtt{T}}{\frac{\gamma}{\mathtt{OPT}} (1 - \frac{\gamma}{2\mathtt{OPT}})} \leq \mathtt{OPT} \cdot O(\frac{\ln |\mathcal{P}| \mathtt{T}}{\gamma})
\end{align*}

And since $|\mathcal{P}| \leq n^\mathtt{h}$, \texttt{IG} obtains $O(\frac{1}{\gamma} (\mathtt{h}\ln n + \ln \mathtt{T}))$ approximation guarantee, which completes the proof.

\section*{Proof of Theorem \ref{theorem:trunk_approx}}
Denote $\mathbf{x}^* = \{x_1^*, ... x_m^*\}$ as optimal solution to the \texttt{QoSD} problem. Define $\mathbf{x}_i = \{x_1,...x_m\}$ is our obtained solution before the $i^\textnormal{th}$ iteration in Alg. \ref{alg:trunk_adding}. Denote $\mathbf{x}^o_i = \{x_1^o, ... x_m^o\}$ as an optimal solution in additional to $\mathbf{x}_i$ to block all paths in $\mathcal{P}$. We have:

\begin{align*}
||\mathbf{x}^*|| \geq ||\mathbf{x}^* / \mathbf{x}_i|| \geq ||\mathbf{x}^o_i||
\end{align*}

Denote $\mathbf{v}(e) = \{x_1,..x_{e-1},x_e + x_e^o,...x_m + x_m^o\}$. Trivially, $\mathbf{v}(1) = \mathbf{x}_i + \mathbf{x}^o$ and $\mathbf{v}(m+1) = \mathbf{x}_i$. Assume $\mathbf{u}(e_i,j_i)$ is the vector we would add into solution $\mathbf{x}_i$ in iteration $i^\textnormal{th}$. We have following lemma.

\begin{lemma}
For all $e \in E$, we have:
\begin{align*}
\frac{\Delta_{\mathbf{u}(e_i,j_i)} \mathtt{D}(\mathcal{P},\mathbf{x}_i)}{j_i} \geq \frac{\mathtt{D}(\mathcal{P},\mathbf{v}(e)) - \mathtt{D}(\mathcal{P},\mathbf{v}(e+1))}{x_e^o}
\end{align*} 
\end{lemma}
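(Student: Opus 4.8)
The plan is to prove the inequality by chaining it through an intermediate ratio that keeps edge $e$ fixed on both sides. First I would rewrite the right-hand side numerator as a single marginal gain. Since $\mathbf{v}(e)$ and $\mathbf{v}(e+1)$ differ only in the $e$-th coordinate, where $\mathbf{v}(e)$ carries $x_e + x_e^o$ while $\mathbf{v}(e+1)$ carries $x_e$, we have $\mathbf{v}(e) = \mathbf{v}(e+1) + \mathbf{u}(e, x_e^o)$, so $\mathtt{D}(\mathcal{P}, \mathbf{v}(e)) - \mathtt{D}(\mathcal{P}, \mathbf{v}(e+1)) = \Delta_{\mathbf{u}(e, x_e^o)} \mathtt{D}(\mathcal{P}, \mathbf{v}(e+1))$. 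Thus it suffices to show $\frac{\Delta_{\mathbf{u}(e_i, j_i)} \mathtt{D}(\mathcal{P}, \mathbf{x}_i)}{j_i} \geq \frac{\Delta_{\mathbf{u}(e, x_e^o)} \mathtt{D}(\mathcal{P}, \mathbf{v}(e+1))}{x_e^o}$. (For an edge with $x_e^o = 0$ the claim is vacuous, as both $\mathbf{v}(e) = \mathbf{v}(e+1)$ and the numerators vanish, so I restrict to $x_e^o > 0$.)

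I would split this into two inequalities connected through $\frac{\Delta_{\mathbf{u}(e, x_e^o)} \mathtt{D}(\mathcal{P}, \mathbf{x}_i)}{x_e^o}$. The first, $\frac{\Delta_{\mathbf{u}(e_i, j_i)} \mathtt{D}(\mathcal{P}, \mathbf{x}_i)}{j_i} \geq \frac{\Delta_{\mathbf{u}(e, x_e^o)} \mathtt{D}(\mathcal{P}, \mathbf{x}_i)}{x_e^o}$, is immediate from the greedy rule of \texttt{AT}: in iteration $i$ the algorithm selects the pair $(e_i, j_i)$ that maximizes $\frac{\Delta_{\mathbf{u}(e, z)} \mathtt{D}(\mathcal{P}, \mathbf{x}_i)}{z}$ over every edge and every admissible amount, and $(e, x_e^o)$ is an admissible competitor because $\mathbf{x}_i + \mathbf{x}^o_i \leq \mathtt{b}$ forces $x_e + x_e^o \leq b_e$. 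The second inequality I need is the diminishing-return statement $\Delta_{\mathbf{u}(e, x_e^o)} \mathtt{D}(\mathcal{P}, \mathbf{x}_i) \geq \Delta_{\mathbf{u}(e, x_e^o)} \mathtt{D}(\mathcal{P}, \mathbf{v}(e+1))$, comparing the same edge-$e$ increment at the smaller vector $\mathbf{x}_i$ and the larger vector $\mathbf{v}(e+1)$; note $\mathbf{x}_i \leq \mathbf{v}(e+1)$ since $\mathbf{v}(e+1)$ only adds the nonnegative amounts $x_j^o$ on coordinates $j > e$.

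The crux is this second inequality, and the decisive observation is that $\mathbf{x}_i$ and $\mathbf{v}(e+1)$ carry the \emph{identical} value $x_e$ on coordinate $e$. I would prove it path by path over $\mathcal{P}$. Paths not containing $e$ contribute zero to both marginal gains. For a path $p$ with $e \in p$, write its uncapped length as $L(\cdot) = \sum_{e' \in p} f_{e'}(\cdot)$ and set $a = L(\mathbf{x}_i)$, $b = L(\mathbf{v}(e+1))$, so $a \leq b$ by monotonicity of the $f_{e'}$. Because the two vectors agree on coordinate $e$, adding $\mathbf{u}(e, x_e^o)$ raises $L$ by the same amount $\delta = f_e(x_e + x_e^o) - f_e(x_e) \geq 0$ in both cases. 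The per-path gain of $\mathtt{r}(p, \cdot) = \min(\mathtt{T}, L(\cdot))$ is therefore $\min(\mathtt{T}, a + \delta) - \min(\mathtt{T}, a)$ versus $\min(\mathtt{T}, b + \delta) - \min(\mathtt{T}, b)$, and the desired $\geq$ is exactly the concavity of the cap $t \mapsto \min(\mathtt{T}, t)$, whose increments are non-increasing in the base point. A short three-case split on where $a, a+\delta, b, b+\delta$ sit relative to $\mathtt{T}$ — in the same spirit as the appendix proof of Lemma \ref{lemma:concave} — settles it, and summing over $p \in \mathcal{P}$ gives the inequality for $\mathtt{D}$.

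I expect the main obstacle to be conceptual rather than computational: reconciling this clean diminishing-return bound with the fact that $\mathtt{D}$ is in general neither submodular nor supermodular. The resolution I would emphasize is that the potential non-submodularity lives entirely in the shape of the edge function $f_e$ across \emph{different} values of $x_e$; here that shape never enters, because both base vectors share the same $x_e$ and receive the same increment $\delta$. Only the capping $\min(\mathtt{T}, \cdot)$ acts, and it is concave. This is precisely why the resulting bound — and hence \texttt{AT}'s guarantee — carries no factor of $\gamma$, in contrast to Lemma \ref{lemma:concave}. Combining the two inequalities gives $\frac{\Delta_{\mathbf{u}(e_i, j_i)} \mathtt{D}(\mathcal{P}, \mathbf{x}_i)}{j_i} \geq \frac{\Delta_{\mathbf{u}(e, x_e^o)} \mathtt{D}(\mathcal{P}, \mathbf{v}(e+1))}{x_e^o} = \frac{\mathtt{D}(\mathcal{P}, \mathbf{v}(e)) - \mathtt{D}(\mathcal{P}, \mathbf{v}(e+1))}{x_e^o}$, which is the claim.
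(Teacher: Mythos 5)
Your proposal is correct and follows essentially the same route as the paper's proof: both reduce the claim to the diminishing-return inequality $\Delta_{\mathbf{u}(e,x_e^o)}\mathtt{D}(\mathcal{P},\mathbf{x}_i) \geq \mathtt{D}(\mathcal{P},\mathbf{v}(e)) - \mathtt{D}(\mathcal{P},\mathbf{v}(e+1))$, proved path-by-path by noting that the two base vectors agree on coordinate $e$ so the uncapped increment $\delta = f_e(x_e+x_e^o)-f_e(x_e)$ is identical and only the concavity of $\min(\mathtt{T},\cdot)$ matters, and then both invoke the greedy selection rule of \texttt{AT}. Your explicit remark that no factor of $\gamma$ enters (because the increments coincide exactly) is a useful sharpening of the paper's looser phrase ``using the similar proof as Lemma \ref{lemma:concave}'', and your handling of the degenerate case $x_e^o = 0$ is a minor point the paper omits.
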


\begin{proof}
Denote $\mathbf{w}(e) = \{x_1,...x_{e-1}, x_e + x_e^o, x_{e+1}, ... x_m\}$. Consider a single path $p \in \mathcal{P}$, denote
\begin{align*}
&h(p,s) =  \sum_{e \in p \& e < s} f_e(x_e) + \sum_{e \in p \& e \geq s} f_e(x_e + x_e^o) \\
&g(p,s) = \sum_{e \in p \& e \neq s} f_e(x_e) + f_e(x_s + x_s^o)
\end{align*}
then we have:
\begin{align*}
&\mathtt{r}(p,\mathbf{v}(s)) - \mathtt{r}(p,\mathbf{v}(s+1)) = \min(\mathtt{T}, h(p,s)) - \min(\mathtt{T}, h(p,s+1)) \\
&\mathtt{r}(p,\mathbf{w}(s)) - \mathtt{r}(p,\mathbf{x}_i) = \min(\mathtt{T}, g(p,s)) - \min(\mathtt{T}, \sum_{e \in p} f_e(x_e))
\end{align*}

Trivially, we have that:
\begin{align*}
&h(p,s) - h(p, s+1) = g(p,s) - \sum_{e\in p} f_e(x_e) = f_{s}(x_{s} + x_{s}^o) - f_{s}(x_{s})
\end{align*}
and due to monotonicity of $\mathtt{r}(p, \mathbf{x})$
\begin{align*}
&h(p,s) \geq g(p,s) \\
&h(p,s+1) \geq \sum_{e \in p} f_e(x_e)
\end{align*}
Therefore, using the similar proof as lemma \ref{lemma:concave}, we have:
\begin{align*}
&\mathtt{D}(\mathcal{P},\mathbf{v}(e)) - \mathtt{D}(\mathcal{P},\mathbf{v}(e+1)) \leq \mathtt{D}(\mathcal{P},\mathbf{w}(e)) - \mathtt{D}(\mathcal{P}, \mathbf{x}_i) \\
& \quad \quad \quad = \Delta_{\mathbf{u}(e, x_{e}^o)} \mathtt{D}(\mathcal{P}, \mathbf{x}_i)
\end{align*}
Due to \texttt{AT} selection, we have that:
\begin{align*}
\frac{\Delta_{\mathbf{u}(e, x_{e}^o)} \mathtt{D}(\mathcal{P}, \mathbf{x}_i)}{x_{e}^o} \leq \frac{\Delta_{\mathbf{u}(e_i,j_i)} \mathtt{D}(\mathcal{P},\mathbf{x}_i)}{j_i}
\end{align*}
in which the lemma follows.
\end{proof}

Now, we will find the approximation guarantee of \texttt{AT} solution. We have:
\begin{align*}
&\mathtt{D}(\mathcal{P}, \mathbf{x}_i + \mathbf{x}^o_i) - \mathtt{D}(\mathcal{P}, \mathbf{x}_i) = \sum_{e} (\mathtt{D}(\mathcal{P}, \mathbf{v}(e)) - \mathtt{D}(\mathcal{P}, \mathbf{v}(e+1))) \\
& \quad \quad \quad \leq \sum_e \frac{x_e^o}{j_i} \Delta_{\mathbf{u}(e_i,j_i)} \mathtt{D}(\mathcal{P},\mathbf{x}_i) \\
& \quad \quad \quad \leq \frac{\mathtt{OPT}}{j_i} (\mathtt{D}(\mathcal{P}, \mathbf{x}_{i+1}) - \mathtt{D}(\mathcal{P}, \mathbf{x}_i))
\end{align*}
Since $\mathtt{D}(\mathcal{P}, \mathbf{x}_i + \mathbf{x}_i^o) = |\mathcal{P}| \mathtt{T}$, we have
\begin{align*}
|\mathcal{P}|\mathtt{T} - \mathtt{D}(\mathcal{P}, \mathbf{x}_{i+1}) \leq (1 - \frac{j_i}{\mathtt{OPT}}) (|\mathcal{P}| \mathtt{T} - \mathtt{D}(\mathcal{P}, \mathbf{x}_i))
\end{align*}
Assume \texttt{AT} stops after $l$ iterations, we have
\begin{align}
&|\mathcal{P}|\mathtt{T} - \mathtt{D}(\mathcal{P}, \mathbf{x}_{l}) \leq \prod_{i=1}^l (1-\frac{j_i}{\mathtt{OPT}}) (|\mathcal{P}|\mathtt{T} - \mathtt{D}(\mathcal{P}, \{0\})) \\
&\quad \quad \leq \Big( 1 - \frac{\sum_{i=1}^l j_i}{l \cdot \mathtt{OPT} } \Big)^l (|\mathcal{P}|\mathtt{T} - \mathtt{D}(\mathcal{P}, \{0\})) \label{equ:cauchy_ta} \\
& \quad \quad \leq e^{-\frac{||\mathbf{x}||}{\mathtt{OPT}}} (|\mathcal{P}|\mathtt{T} - \mathtt{D}(\mathcal{P}, \{0\})) \label{equ:e_ta}
\end{align}
Equ. \ref{equ:cauchy_ta} comes from the following Cauchy theorem
\begin{theorem} \label{theorem:cauchy}
(\textnormal{Cauchy Theorem} \cite{cauchyInequality}) Given $n$ non-negative numbers $x_1,...x_n$, we have
\begin{align*}
\prod_{i=1}^n x_i \leq (\frac{\sum_{i=1}^n x_i}{n})^n
\end{align*}
\end{theorem}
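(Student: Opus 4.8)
The plan is to prove this as the classical arithmetic-mean--geometric-mean inequality, deriving it from the concavity of the logarithm via the tangent-line (normalization) trick, which keeps the argument short and self-contained.

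First I would dispose of the degenerate cases. If any $x_i = 0$, the left-hand side $\prod_{i=1}^n x_i$ equals $0$ while the right-hand side is non-negative, so the inequality holds trivially; and if every $x_i = 0$ both sides vanish. Hence I may assume all $x_i > 0$, so that the arithmetic mean $A = \frac{1}{n}\sum_{i=1}^n x_i$ is strictly positive and all logarithms below are applied to positive arguments.

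The key step is the elementary bound $\ln t \le t - 1$, valid for every $t > 0$ with equality exactly at $t = 1$; this is precisely the statement that $\ln$ lies below its tangent line at $1$, and follows from concavity of $\ln$. Applying it to $t = x_i / A$ and summing over $i$ gives
\begin{align*}
\sum_{i=1}^n \ln\frac{x_i}{A} \le \sum_{i=1}^n \left(\frac{x_i}{A} - 1\right) = \frac{1}{A}\sum_{i=1}^n x_i - n = n - n = 0,
\end{align*}
where the penultimate equality uses $\sum_{i=1}^n x_i = nA$. Rewriting the left side as $\ln\left(\prod_{i=1}^n x_i / A^n\right)$ and exponentiating (using monotonicity of $\exp$) yields $\prod_{i=1}^n x_i \le A^n = \left(\frac{\sum_{i=1}^n x_i}{n}\right)^n$, which is the claim.

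I do not expect a serious obstacle, since this is a standard inequality; the only points requiring care are the boundary case where some $x_i = 0$ (so that division by $A$ and taking logarithms are never performed on a zero argument) and the justification of $\ln t \le t - 1$, which I would either invoke as concavity of $\ln$ or verify by noting that $g(t) = t - 1 - \ln t$ has $g'(t) = 1 - 1/t$, giving a unique minimum at $t = 1$ with $g(1) = 0$. Should a logarithm-free argument be preferred, an alternative route is Cauchy's forward--backward induction: prove the inequality for $n$ a power of two by repeated pairing, then descend to arbitrary $n$ by padding the list with the arithmetic mean of the given values. I would fall back on this only if the surrounding development wished to avoid transcendental functions.
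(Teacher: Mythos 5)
Your proof is correct. Note, however, that the paper does not prove this statement at all: it is the classical arithmetic-mean--geometric-mean inequality, invoked as a known result with a citation, so there is no in-paper argument to compare against. Your tangent-line derivation from $\ln t \le t-1$ is a standard, complete, and self-contained proof; the handling of the case where some $x_i=0$ is appropriate, and the normalization by the arithmetic mean $A$ correctly reduces the claim to $\sum_i \ln(x_i/A)\le 0$. The alternative forward--backward induction you mention is in fact Cauchy's original argument and is the historical reason the paper's citation attaches Cauchy's name to the inequality, so either route would be a faithful substitute for the missing proof.
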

Equ. \ref{equ:e_ta} comes from observation that $(1-\frac{x}{n})^n \leq e^{-x}$.

Therefore, $||\mathbf{x}||_1 \leq \mathtt{OPT} \ln |\mathcal{P}|\mathtt{T}$. Since $|\mathcal{P}|$ is bounded by $n^\mathtt{h}$, \texttt{AT} obtains $O(\mathtt{h} \log n + \log \mathtt{T})$ approximation guarantee.

\section*{Proof of Lemma \ref{lem:greedy_sampling}}
Denote $\mathbf{v}_i$ as the budget vector $\mathbf{v}$ after greedily selecting first $i$ unit vectors, then by monotonicity $\hat{B}(\mathcal{P}, \mathbf{x} + \mathbf{v}^o) \leq \hat{D}(\mathcal{P}, \mathbf{v} + \mathbf{v}^o + \mathbf{v}_i)$. We have
\begin{align}
&\hat{B}(\mathcal{P}, \mathbf{x} + \mathbf{v}^o) \leq \hat{B}(\mathcal{P},\mathbf{x} + \mathbf{v}^o + \mathbf{v}_i) \\
& \quad = \hat{B}(\mathcal{P},\mathbf{x} + \mathbf{v}_i) + \sum_{j=1}^q \Delta_{\mathbf{u}_j} \hat{B}(\mathcal{P},\mathbf{x} + \mathbf{v}_i + \sum_{i=1}^{j-1} \mathbf{u}_i) \\ 
& \quad \leq \hat{B}(\mathcal{P},\mathbf{x} + \mathbf{v}_i) + \frac{1}{\gamma} \sum_{j=1}^q \Delta_{\mathbf{u}_j}\hat{B}(\mathcal{P},\mathbf{x} + \mathbf{v}_i) \\
& \quad \leq \hat{B}(\mathcal{P},\mathbf{x} + \mathbf{v}_i) + \frac{q}{\gamma} (\hat{B}(\mathcal{P},\mathbf{x} + \mathbf{v}_{i+1}) - \hat{B}(\mathcal{P},\mathbf{x} + \mathbf{v}_i)) \label{equ:greedy_select}
\end{align}
The inequality (\ref{equ:greedy_select}) is due to greedy selection. Therefore,
\begin{align*}
&\hat{B}(\mathcal{P},\mathbf{x} + \mathbf{v}_{i+1}) - \hat{B}(\mathcal{P},\mathbf{x} + \mathbf{v}_i) \geq \frac{\gamma}{q}(\hat{B}(\mathcal{P},\mathbf{x} + \mathbf{v}^o) - \hat{B}(\mathcal{P},\mathbf{x} + \mathbf{v}_i)) 
\end{align*}
Which also means
\begin{align*}
&\hat{B}(\mathcal{P},\mathbf{x} + \mathbf{v}^o) - \hat{B}(\mathcal{P},\mathbf{x} + \mathbf{v}_{i+1}) \\
& \quad \leq (1-\frac{\gamma}{q}) (\hat{B}(\mathcal{P},\mathbf{x} + \mathbf{v}^o) - \hat{B}(\mathcal{P},\mathbf{x} + \mathbf{v}_{i}))
\end{align*}
Therefore
\begin{align*}
\hat{B}(\mathcal{P},\mathbf{x} + \mathbf{v}^o) - \hat{B}(\mathcal{P},\mathbf{x} + \mathbf{v}) \leq (1-\frac{\gamma}{q})^q (\hat{B}(\mathcal{P},\mathbf{x} + \mathbf{v}^o) - \hat{B}(\mathcal{P},\mathbf{x}))
\end{align*}
So
\begin{align*}
&\Delta_{\mathbf{v}}\hat{B}(\mathcal{P},\mathbf{x}) \geq (1-(1-\frac{\gamma}{q})^q) \Delta_{\mathbf{v}^o}\hat{B}(\mathcal{P},\mathbf{x}) \\
&\quad \quad \geq (1-e^{-\gamma}) \Delta_{\mathbf{v}^o}\hat{B}(\mathcal{P},\mathbf{x})
\end{align*}
which completes the proof.

\section*{Proof of Theorem \ref{theorem:sampling_approx}}
First, considering the greedy selection $\mathbf{v}$ in each sampling iteration, from lemma \ref{cor:greedy_sa}, we have
\begin{align*}
\Delta_\mathbf{v} B(\mathbf{x}) \geq (1-e^{-\gamma})(1-\epsilon) \Delta_{\mathbf{v}^*} B(\mathbf{x})
\end{align*}
with probability at least $1-\delta/||\mathtt{b}||$.

Denote $\mathbf{x}^o = \sum_i^s \mathbf{u}_i$ as an optimal solution, which is additional to $\mathbf{x}$, can block all paths in $\mathcal{F}$ ($\mathbf{u}_i$ is a unit vector). Let split $\mathbf{x}^o$ into $l = \ceil{\frac{||\mathbf{x}^o||}{q}}$ parts $L_1,...L_l$ where $L_i = \sum_{j=(i-1)q+1}^{iq} \mathbf{u}_j$, we have: 

\begin{align*}
&\Delta_{\mathbf{x}^o} B(\mathbf{x}) = \sum_{j=1}^{l} \Delta_{L_j} B(\mathbf{x} + L_1 + ... + L_{j-1}) \leq \frac{1}{\gamma} \sum_{j=1}^{l} \Delta_{L_j} B(\mathbf{x})
\end{align*}
Therefore, there should be at least a value $\Delta_{L_j} B(\mathbf{x}) \geq \frac{\gamma}{l} \Delta_{\mathbf{x}^o} B(\mathbf{x})$. And since $||L_j|| \leq q$, we have:

\begin{align*}
\Delta_{\mathbf{v}} B(\mathbf{x}) \geq \frac{\gamma}{l} (1-e^{-\gamma}) (1-\epsilon) \Delta_{\mathbf{x}^o} B(\mathbf{x}) 
\end{align*}
which also means
\begin{align*}
|\mathcal{F}|\mathtt{T} - B(\mathbf{x}+\mathbf{v}) \leq (1 - \frac{q\gamma}{\mathtt{OPT}} (1-e^{-\gamma})(1-\epsilon)) (|\mathcal{F}|\mathtt{T} - B(\mathbf{x}))
\end{align*}

Now, denote $\mathbf{x}_i$ as our solution after the $i^\textnormal{th}$ iteration of Alg. \ref{alg:sampling_solution}. We have

\begin{align*}
|\mathcal{F}|\mathtt{T} - B(\mathbf{x}_{i+1}) \leq (1 - \frac{q\gamma}{\mathtt{OPT}} (1-e^{-\gamma})(1-\epsilon)) (|\mathcal{F}|\mathtt{T} - B(\mathbf{x}_i))
\end{align*}

Assume the algorithm terminates after $g$ iterations, we have:
\begin{align*}
|\mathcal{F}|\mathtt{T} - B(\mathbf{x}_g) &\leq (1 - \frac{q\gamma}{\mathtt{OPT}} (1-e^{-\gamma})(1-\epsilon)) (|\mathcal{F}|\mathtt{T} - B(\mathbf{x}_{g-1})) \leq ... \\
& \leq \Big(1 - \frac{q\gamma}{\mathtt{OPT}} (1-e^{-\gamma})(1-\epsilon)\Big)^g (|\mathcal{F}|\mathtt{T} - B(\{0\}^m)) 
\end{align*}
Each inequality happens with probability at least $1- \frac{\delta}{||\mathtt{b}||}$. So the probability such that $|\mathcal{F}|\mathtt{T} - B(\mathbf{x}_g) \leq \Big(1 - \frac{q\gamma}{\mathtt{OPT}} (1-e^{-\gamma})(1-\epsilon)\Big)^g (|\mathcal{F}|\mathtt{T} - B(\{0\}^m))$ is at least $1 - \frac{\delta g}{||\mathtt{b}||} \geq 1 - \delta$. Moreover, since in the $g^\textnormal{th}$ iteration, there should exist a path $p \in \mathcal{F}$, whose length smaller than $\mathtt{T}$. So, the maximum length of $p$ is $\mathtt{T}-1$. Therefore

\begin{align*}
1 \leq (1 - \frac{q\gamma}{\mathtt{OPT}} (1-e^{-\gamma})(1-\epsilon))^g |\mathcal{F}|\mathtt{T}
\end{align*}

So $g \leq O(\frac{\ln \mathtt{T} + h \ln d}{q \gamma (1-e^{-\gamma})(1-\epsilon)}) \mathtt{OPT}$. Since in each iteration, a budget vector $\mathbf{v}$, $||\mathbf{v}|| \leq q$ is added into solution, out final solution guarantees $O(\frac{\ln \mathtt{T} + h \ln d}{\gamma (1-e^{-\gamma})(1-\epsilon)})$ approximation ratio with probability at least $1-\delta$.

\end{document}